\theoremstyle{plain}
\newtheorem{theorem}{Theorem}
\newtheorem{lemma}[theorem]{Lemma}
\newtheorem{corollary}[theorem]{Corollary}
\newtheorem{property}[theorem]{Property}
\theoremstyle{definition}
\newtheorem{question}{Question} 
\newcounter{casecounter}
\newcounter{subcasecounter}
\newcounter{subsubcasecounter}
\newcommand{\ccase}[1]{\stepcounter{casecounter}\setcounter{subcasecounter}{0}\protected@write \@auxout {}{\string \newlabel {#1}{{\thecasecounter}{\thepage}{\thecasecounter}{#1}{}} }\hypertarget{#1}{\noindent\textbf{Case \thecasecounter.}}
}
\newcommand{\subcase}[1]{\stepcounter{subcasecounter}\setcounter{subsubcasecounter}{0}\protected@write \@auxout {}{\string \newlabel {#1}{{\thecasecounter.\thesubcasecounter}{\thepage}{\thecasecounter.\thesubcasecounter}{#1}{}} }\hypertarget{#1}{\noindent\textbf{Case \thecasecounter.\thesubcasecounter.}}
}
\newcommand{\subsubcase}[1]{\stepcounter{subsubcasecounter}\protected@write \@auxout {}{\string \newlabel {#1}{{\thecasecounter.\thesubcasecounter.\thesubsubcasecounter}{\thepage}{\thecasecounter.\thesubcasecounter.\thesubsubcasecounter}{#1}{}} }\hypertarget{#1}{\noindent\textbf{Case \thecasecounter.\thesubcasecounter.\thesubsubcasecounter.}}
}
\newcommand{\R}{{{\sf l} \kern -.10em {\sf R} }}
\newcommand{\eps}{\varepsilon}
\newcommand{\wlogc}{W.l.o.g.}
\newcommand{\wlogs}{w.l.o.g.}
\title{Lombardi Drawings of Knots and Links}
\author{
	Philipp Kindermann\thanks{
  Universit\"at W\"urzburg, Germany, 
	\url{philipp.kindermann@uni-wuerzburg.de}}
\and
	Stephen Kobourov\thanks{
  University of Arizona, Tucson, AZ, US, 
  \url{kobourov@cs.arizona.edu}  }
\and
	Maarten L\"offler\thanks{
  Universiteit Utrecht, the Netherlands, 
  \url{m.loffler@uu.nl}}
\and
	Martin~N\"ollenburg\thanks{
  TU Wien, Vienna, Austria, 
  \url{noellenburg@ac.tuwien.ac.at}}
\and
	Andr\'e Schulz\thanks{
  FernUniversit\"at in Hagen, Germany, 
  \url{andre.schulz@fernuni-hagen.de}}
\and
	Birgit Vogtenhuber\thanks{
  Graz University of Technology, Austria,  
  \url{bvogt@ist.tugraz.at}}
}
\begin{document}

\maketitle
\setcounter{footnote}{0}

\begin{abstract}
Knot and link diagrams are projections of one or more 3-dimensional simple closed curves into  
$\R^2$, such that no more than two points 
project to the same point in $\R^2$. 
These diagrams are 
drawings of 4-regular plane multigraphs.
Knots are typically smooth curves in $\R^3$, so their projections should be smooth curves in $\R^2$ with good continuity and large crossing angles: exactly the properties of Lombardi graph drawings (defined by circular-arc edges and perfect angular resolution).

We show that several knots do not allow plane Lombardi drawings. On the other hand, we identify a large class of 4-regular plane multigraphs that do have Lombardi drawings.
We then study two relaxations of Lombardi drawings and 
show that every knot 
admits a plane 2-Lombardi drawing (where edges are composed of two circular arcs). 
Further, every knot 
is \emph{near-Lombardi}, that is, it can be drawn as Lombardi drawing when 
relaxing the angular resolution requirement by an arbitrary small angular offset~$\eps$, while maintaining a $180^\circ$ angle between opposite edges.
\end{abstract}

\bigskip
\bigskip
\begin{figure}[H]
		\centering
		\subcaptionbox{\label{sfg:teaser-rolfsen}}{\includegraphics[scale=.75]{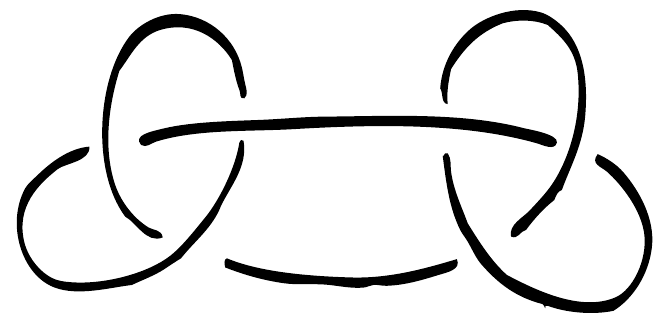}}
		\hfill
		\subcaptionbox{\label{sfg:teaser-livingston}}{\includegraphics[scale=.75]{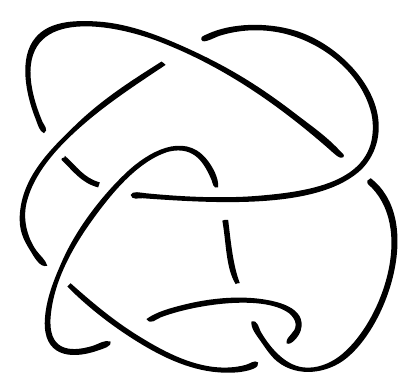}}
		\hfill
		\subcaptionbox{\label{sfg:teaser-kauffman}}{\includegraphics[scale=.12, angle=-2]{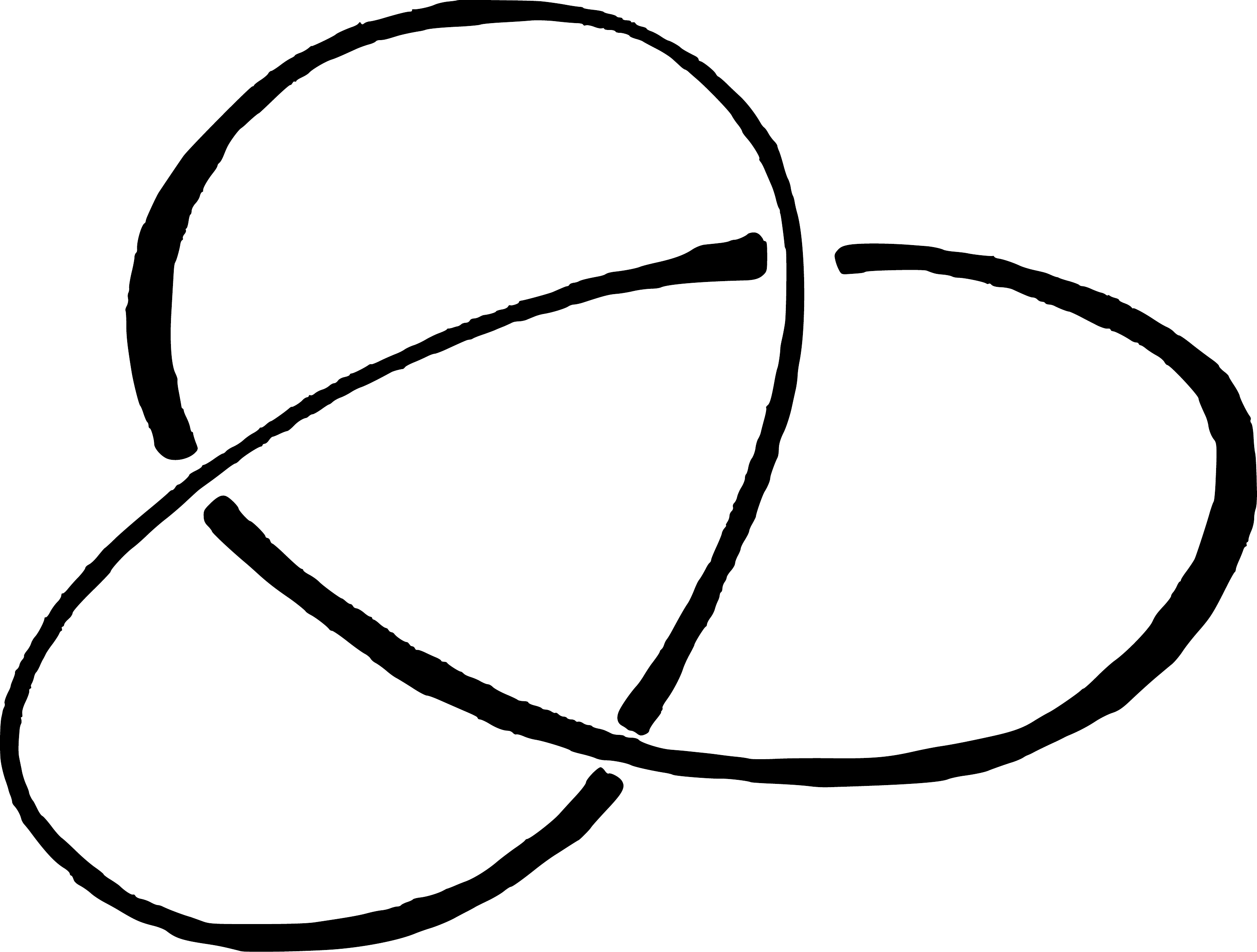}}
		\caption{Hand-made drawings of knots from the books of (a) Rolfsen~\cite{rolfsen1976knots}, 
		(b) Livingston~\cite{livingston}, and (c) Kauffman~\cite{kauffman}.}
		\label{fig:teaser}	
\end{figure}

\clearpage

\section{Introduction}
A \emph{knot} is an embedding of a simple closed curve in 3-dimensional Euclidean space~$\R^3$.
Similarly, a \emph{link} is an embedding of a collection of simple closed curves in $\R^3$.
A \emph{drawing of a knot (link)} (also known as \emph{knot diagram}) is a projection of the knot (link) to the Euclidean plane~$\R^2$ such that for any point of $\R^2$, at most two points of the curve(s) are mapped to it~\cite{SchareinPhD,rolfsen1976knots,cromwell2004knots}. 
From a graph drawing perspective, drawings of knots and links are drawings of 4-regular plane multigraphs that contain neither loops nor cut vertices. 
Likewise, every 4-regular plane multigraph without loops and cut vertices can be interpreted as a link.
Unless specified otherwise, we assume that a multigraph has no self-loops or cut vertices.

In this paper, we address a question that was recently posed by Benjamin Burton: 
\enquote{Given a drawing of a knot, how can it be redrawn \emph{nicely} without changing the given topology of the drawing?} 
We do know what a drawing of a knot is, but what is meant by a \emph{nice} drawing? 
Several graphical annotations of knots and links as graphs have been proposed in the knot theory literature, but most of the illustrations are hand-drawn; see Fig.~\ref{fig:teaser}. 
When studying these drawings, a few desirable features become apparent: 
\begin{enumerate*}[label=(\roman*)]
	\item edges are typically drawn as smooth curves, 
	\item the angular resolution of the underlying 4-regular graph is close to $90^\circ$, and 
	\item the drawing preserves the continuity of the knot, that is, in every vertex of the underlying 
graph, opposite edges have a common tangent. 
\end{enumerate*}
There are many more features one could wish from a drawing of a knot or link, see, e.g., the energy models discussed in the PhD thesis of Scharein~\cite{SchareinPhD}.
But our task is to \emph{redraw} a given drawing of a knot with a particular topology, so other typical quality metrics, such as the number of crossings, that vary with the choice of the embedding or topology of a knot diagram do not apply here.

There already exists a graph drawing style that fulfills the three requirements above: 
a \emph{Lombardi drawing} of a (multi-)graph $G=(V,E)$ is a drawing of $G$ in the Euclidean plane with the following properties:
\begin{enumerate}[nolistsep]
	\item The vertices are represented as distinct points in the plane
	\item The edges are represented as circular arcs connecting the representations of their end vertices (and not containing the representation of any other vertex); note that a straight-line segment is a circular arc with radius infinity.
	\item Every vertex has \emph{perfect angular resolution}, i.e., its incident edges are equiangularly spaced. For knots and links this means that the angle between any two consecutive edges is $90^\circ$.
\end{enumerate}
A Lombardi drawing is plane if none of its edges intersect. We are particularly interested in plane Lombardi drawings, since crossings change the topology of the drawn knot.

\paragraph{Knot diagram representations.}
There are several ways in the literature to combinatorically represent a knot 
diagram that are different from the 4-regular multi-graph as described above.which we will briefly survey.
The Alexander-Briggs-Rolfsen notation~\cite{ab-otkc-AN27,rolfsen1976knots} is a 
well established notation that organizes knots by their vertex number 
and a counting index, e.g., the trefoil knot  $3_1$ is listed as the first 
(and only) knot with three vertices. The Gauss code~\cite{fm-cgc-dcg99} of a knot 
can be computed as follows. Label each
vertex with a letter, then pick a starting vertex and a direction, traverse the
knot, and record the labels of the vertices encountered in the order of the traversal
with a preceding ``$-$'' if the part of the knot that is followed at the vertex lies
below the other part (called an \emph{under crossing}); see Fig.~\ref{sfg:notation-gauss}.
The Dowker–Thistlethwaite code~\cite{dt-ckp-ta83} is obtained similar to the Gauss code: 
Pick a starting vertex and a direction, traverse the knot, and label the
vertices in the order of the traversal with consecutive integers, starting
from 1, with a preceding ``$-$'' in case of an under crossing for even labels. 
Then, every vertex has two labels: a positive odd label and an even label. Order 
the vertices ascendingly by their odd label, and record their corresponding
even labels in this order; see Fig.~\ref{sfg:notation-dowker}.

\begin{figure}[t]
		\centering		
		\begin{subfigure}[b]{.47\textwidth}
			\centering
			\includegraphics[scale=1,page=1]{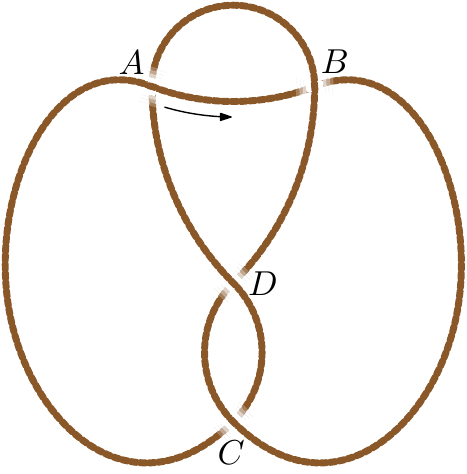}
			\caption{Gauss code: $A,-B,C,-D,B,-A,D,-C$}
      \label{sfg:notation-gauss}
		\end{subfigure}
		\hfil
		\begin{subfigure}[b]{.5\textwidth}
			\centering
			\includegraphics[scale=1,page=2]{notations}
			\caption{Dowker-Thistlethwaite code: $-6,-8,-2,-4$}
      \label{sfg:notation-dowker}
		\end{subfigure}		
		\caption{Representations of the knot $4_1$.}
		\label{fig:notation}	
\end{figure}

\paragraph{Knot drawing software.}
Software for generating drawings for knots and links exists. One powerful package is {\tt KnotPlot}~\cite{SchareinPhD}, 
which provides several methods for drawing knot diagrams. 
It contains a library of over 1,000 precomputed knots and can also generate knot drawings of certain families, such as torus
knots.  
{\tt KnotPlot} is mainly concerned with visualizing knots in three and four dimensions. To this end, a knot is represented as a 3-dimensional path on a set number of nodes, and then forces are used on these nodes to smoothen the visualization without changing the topology.
But {\tt KnotPlot} also provides methods for drawing general knots in 2D based on the embedding of the underlying plane multigraph, represented by the Dowker-Thistlethwaite code.
By replacing every vertex by a 4-cycle, the multigraph becomes a simple planar 3-connected graph, which is then drawn using Tutte's barycentric method~\cite{T63}. In the end, the modifications are reversed and a drawing of the knot is obtained with
edges drawn as polygonal arcs. The author noticed that this method \enquote{... does not yield
\enquote{pleasing} graphs or knot diagrams.} In particular, he noticed issues 
with vertex and angular resolution~\cite[pg. 102]{SchareinPhD}.

Another approach was used by Emily Redelmeier~\cite{redelmeier}
in the Mathematica package \texttt{KnotTheory}. Here, every arc, crossing, and face of the
knot diagram is associated with a disk. The drawing is then generated from the implied circle packing
as a circular arc drawing. As a result of the construction, every edge in the diagram is made of three circular arcs with common tangents at opposite edges. Since no further details are given, it is hard to evaluate the effectiveness of this approach, although as we show in this paper, three circular arcs per edge are never needed.
A related drawing style for knots are the so-called \emph{arc presentations}~\cite{cromwell1998arc}. An arc presentation is an
orthogonal drawing, that is, all edges are sequences of horizontal and vertical segments, with the additional properties
that at each vertex the vertical segments are above the horizontal segments in the corresponding knot and that each
row and column contains exactly one horizontal and vertical segment, respectively. However, these drawings might require a 
large number of bends per edge.

\paragraph{Lombardi drawings.}
Lombardi drawings were introduced by Duncan et al.~\cite{Lombardi1}. They showed that 
2-degenerate graphs have Lombardi drawings and that 
all $d$-regular graphs, with $d \not\equiv 2 \pmod{4}$, have Lombardi drawings with all vertices placed along a common circle. 
Neither of these results, however, is guaranteed to result in plane drawings. Duncan et al.~\cite{Lombardi1} also showed that there exist planar graphs that do not have plane Lombardi drawings, but restricted graph 
classes (e.g., Halin graphs) do. In subsequent work, Eppstein~\cite{Lombardi2_GD,e-mpdasbpld-14} showed that every (simple) planar 
graph with maximum degree three has a plane Lombardi drawing. Further, he showed that a certain class of 4-regular 
planar graphs (the medial graphs of polyhedral graphs) also admit plane Lombardi drawings and he presented an example 
of a 4-regular planar graph that does not have a plane Lombardi drawing. A generalization of Lombardi drawings are
\emph{$k$-Lombardi drawings}. Here, every edge is a sequence of at most~$k$ circular arcs that meet at a common tangent. 
Duncan et al.~\cite{degkl-ppld-12} showed that every planar graph has a plane $3$-Lombardi drawing. 
Related to $k$-Lombardi-drawings are \emph{smooth-orthogonal drawings of complexity $k$}~\cite{Smoothorthogonalintro}. 
These are plane drawings where every edge consists of a sequence of at most $k$ quarter-circles and axis-aligned segments 
that meet smoothly, edges are axis-aligned (emanate from a vertex either horizontally or vertically),  and no two edges emanate in the same direction. Note that in the special case of 4-regular graphs, smooth-orthogonal drawings of 
complexity $k$ are also plane $k$-Lombardi drawings.

\paragraph{Our Contributions.}
The main question we study here is motivated by the application of the Lombardi drawing style to knot and link drawings: Given a 4-regular plane multigraph~$G$ without loops and cut vertices, 
does $G$ admit a plane Lombardi drawing with the same combinatorial embedding? 
In Section~\ref{sec:general} we start with some positive results  on extending a plane Lombardi drawing, as well as composing two plane Lombardi drawings.
In Section~\ref{sec:circlepacking}, by extending the results of Eppstein~\cite{Lombardi2_GD,e-mpdasbpld-14}, we show that a large class of multigraphs, including 4-regular polyhedral graphs, does have plane Lombardi drawings.
Unfortunately, there exist several small knots that do not have a plane Lombardi drawing.
Section~\ref{sec:negative} discusses these cases but also lists a few positive results for small examples. 
In Section~\ref{sec:2lombardi}, we show that every 4-regular plane multigraph has a plane 2-Lombardi drawing. 
In Section~\ref{sec:almost}, we show that every 4-regular plane multigraph can be drawn
with non-crossing circular arcs, so that the perfect angular 
resolution criterion is violated only by an arbitrarily small value~$\eps$, while maintaining that opposite edges have common tangents.

\section{General Observations}\label{sec:general}

If a knot or a link has an embedding with minimum number of vertices that admits a plane Lombardi drawing, we call it a \emph{plane Lombardi knot} (\emph{link}). 
We further call the property of admitting a plane Lombardi drawing \emph{plane Lombardiness}.
If two vertices in a plane Lombardi drawing of a knot are connected by a pair of multi-edges, we denote the face enclosed by these two edges as a \emph{lens}.

There exist a number of operations that maintain the plane Lombardiness of a 4-regular multigraph.
Two knots $A$ and~$B$ can be combined by \emph{connecting $A$ and $B$ along edges $a$ of $A$ and $b$ of $B$},
that is, cutting an edge $a$ of $A$ and an edge $b$ of $B$ open and gluing pairwise the loose ends of of~$a$ with the loose ends of~$b$.
This operation is known as a \emph{knot sum} $A+B$.
Knots that cannot be decomposed into a sum of two smaller knots are known as 
\emph{prime knots}. By Schubert's theorem, every knot can be uniquely decomposed 
into prime knots~\cite{s-sitz-49}. The smallest prime knot 
is the trefoil knot with three crossings or vertices; see Fig.~\ref{sfg:teaser-kauffman}. 
Rolfsen's knot table\footnote{\url{http://katlas.org/wiki/The_Rolfsen_Knot_Table}} 
lists all prime knots with up to ten vertices. 

\begin{theorem}\label{thm:summation}
	Let $A$ and $B$ be two 4-regular multigraphs with plane Lombardi drawings. 
	Let $a$ be an edge of $A$ and $b$ an edge of $B$. 
	Then the knot sum  $A + B$, obtained by connecting $A$ and $B$ along edges $a$ and $b$, admits a plane Lombardi drawing.
\end{theorem}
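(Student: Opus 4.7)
The plan is to apply a Möbius transformation $\mu$ to the given plane Lombardi drawing $D_B$ of $B$ so that a suitably contracted copy $\mu(D_B)$ fits inside a chosen face of the plane Lombardi drawing $D_A$ of $A$ incident to $a$, and then to complete the drawing of $A+B$ by replacing the edges $a$ and $\mu(b)$ with two new circular arcs that connect the two drawings. Because Möbius transformations are conformal and map circles to circles, the image $\mu(D_B)$ is itself a plane Lombardi drawing of $B$, so the full $6$-real-dimensional flexibility of the Möbius group is available.

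Write $a = u_1 u_2$ with emanating tangents $\tau_{u_1}, \tau_{u_2}$, and $b = v_1 v_2$. Fix a face $F_A$ of $D_A$ incident to $a$ and let $m_a$ denote the midpoint of $a$. I would choose target points $P_1, P_2$ close to $m_a$ on the $F_A$-side of $a$ and seek a Möbius $\mu$ satisfying $\mu(v_i) = P_i$. For each $P_i$, there is a unique circular arc from $u_i$ to $P_i$ whose emanating tangent at $u_i$ is $\tau_{u_i}$; this arc has a determined tangent at $P_i$, which I prescribe as the tangent of $\mu(b)$ at $P_i$. With this prescription, deleting $a$ and $\mu(b)$ and inserting the two new arcs preserves the $90^\circ$ angular resolution at all four of $u_1, u_2, v_1, v_2$, so the result is a Lombardi drawing as soon as it is plane.

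A parameter count confirms existence of such a $\mu$. After fixing $\mu(v_i) = P_i$ ($4$ real conditions), the remaining $2$-parameter family of Möbius transformations satisfies the identity $\mu'(v_1)\mu'(v_2) = (P_1-P_2)^2/(v_1-v_2)^2$, so $\arg\mu'(v_1) + \arg\mu'(v_2)$ is already determined by $(P_1, P_2)$. The two tangent prescriptions at $P_1, P_2$ therefore yield a single real compatibility equation on $(P_1, P_2)$, cutting out a $3$-dimensional admissible surface in the $4$-dimensional space of target pairs; on this surface, the scale $|\mu'(v_1)|$ remains as a free $1$-parameter.

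The main obstacle is planarity. I would take $(P_1, P_2)$ on the admissible surface very close to $m_a$ on the $F_A$-side, and use the free scale parameter to make $|\mu'(v_1)|$ and $|\mu'(v_2)|$ both small (which is consistent since $|P_1 - P_2|^2$ is already small). Then $\mu$ is a uniformly strong contraction on $D_B$, so $\mu(D_B)$ lies in an arbitrarily small neighborhood of $m_a$; since $F_A$ is open and meets every neighborhood of $m_a$ on its side, for small enough scale $\mu(D_B\setminus b)$ lies inside $F_A$ and avoids the rest of $D_A$. The two new arcs start at $u_1 \ne u_2$ tangent to $a$ and bend into $F_A$; for $P_1, P_2$ close enough to $m_a$ they become small perturbations of the two halves of $a$ into $F_A$, hence they lie in $F_A$ and do not cross each other or any other edge. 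The delicate technical step—and the main work of the proof—is the compactness/continuity argument that ensures all these ``for sufficiently small'' conditions can be met simultaneously on the admissible surface.
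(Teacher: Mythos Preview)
Your approach is different from the paper's, and considerably more laborious. The paper transforms \emph{both} drawings rather than just $D_B$: it applies a M\"obius transformation to each of $D_A$ and $D_B$ separately so that the designated edge ($a$, resp.\ $b$) becomes a straight line through the point at infinity, i.e., the complement of a finite segment. After a rotation and translation, the two lines are made collinear and the finite portions of the two drawings are placed on disjoint intervals of that common line. The two new connecting edges are then simply the finite segment between the drawings and the remaining ray-pair through infinity; all four tangent conditions at $u_1,u_2,v_1,v_2$ are satisfied automatically because everything lies on one line, and planarity is immediate since the two finite drawings occupy disjoint half-planes (or strips). A final M\"obius transformation brings the picture back to the finite plane. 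Your strategy of contracting $\mu(D_B)$ into a face $F_A$ is sound in spirit, but it forces you into the tangent-compatibility parameter count and the simultaneous-smallness argument that you flag as ``the main work of the proof'' and do not carry out; verifying that the admissible $3$-surface actually accumulates at $(m_a,m_a)$ from the $F_A$-side with controllable scale is a genuine technical obligation. The paper's trick of sending the cut edges to infinity eliminates all of this: there is no tangent matching to solve and no delicate placement, because collinearity does the work for free.
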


\iffalse \begin{sketch}
	The idea of the composition is sketched in Fig.~\ref{fig:summation}. 
	We first apply Möbius transformations, rotations, and translations to the two drawings 
	so that edges $a$ and $b$ can be aligned along a circle with infinite radius. 
	This can be done such that the drawings of $A$ and $B$ do not intersect after removing $a$ and $b$. 
	We can now reconnect the two drawings into a single plane Lombardi drawing 
	by introducing two edges $c$ and $d$ along the straight line.
\end{sketch}
\fi 

\begin{proof}
  We first apply a Möbius transformation to the plane Lombardi drawings of $A$ 
  and~$B$ so that in the resulting drawings the given edges $a$ and $b$ are 
  drawn as straight edges passing through the point at infinity, i.e., they are 
  complements of line segments on an infinite-radius circle; see Fig.~\ref{fig:summation}. 
  Next, we rotate and align both of these drawings so that 
  edges~$a$ and~$b$ are collinear and the subdrawings obtained by removing 
  edges $a$ and $b$ do not intersect. In the final step, we remove both $a$ 
  and $b$ and reconnect their vertices by two new edges $c$ and $d$ connecting the 
  two drawings, one being a line segment and the other passing through 
  infinity. Since Möbius transformations preserve planarity and Lombardiness 
  and our construction does not introduce any edge crossings, the resulting 
  drawing is a plane Lombardi drawing. Another Möbius transformation may be 
  applied to remove the edge through infinity.
\end{proof}

Another operation that preserves the plane Lombardiness is \emph{lens multiplication}.
Let $G=(V,E)$ be a 4-regular plane multigraph with a lens between two vertices~$u$ and~$v$. 
A lens multiplication of~$G$ is a 4-regular plane multigraph that is obtained 
by replacing the lens between~$u$ and~$v$ with a chain of lenses.

\begin{figure}[t]
	\centering
  \begin{minipage}[b]{.39\textwidth}
    \centering
    \includegraphics[scale=1,page=2]{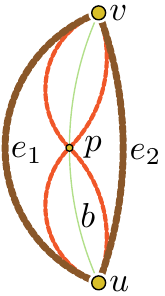}
    \caption{Adding two plane Lombardi drawings of 4-regular multigraphs.}	
    \label{fig:summation}
  \end{minipage}
	\hfill
  \begin{minipage}[b]{.27\textwidth}
		\centering
			\includegraphics[scale=1,page=1]{lenses}
		\caption{Subdividing a lens between~$u$ and~$v$ by a new vertex~$p$.}
		\label{fig:lens-subdivision}
  \end{minipage}
  \hfill
  \begin{minipage}[b]{.29\textwidth}
    \centering
      \includegraphics[scale=1]{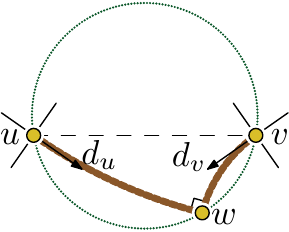}
    \caption{Placement circle for neighbor $w$ of $u$ and $v$ in a 4-regular graph.}
    \label{fig:placement}
  \end{minipage}
\end{figure}

\begin{lemma}\label{lem:lensmult}
	Let $G=(V,E)$ be a 4-regular plane multigraph with a plane Lombardi drawing~$\Gamma$. 
	Then, any lens multiplication $G'$ of $G$ also admits a plane Lombardi drawing.
\end{lemma}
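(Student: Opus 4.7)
The plan is to reduce to a single subdivision by induction on the length of the chain of lenses: a chain of $k+1$ lenses is obtained from a chain of $k$ lenses by subdividing any one of its lenses by one further vertex. It therefore suffices to show that, given a lens between $u$ and $v$ in a plane Lombardi drawing, one can insert a single degree-$4$ vertex $p$ into the lens so that the two old $uv$-edges are replaced by two $up$-edges and two $pv$-edges forming two new lenses, while preserving planarity and perfect angular resolution.

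I would normalize the lens by applying a Möbius transformation that sends $u \mapsto 0$ and $v \mapsto \infty$. Each bounding arc of the lens is a piece of a circle through both $u$ and $v$, so it maps to a full line through the origin; by conformality and the fact that the two lens edges meet at $u$ at $90^\circ$ (perfect angular resolution), these two lines are perpendicular. After a rotation I may therefore assume that the two bounding arcs map to the positive $x$- and $y$-axes, the lens itself maps to the open first quadrant, and the rest of the drawing lies entirely outside this wedge.

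Now I place $p$ at $(1,1)$ and construct four circular-arc edges. Two arcs go from $p$ to the origin: one along the unit circle centered at $(0,1)$, tangent to the $x$-axis at the origin, the other along the unit circle centered at $(1,0)$, tangent to the $y$-axis at the origin. The remaining two edges go from $p$ to $\infty$ as the horizontal ray $\{(1+s,1):s\ge 0\}$ and the vertical ray $\{(1,1+s):s\ge 0\}$ (straight lines being circular arcs of infinite radius). At $p$ the four emanation directions are $(0,-1), (-1,0), (0,1), (1,0)$, so angular resolution is perfect there. At $u$ the two new arcs are tangent to the $+x$- and $+y$-axes, occupying exactly the angular slots vacated by the removed lens edges, so the $90^\circ$ spacing at $u$ is preserved. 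A direct computation (intersecting the two circles with each other and with the two rays) shows that all four new curves stay inside the first quadrant and pairwise intersect only at $u$, $p$, or $v$.

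The subtlest point, and the main obstacle, is verifying the angular behaviour at $v$, where all four new edges meet at infinity. I would check this by applying a secondary Möbius map $z \mapsto 1/z$ which brings $v$ to a finite point: under this map the horizontal and vertical rays from $p$ approach the origin from the $+x$- and $-y$-directions, which are exactly the angular slots vacated by the two removed bounding arcs, so the perfect angular resolution at $v$ is maintained. Since every modification is confined to the first-quadrant wedge, the remainder of the drawing is unaffected and planarity is preserved. Applying the inverse of the original Möbius transformation yields the desired plane Lombardi drawing of the subdivided graph, and induction completes the proof.
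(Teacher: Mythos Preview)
Your proof is correct and follows essentially the same approach as the paper: both place the new vertex on the bisecting arc of the lens (the diagonal $y=x$ in your normalized picture) and replace the two boundary arcs by four arcs that inherit the old tangents at $u$ and $v$, then iterate. The only difference is presentational---you normalize via a M\"obius transformation and verify angles and disjointness in explicit coordinates, whereas the paper argues directly with the bisecting arc and its midpoint.
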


\iffalse \begin{sketch}
	Let $f$ be a lens in $\Gamma$ spanned by two vertices $u$ and $v$ as shown in Fig.~\ref{fig:lens-subdivision}. 
	We draw a bisecting circular arc $b$ that splits the angles at $u$ and $v$ into two $45^\circ$ angles. 
	Now we can draw any chain of lenses inside $f$ by placing the additional vertices on $b$. 
	The resulting drawing is a plane Lombardi drawing.
\end{sketch}
\fi 

\begin{proof}
	Let $f$ be a lens in $\Gamma$ spanned by two vertices $u$ and $v$. 
	We denote the two edges bounding the lens as $e_1$ and $e_2$. 
	The angle between $e_1$ and $e_2$ in both end-vertices is $90^\circ$. 
	We define the bisecting circular arc $b$ of $f$ as the unique circular arc 
	connecting $u$ and $v$ with an angle of $45^\circ$ to both $e_1$ and $e_2$; 
	see Fig.~\ref{fig:lens-subdivision}.
	
	Let $p$ be the midpoint of $b$.
	If we draw circular arcs from both $u$ and $v$ to~$p$ that have the same tangents as $e_1$ and $e_2$ 
	in $u$ and $v$, then these four arcs meet at $p$ forming angles of $90^\circ$. 
	Furthermore, each such arc lies inside lens $f$ and hence does not cross any other arc of~$\Gamma$. 
	The resulting drawing is thus a plane Lombardi drawing of a 4-regular multigraph 
	that is derived from $G$ by subdividing the lens~$f$ with a new degree-4 vertex.
	
	By repeating this construction inside the new lenses, we can create plane Lombardi 
	drawings that replace lenses by chains of smaller lenses.
\end{proof}

We will use the following property several times throughout the paper.

\begin{property}[Property~2 in~\cite{Lombardi1,degkl-ppld-12}]\label{prop:placement}
	Let $u$ and $v$ be two vertices with given positions that have a common, unplaced neighbor $w$. 
	Let $d_u$ and $d_v$ be two tangent directions and let  $\theta$ be a target angle. 
	Let $C$ be the locus of all positions for placing $w$ so that 
	(i) the edge $(u,w)$ is a circular arc leaving $u$ in direction $d_u$, 
	(ii) the edge $(v,w)$ is a circular arc leaving~$v$ in direction $d_v$, and 
	(iii) the angle formed at $w$ is $\theta$. Then $C$ is a circle, the so-called \emph{placement circle} of $w$.
\end{property}

Duncan et al.~\cite{degkl-ppld-12} further specify the radius and center of the placement circle by the input coordinates and angles. 
For the special case that the two tangent directions $d_u$ and $d_v$ are symmetric with respect to the line through $u$ and $v$, and that the angle $\theta$ is $90^\circ$ or $270^\circ$, the corresponding placement circle is such that its tangent lines at $u$ and $v$ form an angle of $45^\circ$ with the arc directions~$d_u$ and~$d_v$. 
In particular, the placement circle bisects the right angle between $d_u$ (resp.~$d_v$) and its neighboring arc direction. 
Fig.~\ref{fig:placement} illustrates this situation.

\section{Plane Lombardi Drawings via Circle Packing}
\label{sec:circlepacking}

Recall that \emph{polyhedral graphs} are simple planar 3-connected graphs, and that those graphs have a unique (plane) combinatorial embedding.
The (plane) \emph{dual graph}~$M'$ of a plane graph~$M$ has a vertex for every face of~$M$ and an edge between two vertices for every edge shared by the corresponding faces in $M$.
In the \enquote{classic} drawing $D(M,M')$ of a primal-dual graph pair $(M,M')$, every vertex of~$M'$ lies in its corresponding face of $M$ and vice versa, and every edge of~$M'$ intersects exactly its corresponding edge of $M$.
Hence, every cell of $D(M, M')$ has 
exactly two such edge crossings and exactly one vertex of each of~$M$ and~$M'$ on its boundary.
The \emph{medial graph} of a primal-dual graph pair $(M,M')$ has a vertex for every crossing edge pair in $D(M, M')$ and an edge between two vertices whenever they share a cell in $D(M, M')$; see Fig.~\ref{sfg:ext_1}. 
Every cell of the medial graph contains either a vertex of~$M$ or a vertex of $M'$ and every edge in the medial graph is incident to exactly one cell in $D(M,M')$.

Every 4-regular plane multigraph $G$ can be interpreted as the medial graph of some plane graph $M$ and its dual $M'$, where both graphs possibly contain multi-edges.
In fact, medial graphs have already been used in the context of knot diagrams by Tait in 1879~\cite{tait-knots}.
If~$G$ contains no loops and cut vertices, then neither $M$ nor $M'$ contains loops.
Eppstein~\cite{Lombardi2_GD} showed that if $M$ (and hence also $M'$) is polyhedral, then $G$ admits a plane Lombardi drawing.
We show next how to extend this result to a larger graph class.
We next show that if one of $M$ and $M'$ is simple, then $D(G)$ admits a plane Lombardi-drawing.
A full construction example of the algorithm can be found in Appendix~\ref{app:circle_packings}.

\begin{theorem}\label{thm:polyhedral}
    Let $G=(V,E)$ be a biconnected 4-regular plane multigraph and let~$M$ and~$M'$ be the primal-dual multigraph pair for which~$G$ is the medial graph. 
	If one of $M$ and $M'$ is simple, then $G$ admits a plane Lombardi drawing preserving its embedding.
\end{theorem}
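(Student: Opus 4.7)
My plan is to extend Eppstein's primal-dual circle-packing construction~\cite{Lombardi2_GD} for polyhedral medial graphs to the more general setting where only one of $M, M'$ is required to be simple. Assume without loss of generality that $M$ is simple; the other case is handled by swapping the roles of $M$ and $M'$.

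The core object is a \emph{primal-dual circle representation} of the pair $(M, M')$: a family of circles, one circle $C_v$ for each $v \in V(M)$ and one circle $C_f$ for each $f \in V(M')$, with the properties that (i) $C_u$ and $C_v$ are externally tangent iff $uv \in E(M)$, (ii) $C_v$ is orthogonal to $C_f$ whenever $v$ lies on the boundary of $f$, and (iii) each primal-primal tangency coincides with the corresponding dual-dual tangency. When $M$ is simple and $3$-connected, this is the classical Brightwell--Scheinerman packing used by Eppstein. Given such a representation, I place each vertex of $G$ (i.e., an edge $uv$ of $M$ incident to faces $f_1, f_2$) at the common tangency point of $C_u, C_v$ (which also lies on $C_{f_1} \cap C_{f_2}$), and draw each edge of $G$ as a circular arc along whichever of the four local circles contains both of its endpoints. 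Orthogonality yields the $90^\circ$ angular resolution at every vertex, and planarity follows because the open primal disks are pairwise disjoint, and likewise for the open dual disks.

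The remaining case is when $M$ is simple but only $2$-connected (so that $M'$ may contain multi-edges). Here I would decompose $M$ along its $2$-separators via the SPQR tree into triconnected components, which are either $3$-connected simple graphs (R-nodes), simple cycles (S-nodes), or bonds of parallel edges (P-nodes). Each R-node is polyhedral and admits the packing above; each S-node or P-node corresponds to a chain or ring of lenses in $G$, which I would draw explicitly with the help of Property~\ref{prop:placement}. The partial drawings are then reassembled along the shared virtual edges by means of the composition in Theorem~\ref{thm:summation}, so the global plane Lombardi property is preserved.

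The main obstacle is this reassembly step. Each component's drawing must be Möbius-normalized so that its virtual edge is straightened to a line through infinity and so that the tangent directions, positions, and curvatures of the neighboring arcs at the virtual endpoints agree across the gluing. The Möbius group, together with the $90^\circ$ constraint inherited from the primal-dual orthogonality at each virtual endpoint, should provide just enough freedom to enforce consistent alignment; verifying this---and checking that the resulting combinatorial embedding matches the prescribed one---is the technical heart of the proof.
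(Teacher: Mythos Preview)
Your decomposition-and-reassembly strategy is genuinely different from the paper's proof, but it has a real gap at the gluing step. Theorem~\ref{thm:summation} (knot sum) cuts \emph{one} edge in each of two drawings and splices in two new arcs. Gluing two SPQR components of $M$ along a virtual edge $e$ is a different operation on the medial level: each component's medial graph contains a degree-$4$ vertex $x_e$ for the virtual edge, and to recover $G$ you must delete $x_e$ from \emph{both} sides and join the four dangling stubs on one side to the four on the other with four new arcs. That is not a knot sum, nor a composition of two knot sums, so Theorem~\ref{thm:summation} does not apply. Your remark that the M\"obius group ``should provide just enough freedom'' is precisely the point that fails: at such a $4$-stub interface you must match positions, tangent directions, and the $90^\circ$ constraint on both sides simultaneously, and the M\"obius group does not have enough parameters to do this exactly. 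In fact this is why the decomposition approach in Section~\ref{sec:almost} (see Case~\ref{ssc:nolens-22-2comp-app} of Theorem~\ref{thm:biconneps}) only yields an $\eps$-angle Lombardi drawing rather than an exact one.

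The paper avoids this obstacle by never decomposing. Instead it \emph{augments} the simple $2$-connected graph $M$ to a polyhedral graph $M_p$ by adding edges, obtains a single primal-dual circle packing for $(M_p,M'_p)$ and the induced Lombardi drawing of the medial $G_p$, and then undoes the augmentation one edge at a time. Each removal replaces two adjacent arcs of $G_{i+1}$ by a single arc of $G_i$; the crucial invariant is that every arc lives inside a \emph{lens region} cut out by a primal and an orthogonal dual circle, and these lens regions remain pairwise interior-disjoint throughout the process. Because all of this happens inside one fixed packing, the angles are exactly $90^\circ$ at every stage and no alignment of separately produced drawings is ever needed.
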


\begin{proof}
	Assume \wlogs\ that $M$ is simple. 
	If $M$ (and hence also $M'$) is polyhedral, then $G$ admits a plane Lombardi drawing $\Gamma$ by Eppstein~\cite{Lombardi2_GD}.
	
	It remains to show that $\Gamma$ preserves the embedding of~$G$. The drawing~$\Gamma$ is constructed in the following way:
	Consider a primal-dual circle packing $C(M,M')$ of $(M,M')$ which exists due to Brightwell and Schreinerman~\cite{brightwell:1993}. 
	The plane Lombardi drawing $\Gamma$ of $G$ then is essentially the Voronoi diagram of $C(M,M')$.
As the combinatorial embedding of $M$ and $M'$ is unique up to homeomorphism on the sphere, there exists a M\"obius transformation $\tau$ such that the circle packing $\tau(C(M,M'))$ has the same unbounded face as $D(M,M')$.
	Hence, $\Gamma$ is a plane Lombardi drawing of~$G$ that preserves its combinatorial embedding. 

	\begin{figure}[t]
		\centering
		\hfill
		\begin{subfigure}[b]{.4\textwidth}
			\centering
			\includegraphics[scale=1, page=2]{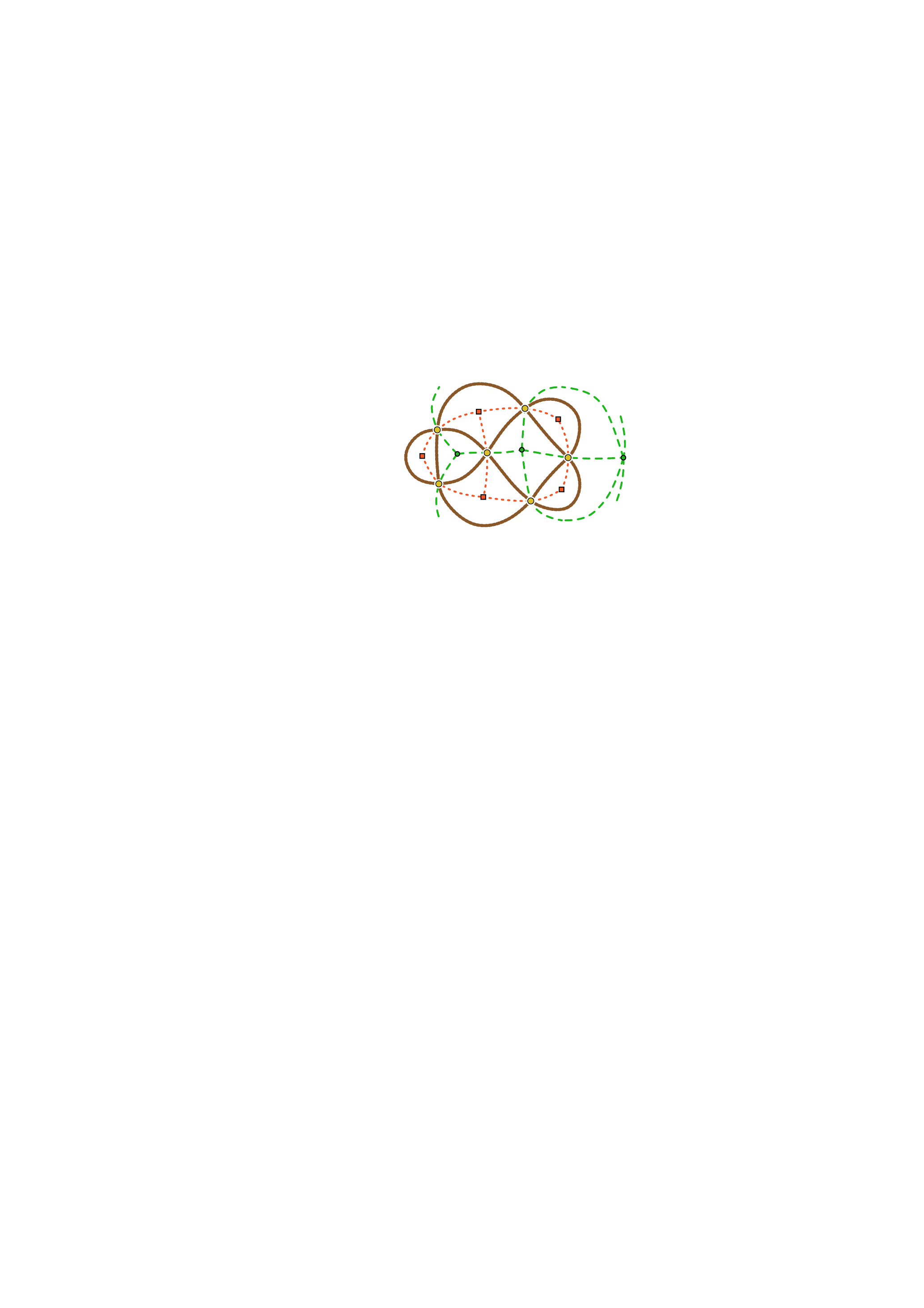}
			\caption{}\label{sfg:ext_1}
		\end{subfigure}
		\hfill
		\begin{subfigure}[b]{.4\textwidth}
			\centering
			\includegraphics[scale=1, page=4]{extend_jocg}
			\caption{}\label{sfg:ext_2}
		\end{subfigure}	
		\hfill \hfill \hfill \\[2ex]
		\hfill
		\begin{subfigure}[b]{.4\textwidth}
			\centering
			\includegraphics[scale=1, page=14]{extend_jocg}
\caption{}\label{sfg:extension_br}
		\end{subfigure}
		\hfill
		\begin{subfigure}[b]{.4\textwidth}
			\centering
			\includegraphics[scale=0.8, page=1]{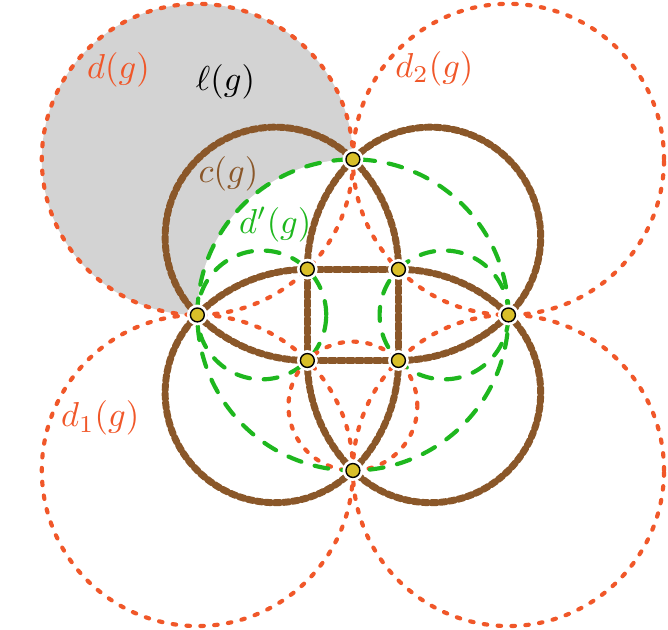}
			\caption{}\label{sfg:full_packing}
		\end{subfigure}				
		\hfill \hfill 
		\caption{(a)--(b) Modifications due to the addition of edge $e$.
			(c)~Extension of (b) to a polyhedral graph, and 
			(d)~the according primal-dual circle packing representation. 
			The medial graph~$G$ is drawn solid, 
			the primal multigraph~$M$ is drawn dotted, 
			and the dual multigraph~$M'$ is drawn dashed.
			The shaded area is the lens region~$l(g)$. \vspace{-0.4cm} }
		\label{fig:ext_full_packing}
	\end{figure}

	Now assume that $M$ is not 3-connected. 
	As a first step, we iteratively extend $M=M_0$ by adding $p$ edges until we obtain a polyhedral graph $M_p$.
	During this process, we also iteratively adapt the dual graph and the medial graph; 
see Figs.~\ref{sfg:ext_1}--\subref{sfg:ext_2} for an illustration.
Let~$M_{i+1}$ be the graph obtained from $M_{i}$ by adding edge $e$ to $M_i$. 
	The edge $e$ splits a face~$f$ of~$M_i$ with at least four incident vertices into two faces $f_1$ and $f_2$ with at least three incident vertices each. 
	In $M'_i$, the according vertex $f'$ is split into two vertices $f'_1$ and $f'_2$. 
	The edges incident to $f'$ are partitioned into edges incident to $f'_1$ and $f'_2$
	and an additional edge between~$f'_1$ and~$f'_2$ is added.
	In $G_i$, the edges inside the face~$f$ of $M_i$ form a cycle that connects every pair of edges in $M_i$ that is incident along the boundary of~$f$. 
	When $e$ is added, exactly two edges $g_1$, $g_2$ of $G_i$ are intersected by $e$. 
	To obtain~$G_{i+1}$, the edges $g_1$ and $g_2$ are replaced by four new edges, where each new edge has the new crossing between~$e$ and $(f'_1,f'_2)$ as one endpoint and one of the four endpoints of~$g_1$ and~$g_2$, respectively, as the other endpoint. 

	In the second step, we apply the result of Eppstein~\cite{Lombardi2_GD} to obtain a plane Lombardi drawing $\Gamma_p$ of $G_p$ together with a primal-dual circle packing $C(M_p,M'_p)$.
	Before going into the third step, the iterative removal of the edges that were added in the first step, let us consider the structure obtained from the second step in more detail; see Figs.~\ref{sfg:extension_br}--\ref{sfg:full_packing}.
For an edge $g$ of~$G_p$, consider the unique vertex $m(g) \in M_p$ that lies in a cell of~$G_p$ incident to~$g$. 
	Note that $g$ has its endpoints on two edges incident to $m(g)$ and adjacent in their order around $m(g)$. Let these edges be $(m(g),m_1(g))$ and $(m(g),m_2(g))$, respectively.
	Let~$d(g)$,  $d_1(g)$, and $d_2(g)$ be the disks in $C(M_p)$ corresponding to $m(g)$, $m_1(g)$, and $m_2(g)$, respectively.
	Then in~$\Gamma_p$, the circular arc $c(g)$ corresponding to $g$ lies in the interior\footnote{Here, \emph{interior} is meant w.r.t.\ the circle packing. 
		Note that a circle could also be inverted, that is, contain the unbounded face.}
	of the disk~$d(g)$ and has its endpoints on the touching points of $d(g)$ with $d_1(g)$ and~$d_2(g)$, respectively.
	These touching points are consecutive along the boundary of $d(g)$.
	Further, there is a disk $d'(g)$ in $C(M'_p)$ whose boundary intersects the boundary of $d(g)$ exactly in the endpoints of $c(g)$. 
	The intersection of $d(g)$ and $d'(g)$ contains $c(g)$ in its interior. 
	The circles~$\partial d(g)$ and~$\partial d'(g)$ intersect with right angles and $c(g)$ bisects the angles at both intersections.  
	We call $d(g) \cap d'(g)$ the \emph{lens region} $\ell(g)$ of $g$. 
	For any two edges $g_1$ and $g_2$ of~$\Gamma_p$, the according lens regions~$\ell(g_1)$ and~$\ell(g_2)$ are interior-disjoint.
	The lens regions of the edges incident to the face in $\Gamma_p$ corresponding to $m(g)$ cover the whole boundary of $d(g)$ and the endpoints of those regions appear in the same cyclic order as the according edges in $M_p$.

	In the third step, we iteratively remove the edges that were added in the first step, by this constructing a sequence of plane Lombardi drawings $\Gamma_i$ for~$G_i$, for $i=p-1,\ldots,0$. For any edge $g$ of $G_i$, consider the unique vertex $m(g) \in M_i$ that lies in a cell of $\Gamma_i$ incident to $g$, with endpoints on edges $(m(g),m_1(g))$ and $(m(g),m_2(g))$ of $M_i$, respectively.
	Let $d(g)$,  $d_1(g)$, and $d_2(g)$ be the disks in~$C(M_p)$ corresponding to $m(g)$, $m_1(g)$, and $m_2(g)$, respectively, and let $c(g)$ be the circular arc in $\Gamma_i$ corresponding to~$g$. 
We keep the following invariants for all edges $g$ of the drawing $\Gamma_i$: \begin{enumerate}[nosep,label=(\roman*)]
    \item $c(g)$ lies in the disk~$d(g)$ and has its endpoints on the touching points of $d(g)$ with $d_1(g)$ and $d_2(g)$, respectively.
	\item There is a disk $d'(g)$ whose boundary intersects the boundary of~$d(g)$ exactly in  $d(g) \cap d_1(g)$ and $d(g) \cap d_2(g)$, such that
		$c(g)$ bisects one of the two regions $d(g) \cap d'(g)$ and $d(g) \cap \mathbb{R}^2\setminus d'(g)$, which we call its lens region~$\ell(g)$.
	\item For any two edges $g_1$ and $g_2$ of $G_i$, the lens regions $\ell(g_1)$ and~$\ell(g_2)$ are interior-disjoint.
    \item The lens regions of the edges incident to the face in~$D(G_i)$ corresponding to $m(g)$ cover the whole boundary of $d(g)$ and the endpoints of those regions appear in the same cyclic order as the according edges in~$D(M_i)$.
  \end{enumerate}

	Obviously, those invariants are fulfilled by $\Gamma_p$. Hence, assume that they are also fulfilled for $\Gamma_{i+1}$, and consider the removal of the edge $e=(v_1,v_2)$ from~$M_{i+1}$ to obtain~$M_i$. 
	In the medial graph~$G_{i+1}$, the edge~$e$ corresponds to four edges sharing the vertex corresponding to $e$,
	and there are two unique faces corresponding to $v_1$ and~$v_2$, respectively. 
	Each of those has two of the edges of~$G_{i+1}$ corresponding to $e$ as consecutive edges along the face.
	Let $g_1$ and $g_2$ be those consecutive incident edges on the face of $G_{i+1}$ corresponding to $v_1$.
	Note that their non-shared endpoints lie on the edges $(v_1,v_3)$ and $(v_1,v_4)$, respectively, where~$v_3$ and~$v_4$ are consecutive in the cyclic order around~$v_1$ in $M_i$.
	Further, note that, when removing $e$ from $M_{i+1}$, we have to replace $g_1$ and $g_2$ by an edge~$g$ connecting their non-shared endpoints. 
For every $j \in \{1,2,3,4\}$, let $d(v_j)$ be the disk of~$C(M_p)$ that corresponds to the vertex $v_j$ of $M_i \subset M_p$ (note that with the notation from the invariants, $d(v_1)=d(g_1)=d(g_2)$). 
Next, consider $c(g_1)$ and $c(g_2)$ in the drawing $\Gamma_{i+1}$. 
	By our invariants,  $c(g_1)$ and $c(g_2)$ lie in their lens regions~$\ell(g_1)$ and $\ell(g_2)$, which are consecutive along the boundary of $d(v_1)$.
	The only common point of $\ell(g_1)$ and $\ell(g_2)$ is the touching point of~$d(v_1)$ and $d(v_2)$. 
	The other endpoints of $c(g_1)$ and $c(g_2)$ are the touching points $d(v_1) \cap d(v_3)$ and $d(v_1) \cap d(v_4)$, respectively.
	Further, the boundary of $d(v_1)$ is completely covered by lens regions which are all pairwise non-intersecting and bounded by circles intersecting~$\partial d(v_1)$ in right angles. 
We replace~$c(g_1)$ and~$c(g_2)$ by the circular arc $c(g)$ that has as its endpoints at the touching points $d(v_1) \cap d(v_3)$ and $d(v_1) \cap d(v_4)$ and is tangent to~$c(g_1)$ and~$c(g_2)$, respectively, in its endpoints. 
			We define the lens region~$\ell(g)$ as the unique region that contains~$\ell(g_1)$ and $\ell(g_2)$ and is the intersection of~$d(v_1)$ with the (according side of the) unique disk~$d'(g)$ for which~$\partial d'(g)$ intersects~$\partial d(v_1)$ at a right angle in the endpoints of $c(g)$; see~Fig.~\ref{fig:lens}.

	\begin{figure}[t]
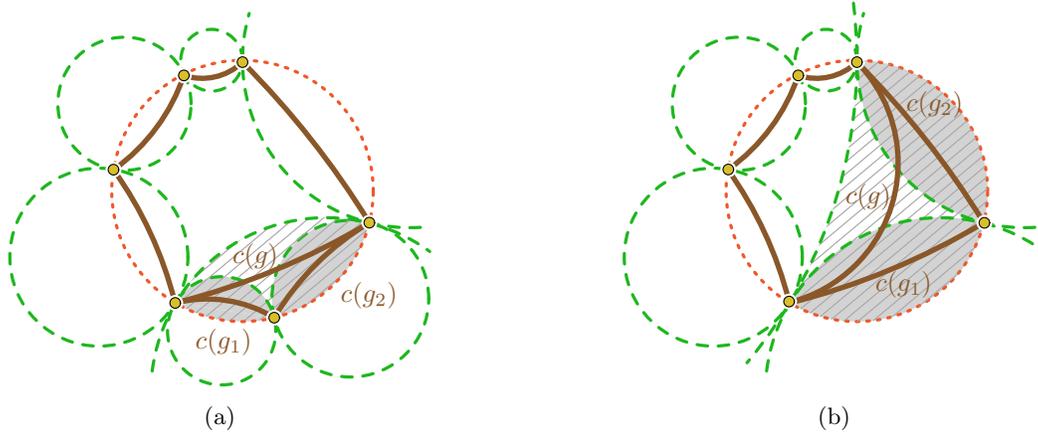

		\centering
		\begin{subfigure}[b]{.47\textwidth}
			\centering
			\includegraphics[scale=1, page=16]{extend_jocg}
			\caption{}\label{sfg:lens1}
		\end{subfigure}
		\hfil
		\begin{subfigure}[b]{.47\textwidth}
			\centering
			\includegraphics[scale=1, page=17]{extend_jocg}
			\caption{}\label{sfg:lens2}
		\end{subfigure}			
		\caption{Two examples of a lens region $\ell(g)$ resulting from $\ell(g_1)$ and $\ell(g_2)$: (a)~convex and (b)~reflex. The lens regions of $c(g_1)$ and $c(g_2)$ are drawn as shaded areas, while the one of~$c(g)$ is the cross-hatched region.
}
		\label{fig:lens}
	\end{figure}

	Note that $\ell(g)$ does not intersect the interior of any other lens region: for the lens regions outside $d(v_1)$, this is trivial. For the ones inside $d(v_1)$, it follows from continuous transformation of the bounding circle $\partial d'(g)$ to the bounding circle of the other lens. 
	Hence, after repeating the analogous construction for the two other edges in $G_{i+1}$ needed to be replaced when removing $e$ from $M_{i+1}$, 
	namely the ones that are incident to the face corresponding to $v_2$ in $D(G_{i+1})$, 
	we obtain a plane Lombardi drawing $\Gamma_i$ that again fulfills our four invariants, which completes the proof.
\end{proof}

We remark that this result is not tight: there exist 4-regular plane multigraphs whose primal-dual pair $M$ and $M'$ contain parallel edges that still admit plane Lombardi drawings, e.g., knots $8_{12}$, $8_{14}$, $8_{15}$, $8_{16}$; see Fig.~\ref{fig:primes} in Appendix~\ref{app:smallknots}.

We now prove that 4-regular polyhedral graphs are medial graphs
of a simple primal-dual pair.

\begin{lemma}\label{lem:3connsimple}
  Let~$G=(V,E)$ be a 4-regular polyhedral graph
  and let~$M$ and~$M'$ be the primal-dual pair for which~$G$ is the medial graph.
  If there is a multi-edge in~$M$ or in~$M'$, then the corresponding
  vertices of~$G$ either have a multi-edge between them or they form a separation 
  pair of~$G$.
\end{lemma}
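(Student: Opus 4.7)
The plan is to analyze the topology forced by a multi-edge of $M$; the case of a multi-edge in $M'$ will follow by the symmetry of the medial graph construction under duality. So I would start by letting $e_1, e_2$ be two parallel edges in $M$ between vertices $u_1, u_2$. Their union is a simple closed Jordan curve $C$ that separates $\mathbb{R}^2$ into two open regions $R_1, R_2$. Since the vertices of $G$ correspond bijectively to the edges of $M$, the vertices of $G$ labeled $e_1$ and $e_2$ lie on $C$, while every other vertex of $G$ lies strictly inside exactly one of $R_1, R_2$.

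Next I would split into two cases depending on whether one of $R_1, R_2$ is a face of $M$. If \wlogs\ $R_1$ is a face of $M$, then it is a digon bounded by $e_1, e_2$ with exactly two corners, one at $u_1$ and one at $u_2$. By the medial graph construction, each of these corners contributes one edge to the boundary of the face of $G$ corresponding to $R_1$, and both of these edges join the vertices $e_1, e_2$ of $G$. Hence $e_1$ and $e_2$ are connected by a multi-edge in $G$, giving the first alternative of the conclusion. Otherwise, both $R_1$ and $R_2$ contain at least one $M$-edge in their interior (using the standing assumption that $M$ has no self-loops), and therefore at least one vertex of $G \setminus \{e_1, e_2\}$. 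I would then argue that $G \setminus \{e_1, e_2\}$ has no edge between an $R_1$-vertex and an $R_2$-vertex: any such edge would come from two $M$-edges $f, f'$ lying in different regions and consecutive in the rotation around some common vertex $w$; the only candidates for $w$ are $u_1$ and $u_2$, and at each of these the edges other than $e_1, e_2$ split in the cyclic order into one arc entirely in $R_1$ and one entirely in $R_2$, separated precisely by $e_1$ and $e_2$. Hence $f, f'$ cannot be consecutive unless one of them equals $e_1$ or $e_2$, which is excluded. So $G \setminus \{e_1, e_2\}$ splits into (at least) an $R_1$-part and an $R_2$-part, and $\{e_1, e_2\}$ is a separation pair of $G$.

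The main obstacle will be to justify the two structural claims cleanly from the medial graph construction: (i) that a digon face of $M$ corresponds to a face of size two in $G$ in the first case, and (ii) that at $u_1$ and $u_2$ the rotation separates the $R_1$- and $R_2$-sides precisely by $e_1, e_2$ in the second case. Both follow readily from the planarity of $M$ and the correspondence of $G$'s faces with the vertices and faces of $M$, but should be spelled out explicitly rather than invoked implicitly.
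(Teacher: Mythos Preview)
Your argument is correct and follows essentially the same route as the paper: both use the Jordan curve $e_1\cup e_2$ to split the plane and conclude that the two medial vertices either bound a digon face (multi-edge in $G$) or separate $G$. The paper phrases the separation step more tersely by observing that the two $M$-vertices $u_1,u_2$ correspond to two \emph{faces} of $G$ that both contain the medial vertices $e_1,e_2$ on their boundary, which immediately forces $\{e_1,e_2\}$ to be a separating set; your rotation-system argument at $u_1,u_2$ is the explicit unpacking of that same fact.
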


\begin{figure}[b]
	\centering
  \hfill
  \subcaptionbox{\label{fig:multiprimaldual-1}}{\includegraphics[scale = 1, page=1]{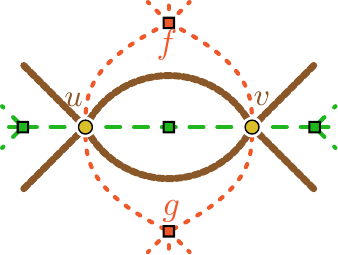}}
  \hfill
  \subcaptionbox{\label{fig:multiprimaldual-2}}{\includegraphics[scale = 1, page=2]{multiprimaldual}}
  \hfill
  \hfill
	\caption{If there is a multi-edge between vertices~$f$ and~$g$ in the primal,
  then there is a multi-edge~$(u,v)$ or a separation pair~$u,v$ in the medial.}
  \label{fig:multiprimaldual}
\end{figure}

\begin{proof}
  \wlogc, assume that there are two edges between 
  vertices~$f$ and~$g$ in~$M$. Let~$u$ and~$v$ be the vertices of~$G$ that
  these two edges pass through; see Fig.~\ref{fig:multiprimaldual}. 
  The vertices~$f$ and~$g$ of~$M$ correspond
  to faces in the embedding of~$G$ that both contain~$u$ and~$v$. Hence, the
  removal of~$u$ and~$v$ from~$G$ disconnects~$G$ into two parts: the
  part inside the area spanned by the two edges between~$f$ and~$g$ and
  the part outside this area. Both~$u$ and~$v$ have two edges in both areas, 
  so either there is a multi-edge between~$u$ and~$v$, or there are vertices in 
	both parts, which makes~$u,v$ a separation pair of~$G$.
\end{proof}

Lemma~\ref{lem:3connsimple} and Theorem~\ref{thm:polyhedral} immediately give 
the following theorem.

\begin{theorem}\label{thm:3connsimple}
  Let~$G=(V,E)$ be a 4-regular polyhedral graph. Then~$G$
  admits a plane Lombardi drawing.
\end{theorem}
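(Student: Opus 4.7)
The plan is to derive this result as a direct corollary of Lemma~\ref{lem:3connsimple} and Theorem~\ref{thm:polyhedral}. Since $G$ is polyhedral, by definition it is simple and $3$-connected, so in particular it contains no multi-edges and no separation pair, and it is certainly biconnected. Let $M$ and $M'$ be the primal-dual multigraph pair for which $G$ is the medial graph; these exist as described at the beginning of Section~\ref{sec:circlepacking}.

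I would argue by contradiction. Suppose that at least one of $M$ and $M'$ contains a pair of parallel edges. Then Lemma~\ref{lem:3connsimple} applies, and it yields two vertices $u,v \in V(G)$ corresponding to this multi-edge such that either $u$ and $v$ are joined by a multi-edge in $G$, or $\{u,v\}$ forms a separation pair of $G$. The first possibility contradicts the simplicity of $G$, and the second contradicts the $3$-connectivity of $G$. Hence both $M$ and $M'$ are simple.

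Since $G$ is biconnected, $4$-regular, plane, and the primal-dual pair $(M,M')$ for which $G$ is the medial graph satisfies the hypothesis of Theorem~\ref{thm:polyhedral} (in fact both members are simple), Theorem~\ref{thm:polyhedral} gives a plane Lombardi drawing of $G$ preserving its embedding, which completes the proof. There is no real obstacle in this step: all of the technical work has already been done in the proofs of Lemma~\ref{lem:3connsimple} and Theorem~\ref{thm:polyhedral}, and the present statement is just the clean packaging of those two results in the polyhedral case.
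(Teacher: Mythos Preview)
Your proof is correct and follows exactly the paper's approach: the paper states that the theorem follows immediately from Lemma~\ref{lem:3connsimple} and Theorem~\ref{thm:polyhedral}, and your argument spells out precisely this implication.
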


\section{Positive and Negative Results for Small Graphs}\label{sec:negative}

We next consider all prime knots with 8 vertices or less. We compute plane Lombardi drawings for those that have it and argue that such drawings do not exists for the others. We start by showing that no knot with a $K_4$ subgraph is plane Lombardi.

\begin{lemma}\label{lem:4knot}
	Every 4-regular plane multigraph $G$ that contains $K_4$ as a subgraph does 
not admit a plane Lombardi drawing. 
\end{lemma}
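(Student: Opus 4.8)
The plan is to assume a plane Lombardi drawing of $G$ exists and derive a contradiction by looking only at the six arcs spanned by the four $K_4$-vertices together with the four edges that leave $K_4$ (which must exist because $G$ is $4$-regular). The starting point is a purely local observation: at every vertex the four incident edges occupy the four perpendicular tangent directions, and among the three directions used by the $K_4$-edges two are necessarily opposite, since any three of four $90^\circ$-spaced directions contain an antipodal pair. Hence at each $K_4$-vertex two of its three $K_4$-edges are collinear and the third is perpendicular to them, so the three $K_4$-faces incident to that vertex have interior angles $90^\circ,90^\circ,180^\circ$, and the fourth (non-$K_4$) edge leaves the vertex into the unique incident face carrying the $180^\circ$ angle.

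Next I would charge every external edge to the face of the $K_4$-subdrawing into which it points, and argue that no face may receive exactly one external edge. Indeed, the three sides of such a triangular face are $K_4$-arcs that cannot be crossed, so whatever part of $G$ lies in that face can reach the rest of $G$ only through that single edge; this makes the edge a bridge, which a $4$-regular multigraph cannot have (the degree sum of either side would be odd). The remaining possibility, that the face is empty, forces the external edge to join its vertex to one of the other two corners of the face, but those corners have both of their face-bounding slots occupied by $K_4$-edges and send their own external edges into different faces, so no slot is available. Thus every face receives $0$ or at least $2$ external edges; since there are exactly four external edges and four triangular faces, each with at most three corners, the only distribution in which no face receives exactly one is $(2,2,0,0)$.

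It then remains to rule out this single surviving configuration, in which two of the faces each carry two $180^\circ$ angles and the other two are \emph{all-$90^\circ$} triangles; equivalently, one $K_4$-edge, say $cd$, is the perpendicular edge at both of its endpoints while each of its two flanking all-$90^\circ$ faces contains it. This is the step I expect to be the main obstacle, because here the combinatorial bridge argument no longer bites and one must instead show that the perpendicularity constraints are geometrically unrealizable. My plan for it is to normalize by a Möbius transformation sending $d$ to infinity, so that the three circles through $d$ become straight lines: the collinear pair at $d$ becomes two parallel lines and the perpendicular edge $cd$ a line orthogonal to them, while the all-$90^\circ$ faces translate into orthogonality of the remaining circles to these lines. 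Tracking these constraints pins the positions of $a$ and $b$ almost rigidly (up to scaling) and forces the two circles carrying $ab$ and $cb$ to be internally tangent at $b$, so that the edges $ba$ and $bc$ would leave $b$ in the \emph{same} direction rather than as an admissible collinear (opposite) pair, a contradiction; this degeneracy is exactly the forced crossing one sees when the configuration is attempted with explicit coordinates.

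Putting the pieces together, every combinatorially possible placement of the four external edges leads to a contradiction, so $G$ admits no plane Lombardi drawing. Throughout, the two facts doing the real work are the local $\{90^\circ,90^\circ,180^\circ\}$ angle pattern forced at each $K_4$-vertex and the elementary property that a circular arc leaving a vertex tangent to a given line lies entirely in one closed half-plane bounded by that line; the hardest part is converting the last combinatorial case into the geometric impossibility above.
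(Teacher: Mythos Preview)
Your approach differs substantially from the paper's, and the combinatorial reduction in your first five steps is correct: the $\{90^\circ,90^\circ,180^\circ\}$ angle pattern forced at each $K_4$-vertex, the bridge/parity argument excluding any face that receives exactly one external edge, and the resulting $(2,2,0,0)$ distribution all go through as you describe.

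The gap is in step~6. After sending $d\to\infty$ you do obtain that the circles carrying $ab$ and $cb$ are internally tangent at $b$, but this does \emph{not} force the two arcs to leave $b$ in the same direction. The common tangent line at $b$ admits two senses, and by taking one arc short and the other long on its circle you obtain opposite starting directions at $b$, satisfying the collinearity requirement. What the internal tangency actually forces is that at least one of the two arcs must be the long ($270^\circ$) arc. A contradiction can still be extracted---the cyclic orders at $b$ and at $c$ determine the directions of the rays $bd$ and $cd$, and whichever arc is taken long turns out to cross the corresponding ray---but this is a further case check that your sketch does not supply, and the sentence as written (``would leave $b$ in the same direction'') is simply false.

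The paper's proof sidesteps the whole case analysis. After a preliminary parity observation guaranteeing that some outer vertex $c$ has both of its non-$a,b$ edges inside the triangle $abc$, it straightens the edge $ab$ by a M\"obius transformation and invokes the placement-circle property (Property~3): since $c$ and $d$ are both common neighbours of $a$ and $b$ with right-angle constraints and $a,b$ have aligned tangents, the placement circles for $c$ and for $d$ coincide, so all four vertices lie on one circle~$C$. The edge $cd$ is then immediately impossible: any circular arc joining two points of $C$ lies entirely on one side of $C$, whereas the two free stubs at $c$ point into $C$ and those at $d$ point out of it. This concyclicity argument is a few lines long and is also what drives the later impossibility proofs for $5_2$ and its infinite family, which your combinatorial decomposition would not readily yield.
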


\begin{proof}
  Let~$a,b,c,d$ be the vertices of the~$K_4$. Every plane embedding of~$K_4$
  has a vertex that lies inside the cycle through the other~3 vertices;
  let~$d$ be this vertex. Since~$d$ has degree~4, it has another edge
  to either one of~$a,b,c$, or to a different vertex. In the former case,
  assume that there is a multi-edge between~$c$ and~$d$. In the latter case,
  by 4-regularity, there has to be another vertex of~$a,b,c$ that is
  connected to a vertex inside the cycle through~$a,b,c$; let~$c$ be this vertex.
  In both cases,~$c$ has two edges that lie inside the cycle through~$a,b,c$.
  
  Assume that~$G$ has a Lombardi drawing.
  Since Möbius transformations do not change the properties of a Lombardi 
  drawing, we may assume that the edge~$(a,b)$ is drawn 
  as a straight-line segment as in Fig.~\ref{sfg:4_1_lom}. Since both~$c$ 
  and~$d$ are neighbors of $a$ and $b$, there are two corresponding placement 
  circles by Property~\ref{prop:placement}. In fact, since any two edges of a 
  Lombardi drawing of a 4-regular graph must enclose an angle of $90^\circ$ and since $a$ 
  and $b$ have \enquote{aligned tangents} due to being neighbors themselves, 
  the two placement circles coincide and a situation as shown in Fig.~\ref{fig:4_1_fails} 
  arises. In particular, this means that in any Lombardi drawing of~$G$ 
  the four vertices must be co-circular. It is easy to see 
  that we cannot draw the missing circular arcs connecting $c$ and $d$: any such arc 
  must either lie completely inside or completely outside of the placement 
  circle. Yet, the stubs for the two edges between~$c$ and~$d$ point inside 
  at $c$ and outside at $d$.
\end{proof}

	\begin{figure}[b]
      \centering
      \begin{subfigure}[b]{.37\linewidth}
        \centering
        \includegraphics[page=2]{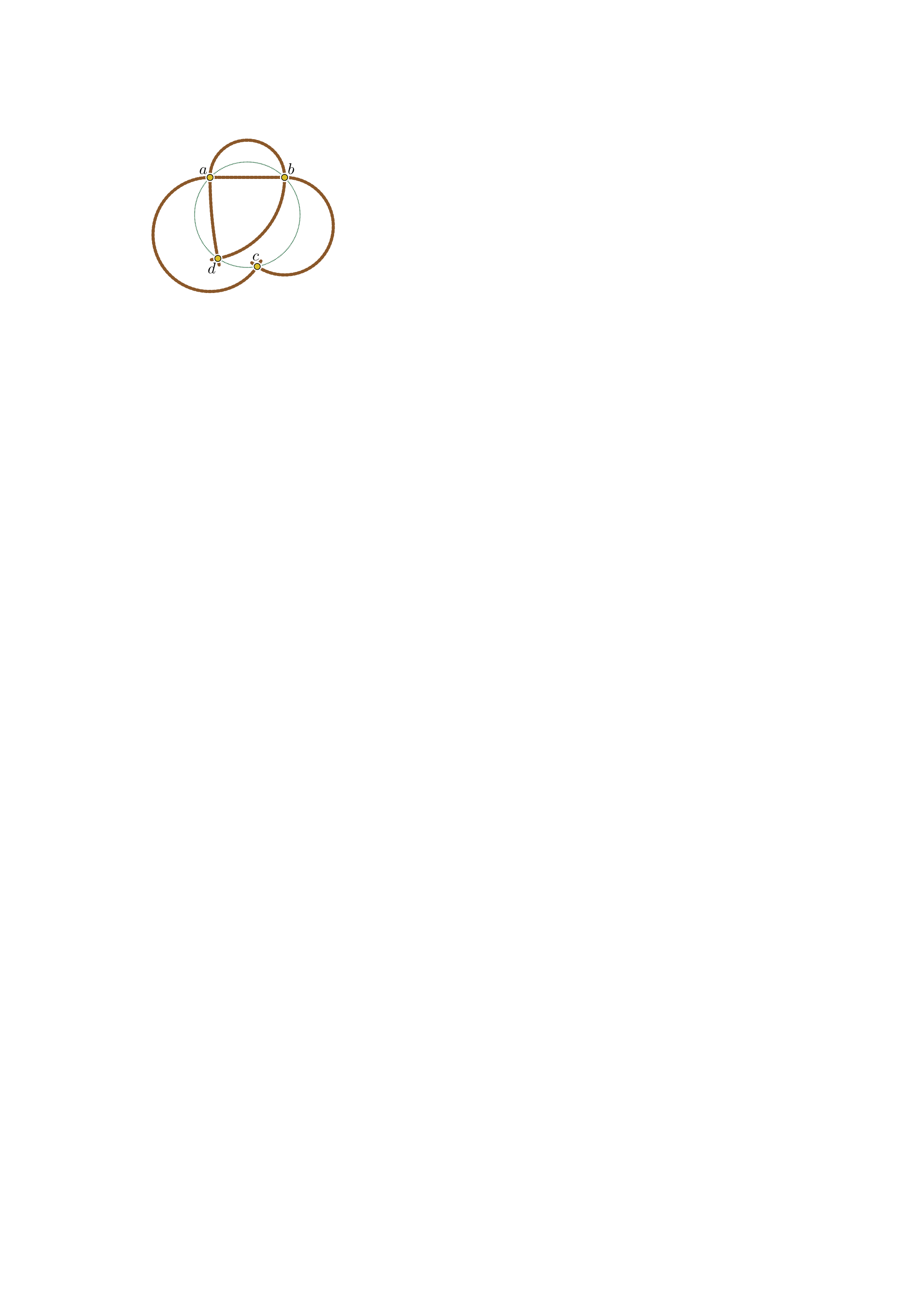}
        \caption{}\label{sfg:4_1}
      \end{subfigure}
      \hfill
      \begin{subfigure}[b]{.57\linewidth}
        \centering
        \includegraphics[page=3]{knot4_1_not}
        \caption{}\label{sfg:4_1_lom}
      \end{subfigure}			
      \caption{Knot $4_1$ has no Lombardi drawing.}
      \label{fig:4_1_fails}
	\end{figure}

	\begin{figure}[t]
      \centering
      \begin{subfigure}[b]{.47\linewidth}
        \centering
        \includegraphics[page=2]{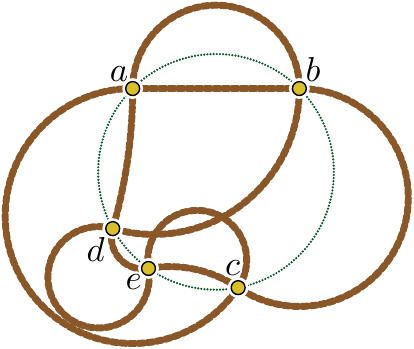}
        \caption{}\label{sfg:5_2}
      \end{subfigure}
      \hfill
      \begin{subfigure}[b]{.47\linewidth}
        \centering
        \includegraphics[page=1]{knot5_2_not}
        \caption{}\label{sfg:5_2_lom}
      \end{subfigure}			
      \caption{Knot $5_2$ and a non-plane Lombardi drawing.}
      \label{fig:5_2}
	\end{figure}

\begin{lemma}\label{lem:4_1}
	Knots $4_1$ and $5_2$ are not plane Lombardi knots.
\end{lemma}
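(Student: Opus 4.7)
The plan splits naturally into two parts, corresponding to the two knots. For $4_1$, I would reduce directly to Lemma~\ref{lem:4knot}. Tracing the Dowker--Thistlethwaite code $-6,-8,-2,-4$ (or, equivalently, the Gauss code $A,-B,C,-D,B,-A,D,-C$ from Figure~\ref{sfg:notation-gauss}), one finds that the underlying 4-regular multigraph of $4_1$ has four vertices $A,B,C,D$ and eight edges which realize every one of the six pairs in $\{A,B,C,D\}$ at least once (with two of the pairs doubled). Hence $4_1$ contains $K_4$ as a simple subgraph, and Lemma~\ref{lem:4knot} immediately rules out a plane Lombardi drawing.

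For $5_2$, the analogous computation shows that the underlying multigraph does not contain $K_4$: one vertex $C$ has only two distinct neighbors $A$ and $E$, linked to $C$ by two lenses; a second lens joins $B$ and $D$; and four single edges $AB$, $AD$, $BE$, $DE$ form a 4-cycle around the $BD$-lens. So Lemma~\ref{lem:4knot} does not apply directly, and I would mimic its proof on this specific combinatorial structure. The plan is to apply a Möbius transformation so that an appropriately chosen edge, say one of the straight bisectors induced by the $BD$-lens, becomes a line segment. Property~\ref{prop:placement} then pins $A$ to the unique placement circle determined by the fixed positions and tangent directions at $B$ and $D$ (since $A$ is a common neighbor of both and must meet the $90^\circ$ angle constraint), and a symmetric argument pins $E$. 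Finally, the two lenses at $C$ force $C$ onto a placement circle determined by the already-fixed $A$ and $E$, and chasing the tangent directions shows that one of the arcs bounding an $AC$- or $CE$-lens must cross one of the single edges or the $BD$-lens.

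The principal obstacle will be the case distinction at the vertex $C$. Up to Möbius equivalence, only finitely many cyclic orderings of the two lenses at $C$ relative to the $BD$-lens are possible (essentially: $C$ inside versus outside the $BD$-lens, and the rotational orientation of $C$'s two lenses), but each must be ruled out separately. I expect the obstruction in every case to be of the same flavor as the one in Lemma~\ref{lem:4knot}: the forced placement circle separates the plane into two regions, and the remaining edges have tangent stubs that point into incompatible regions, so any continuation as a circular arc crosses an already-placed edge. The non-plane Lombardi drawing shown in Figure~\ref{sfg:5_2_lom} should serve as a visual template illustrating one such forced crossing and help organize the case analysis.
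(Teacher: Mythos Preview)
Your treatment of $4_1$ is exactly right and matches the paper: the underlying multigraph contains $K_4$, so Lemma~\ref{lem:4knot} applies directly.

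For $5_2$, however, there is a genuine gap. You correctly identify the combinatorial structure, but the argument you outline will not close. Two issues:

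First, you propose to pin $A$ and $E$ to placement circles determined by $B,D$, and then pin $C$ to a placement circle determined by ``the already-fixed $A$ and $E$''. But $A$ and $E$ are not fixed---they each range over an entire circle---so the placement circle for $C$ is not determined, and you do not get a finite case analysis on cyclic orderings alone; within each ordering there is still a continuous family of configurations. The missing step is the observation (which the paper establishes explicitly) that the placement circle for $A$, the placement circle for $E$, and the placement circle for $C$ are in fact all the \emph{same} circle through $B$ and $D$. Once you know all five vertices are co-circular, a M\"obius transformation puts them on a line and the problem becomes genuinely finite.

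Second, and more seriously, you expect the obstruction to be of the $4_1$ flavour: tangent stubs pointing into incompatible regions so that no completing arc exists. The paper points out that this is precisely what does \emph{not} happen for $5_2$: all ten arcs can be drawn as circular arcs meeting at the correct angles (Figure~\ref{sfg:5_2_lom} is exactly such a drawing), so stub-direction incompatibility cannot be the obstruction. The actual obstruction is a quantitative one: after collapsing the $BD$-lens, the four arcs of the cycle $B\!-\!A\!-\!C\!-\!E\!-\!D$ have centres forming a rectangle with side lengths $r_1+r_2$ and $r_2+r_3$, and the relation $r_4-r_3=r_1+r_2$ forces the two inequalities $\delta>r_1+r_3$ and $\delta<r_4-r_2$ on the diagonal $\delta$ to be mutually exclusive. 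This metric computation is the heart of the argument, and nothing in your plan produces it.
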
 

\iffalse \begin{sketch}
	  For knot~$4_1$, the claim immediately follows from Lemma~\ref{lem:4knot}. For knot $5_2$ (see Fig.~\ref{fig:5_2}) we can argue that all five vertices must be co-circular. Unlike knot $4_1$ we can geometrically draw all edges of knot $5_2$ as Lombardi arcs, see the non-plane drawing in Fig.~\ref{sfg:5_2_lom}. However, by carefully considering the smooth chain of arcs $a-c-e-d-b$ and their radii, we can prove that this path must self-intersect in any Lombardi drawing, so the claim follows.	  
\end{sketch}
\fi 

\begin{proof}
	For knot~$4_1$, the claim immediately follows from Lemma~\ref{lem:4knot}.

	Knot $5_2$ again has the property that all five vertices must be co-circular in any Lombardi drawing. 
	To see this, we first consider the four vertices $a,b,c,d$ in Fig.~\ref{fig:5_2}. 
	Regardless of the placement of $a$ and $b$, we observe that $c$ and $d$ are both adjacent to $a$ and $b$ and need to enclose an angle of $90^\circ$ in the triangular face with~$a$ and~$b$. 
	This situation was already discussed in Lemma~\ref{lem:4knot} and yields a circle $C$ containing $a,b,c,d$; see Fig.~\ref{fig:4_1_fails}.
	The final vertex,~$e$, is adjacent to~$c$ and~$d$ so that we can determine the placement circle for $e$ with respect to~$c$ and~$d$. 
	As we know from Lemma~\ref{lem:4knot}, the two arc stubs of $d$ to be connected with~$e$ form angles of $45^\circ$ with~$C$ and point outwards. 
	Conversely, the two arc stubs of $c$ form angles of $45^\circ$ with~$C$ and point inwards. 
	If we take any point $p$ on~$C$ and draw circular arcs from the stubs of $c$ and $d$ to $p$, the four arcs meet at~$90^\circ$ angles in~$p$. 
	These are precisely the angles required at vertex $e$ and hence~$C$ is in fact the unique placement circle for $e$ by Property~\ref{prop:placement}.
This implies that actually all five vertices of $5_2$ must be co-circular in any Lombardi drawing.

	Unlike knot $4_1$, it is geometrically possible to draw all edges as Lombardi arcs; see Fig.~\ref{sfg:5_2_lom}. 
	However, as we will show, no plane Lombardi drawing of knot~$5_2$ exists. 
	By an appropriate Möbius transformation, we may assume that all five vertices are collinear on a circle of infinite radius. 
	Moreover, to avoid crossings, the order along the line is either $a,b,c,e,d$ or $a,b,d,e,c$ (modulo cyclic shifts and reversals). 
	Since both cases are symmetric, we restrict the discussion to the first one. 
	As a further simplification, we initially assume that $a$ and $b$ are placed on the same position such that the lens between $a$ and $b$ collapses; see Fig.~\ref{fig:5_2_overlap}.

	\begin{figure}[b]
		\centering
			\includegraphics[scale=1,page=3]{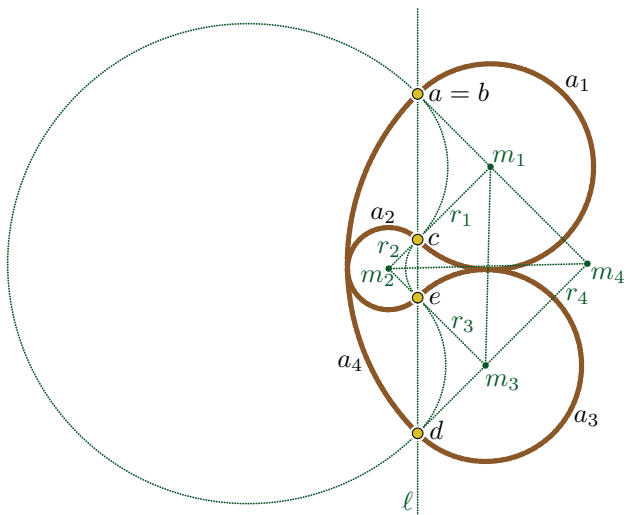}
		\caption{Knot $5_2$ has no plane Lombardi drawing.}
		\label{fig:5_2_overlap}
	\end{figure}

	This drawing consists of two intertwined 4-cycles, which intersect the line~$\ell$ at angles of $45^\circ$. 
	We argue that the 4-cycle depicted in Fig.~\ref{fig:5_2_overlap} cannot be drawn as a simple cycle without self intersections. 
	We consider the four centers $m_1, m_2, m_3, m_4$ of the circular arcs $a_1, a_2, a_3, a_4$ and their radii $r_1, r_2, r_3, r_4$. 
	Due to the fact that adjacent arcs meet on $\ell$ at an angle of $45^\circ$ and have the same tangent, 
	the four centers form the corners of a rectangle~$R$ with side lengths $r_1+r_2$ and $r_3+r_2$. 
	We can further derive that $r_4-r_3 = r_1+r_2$. Let $\delta$ be the length of a diagonal of~$R$. 
	For the arcs $a_1$ and $a_3$ to be disjoint, we require $\delta > r_1 + r_3$. 
	For $a_2$ and $a_4$ to be disjoint, we require $\delta < r_4 - r_2$. 
	But since $r_4 - r_2 = r_1 + r_3$, this is impossible and the 4-cycle must self-intersect. 

	Finally, if we move $b$ by some $\varepsilon > 0$ away from $a$ and towards $c$, 
	this will only decrease the radius $r_4$ and thus introduce proper intersections in the drawing. 
	Thus, knot $5_2$ has no plane Lombardi drawing.
\end{proof}

As the above lemma shows, even very small knots may not have a plane Lombardi drawing. 
However, most knots with a small number of crossings are indeed plane Lombardi.
In Fig.~\ref{fig:primes} in Appendix~\ref{app:smallknots}, we provide plane Lombardi drawings of all knots with up to eight vertices except~$4_1$ and~$5_2$. 
Most of these drawings can actually be obtained using the techniques from Section~\ref{sec:general} and~\ref{sec:circlepacking}.

\begin{theorem}\label{thm:smallknots}
	All prime knots with up to eight vertices other than $4_1$ and $5_2$ are plane Lombardi knots.
\end{theorem}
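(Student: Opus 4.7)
The plan is to proceed by a finite case analysis over Rolfsen's enumeration of all prime knots with at most eight vertices. Having excluded $4_1$ and $5_2$ by Lemma~\ref{lem:4_1}, the task reduces to exhibiting one plane Lombardi drawing for each of the remaining prime knots; these drawings are collected in Fig.~\ref{fig:primes} in Appendix~\ref{app:smallknots}, and the textual proof will only describe the technique used in each case rather than re-derive the coordinates.

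My first step, for each knot, is to compute the primal-dual multigraph pair $(M,M')$ whose medial graph realises the given 4-regular plane diagram, and to check whether one of $M$ or $M'$ is simple. Whenever this is the case, Theorem~\ref{thm:polyhedral} immediately produces an embedding-preserving plane Lombardi drawing via the primal-dual circle packing construction. I expect this test to succeed for the large majority of the list, since for $M$ and $M'$ of small size it is typical that at least one side of the duality avoids parallel edges. In particular, the trefoil $3_1$ is 4-regular polyhedral and is handled directly by Theorem~\ref{thm:3connsimple}.

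For the remaining knots in which both $M$ and $M'$ contain multi-edges --- the paper already flags $8_{12}, 8_{14}, 8_{15}, 8_{16}$ as falling into this regime while still being plane Lombardi --- I would apply two further tools. First, whenever the diagram decomposes as a knot sum of two smaller plane Lombardi multigraphs, Theorem~\ref{thm:summation} assembles a drawing from the pieces. Second, whenever the diagram contains a chain of lenses, Lemma~\ref{lem:lensmult} lets me start from a simpler \enquote{parent} multigraph and re-introduce the chain inside its lens. For any knot that still resists these tools, I would hand-construct a drawing by exploiting its dihedral or cyclic symmetry: place a symmetric subset of vertices on a common circle and use Property~\ref{prop:placement} to pin down the placement circles of the remaining vertices, then verify non-crossing geometrically.

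The main obstacle is precisely this last, ad hoc step. The general machinery of Theorem~\ref{thm:polyhedral}, Theorem~\ref{thm:summation}, and Lemma~\ref{lem:lensmult} covers most of the list uniformly, but a handful of knots must be drawn and certified individually. Since each such certification amounts to checking that the placement circles dictated by Property~\ref{prop:placement} are mutually compatible and that the resulting arcs are pairwise disjoint, no single case is deep; the proof is therefore a finite catalogue rather than a unified construction, and its correctness reduces to inspection of the drawings in Fig.~\ref{fig:primes}.
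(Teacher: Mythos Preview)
Your approach is exactly the paper's: Theorem~\ref{thm:smallknots} is proved by exhibiting the drawings in Fig.~\ref{fig:primes}, with the remark that most of them are obtainable from the tools of Sections~\ref{sec:general} and~\ref{sec:circlepacking}. Two small slips in your plan are worth correcting. First, the trefoil $3_1$ is \emph{not} polyhedral---its underlying $4$-regular graph on three vertices has a double edge between every pair---so Theorem~\ref{thm:3connsimple} does not apply; it is instead covered by Theorem~\ref{thm:polyhedral}, since its dual $M'$ is the (simple) triangle. Second, Theorem~\ref{thm:summation} is of no use here: a prime knot by definition admits no nontrivial knot sum decomposition, so none of the diagrams in the list can be split along a pair of edges into two smaller $4$-regular pieces. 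The remaining tools (Theorem~\ref{thm:polyhedral}, Lemma~\ref{lem:lensmult}, and ad hoc symmetric constructions via Property~\ref{prop:placement}) are precisely what the paper relies on.
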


Note that Theorem~\ref{thm:smallknots} implies that each of these knots has
a combinatorial embedding that supports a plane Lombardi drawing. It is not true, however,
that every embedding admits a plane Lombardi drawing. In fact, the knot $7_5$, as a member of an infinite family of knots and links, has 
an embedding that cannot be drawn plane Lombardi. 
This family is derived from the knot $5_2$ and gives rise to the next theorem.

\begin{theorem}\label{thm:inf_family}
    There exists an infinite family of prime knots and links 
    that have embeddings that do not support plane Lombardi drawings.
\end{theorem}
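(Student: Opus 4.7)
The plan is to construct the infinite family by iteratively extending knot~$5_2$ with additional ``twist'' or ``rung'' structures, and to argue that each member of the family inherits the obstruction used in the proof of Lemma~\ref{lem:4_1}. Concretely, I would parameterize the family as $\{K_n\}_{n \ge 1}$ with $K_1 = 5_2$ and $K_2 = 7_5$, where $K_{n+1}$ is obtained from $K_n$ by inserting one additional crossing pair into a designated twist region of $K_n$. Primeness (for the knot members) and the fact that the $K_n$ are pairwise distinct follow from the classical rational-tangle classification.

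The key structural step is to show that in every plane Lombardi drawing of $K_n$, in its chosen embedding, all vertices must lie on a common circle $C$. To do this I would isolate inside $K_n$ the four-vertex configuration $a,b,c,d$ that Lemma~\ref{lem:4_1} uses for $5_2$; by the argument of Lemma~\ref{lem:4knot} together with Property~\ref{prop:placement}, this configuration forces a unique placement circle $C$ containing all four vertices. I would then propagate co-circularity inductively along the twist chain: each newly added vertex is a common neighbor of two already-placed vertices whose tangents are aligned in exactly the way required by Property~\ref{prop:placement}, so the new placement circle is again $C$.

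With all vertices on $C$, I would apply a Möbius transformation sending $C$ to a straight line $\ell$, so that $K_n$ becomes a sequence of $n+1$ intertwined $4$-cycles whose arcs leave $\ell$ at angles of $45^\circ$ with aligned tangents. Each such $4$-cycle then satisfies exactly the geometric conditions that led to a contradiction in the proof of Lemma~\ref{lem:4_1}: the four arc centers form a rectangle with side lengths $r_1+r_2$ and $r_2+r_3$, and the identity $r_4 - r_3 = r_1 + r_2$ yields the incompatible diagonal bounds $\delta > r_1 + r_3$ and $\delta < r_4 - r_2$. Consequently at least one $4$-cycle in $K_n$ must self-intersect, ruling out a plane Lombardi drawing for every $n \ge 1$.

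The main obstacle will be the inductive propagation of co-circularity across an arbitrarily long twist chain, since the M\"obius freedom of a Lombardi drawing does not, a priori, pin newly added vertices onto the same circle as the initial $(a,b,c,d)$ quadruple. I plan to handle this by choosing the embedding of $K_n$ so that every inserted rung creates a fresh triangular face of exactly the $(a,b,c,d)$ type used in Lemma~\ref{lem:4_1}; this ensures that the placement-circle argument can be applied verbatim at each step and that the induction goes through, forcing co-circularity globally and hence the radius contradiction.
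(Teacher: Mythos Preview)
Your inductive propagation of co-circularity breaks at the very first step beyond $\{a,b,c,d\}$. In the twist extension of $5_2$, each added vertex $e_i$ is joined to its predecessor by a \emph{lens} (two parallel edges), not by single edges to two distinct already-placed vertices; Property~\ref{prop:placement} is therefore inapplicable. Worse, a lens imposes \emph{no} positional constraint on its far endpoint: if two arcs leave a vertex $u$ in directions differing by $90^\circ$, then for any placement of the other endpoint $v$ the tangent--chord reflection $\alpha \mapsto 2\theta - \alpha$ (with $\theta$ the chord direction) sends the pair to directions still differing by $90^\circ$, so the required $90^\circ$ angle at $v$ is automatic. Hence $e_1$, which sits in a lens with $c$, is free to lie anywhere off the circle $C$, and the chain can wander arbitrarily before reconnecting to $d$. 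Your proposed remedy---choosing the embedding so that each inserted rung creates a triangular face of the $(a,b,c,d)$ type---is not available in this family: the twist region of every $K_n$ consists entirely of bigons, with no triangular faces at all.

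The paper bypasses co-circularity of the chain. It duplicates the bottom vertex of $5_2$ into two detached copies $e_1,e_2$, each carrying two free stubs, and claims that this $6$-vertex graph-with-stubs already has no plane Lombardi drawing; whatever twist chain one uses to reconnect the stubs is then irrelevant to the obstruction. This reduces the statement for the whole family to a single finite claim about the fixed core $\{a,b,c,d,e_1,e_2\}$, in which $a,b,c,d$ are still forced onto a common circle but $e_1,e_2$ need not be. If you want to pursue a rigorous version, that finite claim is where the work lies, not in extending co-circularity along the chain.
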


\begin{proof}
    Consider again the knot $5_2$ (Fig.~\ref{fig:impossible_knots-1}). By Lemma~\ref{lem:4_1}, it has no Lombardi drawing. 
	We claim that if we duplicate the bottom vertex and vertically detach the two copies completely, 
	the resulting graph (using four stubs to ensure the correct angular resolution) 
    still has no plane Lombardi drawing (Fig.~\ref{fig:impossible_knots-2}). 
    As a result, we can construct an infinite family of forms of knots and links without plane Lombardi drawings by vertically twisting the connections between the duplicated vertices.
	The first two smallest members of this 
    family are the link~$L_6a_1$, consisting of two interlinked figure-8's 
    (Fig.~\ref{fig:impossible_knots-3}), and the knot $7_5$ 
    (Fig.~\ref{fig:impossible_knots-4}). \end{proof}

\begin{figure}[t]
	\centering
  \subcaptionbox{$5_2$\label{fig:impossible_knots-1}}{\includegraphics[page=2]{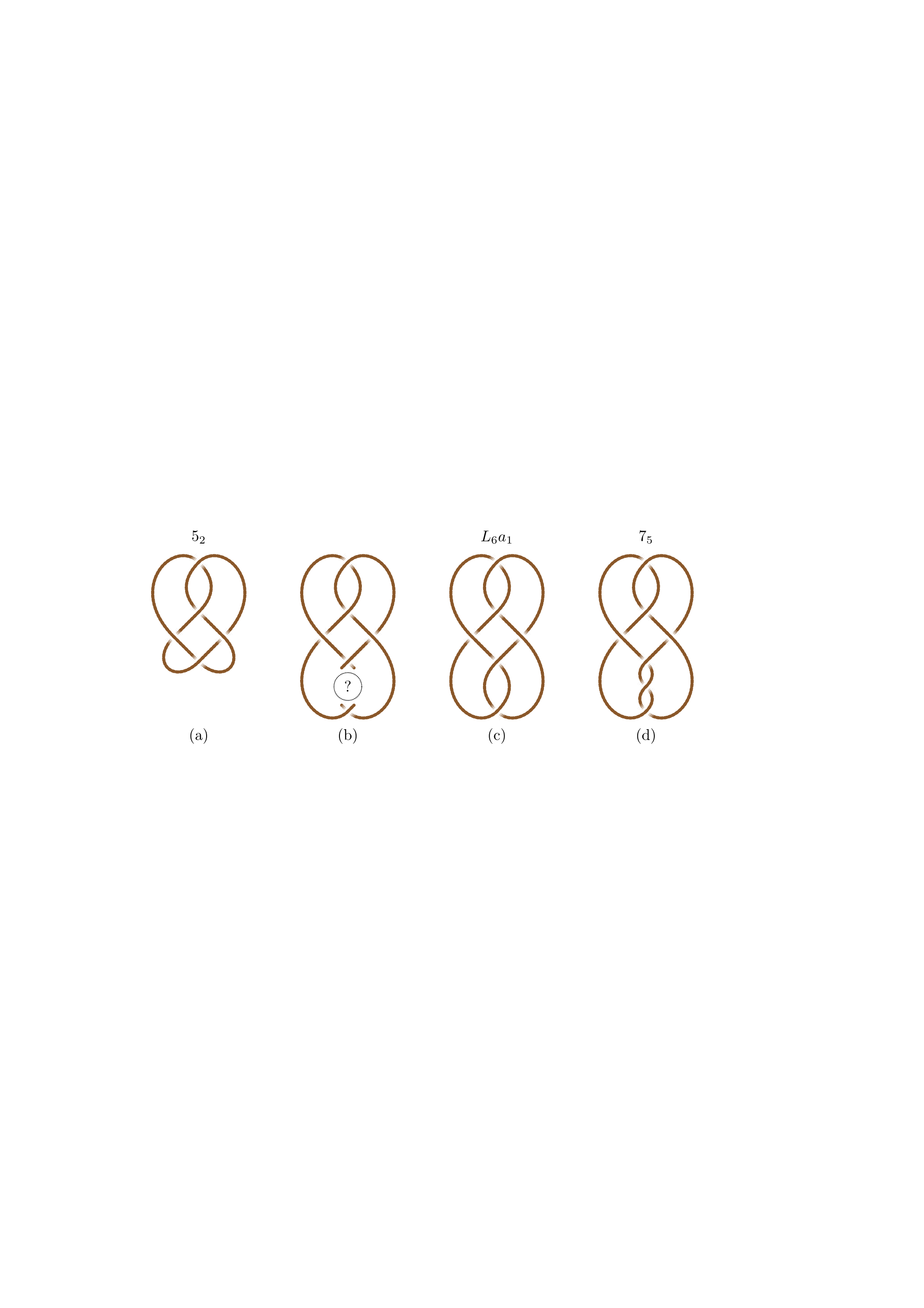}}
  \hfill
  \subcaptionbox{\label{fig:impossible_knots-2}}{\includegraphics[page=3]{impossible_knots}}
  \hfill
  \subcaptionbox{$L_6 a_1$\label{fig:impossible_knots-3}}{\includegraphics[page=4]{impossible_knots}}
  \hfill
  \subcaptionbox{$7_5$\label{fig:impossible_knots-4}}{\includegraphics[page=5]{impossible_knots}}
	\caption{A family of non-Lombardi knots and links.}
	\label{fig:impossible_knots}
\end{figure}

However, the same family, starting with its six-vertex member $L_6a_1$, does have plane Lombardi drawings with a different embedding.

\begin{corollary}\label{cor:family_lombardi}
	The prime knots and links in the family of Theorem~\ref{thm:inf_family} with six or more vertices all have an embedding with a plane Lombardi drawing.
\end{corollary}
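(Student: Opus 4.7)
The plan is to exhibit a plane Lombardi drawing for the smallest family member $L_6 a_1$ in an embedding different from the one in Figure~\ref{fig:impossible_knots-3}, and then to generate the rest of the family by repeated application of Lemma~\ref{lem:lensmult}. The key observation is that the twists used in the construction of Theorem~\ref{thm:inf_family} to pass from one family member to the next are combinatorially the same as chaining lenses: each additional twist inserts a digon between the two ``duplicated'' vertices. So if we can place these two vertices of $L_6 a_1$ so that they are joined by a single lens in a plane Lombardi drawing, lens multiplication produces all higher members in the appropriate embedding.

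To construct the base drawing, I would choose a symmetric embedding of $L_6 a_1$ in which the pair of vertices that came from duplicating the bottom vertex of $5_2$ are drawn close together, joined by a single lens, with the remaining four crossings of the two interlocked figure-eights arranged symmetrically around it. A direct construction places the six crossings on two concentric circles (or on an appropriate reflection-symmetric configuration) and routes each edge as a circular arc meeting the required $90^\circ$ angles; rotational and reflective symmetry dispatches all angular constraints simultaneously. As an alternative, one can verify that this embedding realizes $L_6 a_1$ as the medial graph of a primal-dual pair in which at least one of the two graphs is simple, so that Theorem~\ref{thm:polyhedral} produces the required drawing directly.

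Once a base plane Lombardi drawing of $L_6 a_1$ containing its distinguished lens is in hand, Lemma~\ref{lem:lensmult} applied to this lens yields a plane Lombardi drawing in which the lens is replaced by a chain of $k$ smaller lenses, for any $k \geq 1$. This drawing is combinatorially the $(5+k)$-vertex family member obtained from $L_6 a_1$ by $k-1$ additional twists, and the plane Lombardi property is preserved throughout by Lemma~\ref{lem:lensmult}. The main obstacle is the base case: the alternative embedding of $L_6 a_1$ must be identified and shown to admit either an explicit symmetric construction or to satisfy the hypotheses of Theorem~\ref{thm:polyhedral}. Once this base drawing is secured, the extension to the entire infinite family is immediate.
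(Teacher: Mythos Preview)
Your proposal is correct and follows essentially the same route as the paper: exhibit a plane Lombardi drawing of $L_6a_1$ in an alternative embedding in which the twist region appears as a lens, then obtain every larger family member by repeated application of Lemma~\ref{lem:lensmult}. The paper additionally displays an explicit drawing of $7_5$ (alongside $L_6a_1$, $L_8a_{11}$, and $9_6$ in Fig.~\ref{fig:familiy_lombardi}) rather than deriving it, but since it also states that each twist corresponds to a lens multiplication, your single base case $L_6a_1$ is sufficient; the only part you leave open---actually producing that base drawing---is exactly what the paper supplies via the figure.
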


\begin{figure}[tb]
	\centering
    \subcaptionbox{$L_6 a_1$}{\includegraphics[page=3]{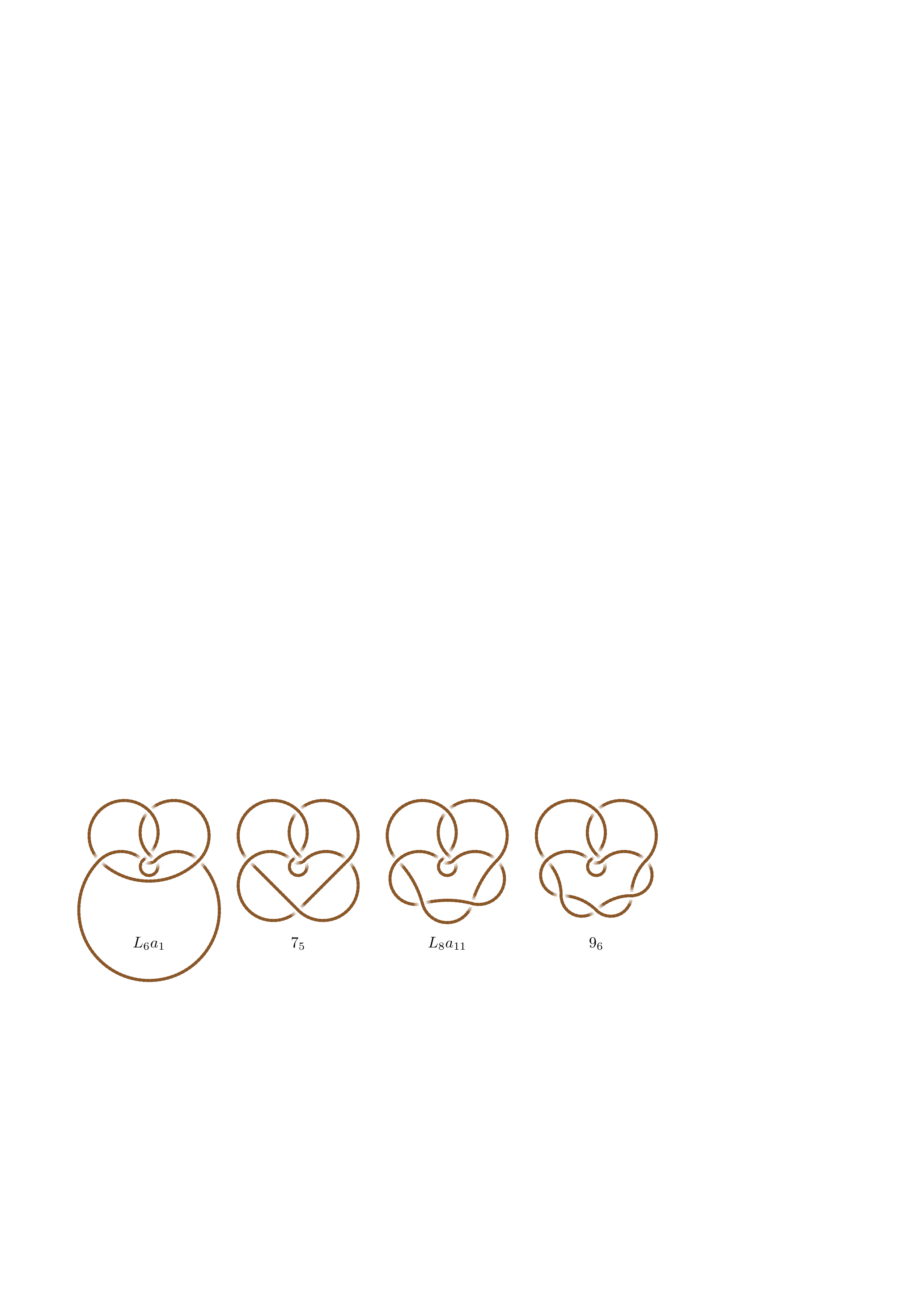}}
    \hfill
    \subcaptionbox{$7_5$}{\includegraphics[page=4]{cor11}}
    \hfill
    \subcaptionbox{$L_8 a_{11}$}{\includegraphics[page=5]{cor11}}
    \hfill
    \subcaptionbox{$9_6$}{\includegraphics[page=6]{cor11}}
		
	\caption{Plane Lombardi drawings for Corollary~\ref{cor:family_lombardi} obtained by lens multiplication.}
	\label{fig:familiy_lombardi}
\end{figure}

\begin{proof}
	For the link $L_6a_1$ and the knot $7_5$ we provide plane Lombardi drawings in Fig.~\ref{fig:familiy_lombardi}.
	Observe that the incremental twists defining the family in Theorem~\ref{thm:inf_family} are now done at the bottom part of the knot/link diagrams in Fig.~\ref{fig:familiy_lombardi}. Since each twist now corresponds to a lens multiplication, we obtain from Lemma~\ref{lem:lensmult} that all other knots and links in the family also have plane Lombardi drawings. Figure~\ref{fig:familiy_lombardi} shows the respective 8-vertex link $L_8a_{11}$ and 9-vertex knot $9_6$.
\end{proof}

Interestingly, the two knots $4_1$ and $5_2$ without a plane Lombardi drawing belong to the known family of \emph{twist knots}, which are knots formed by taking a closed loop, twisting it any number of times and then hooking up the two ends together. All other twist knots do have plane Lombardi drawings though.

\begin{corollary}
	All twist knots except $4_1$ and $5_2$ are plane Lombardi knots.
\end{corollary}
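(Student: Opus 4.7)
The plan is to handle the trefoil as a base case and derive all larger twist knots from a plane Lombardi drawing of $6_1$ via iterated lens multiplication. First I would recall the structure of a twist knot with $n$ half-twists: a clasp contributing two vertices plus a vertical stack of $n$ twist crossings, where consecutive crossings in the twist stack are joined by a pair of parallel edges (a lens), because every two adjacent twist crossings involve the same two strands. Consequently, a twist knot with $n$ twists has, in its standard minimum-crossing diagram (with $n+2$ vertices), a chain of $n-1$ lenses inside the twist region.

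With this structure in mind, I would treat the small cases directly and then bootstrap. The trefoil $3_1$ (one twist, three vertices) admits a plane Lombardi drawing by Theorem~\ref{thm:smallknots}, while $4_1$ and $5_2$ are excluded from the statement by Lemma~\ref{lem:4_1}. For every twist knot with at least four twists, i.e.\ $6_1, 7_2, 8_1, 9_2, \ldots$, I would start from the plane Lombardi drawing of $6_1$ provided in Figure~\ref{fig:primes} (which exists by Theorem~\ref{thm:smallknots}). Since the twist region of $6_1$ already realises three consecutive lenses, I can pick one of them and apply Lemma~\ref{lem:lensmult} repeatedly: each lens multiplication inserts one new degree-four vertex into the chain, producing a plane Lombardi drawing of the twist knot with one additional twist. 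After $k$ applications we obtain the twist knot with $4+k$ twists, so every twist knot with at least four twists is covered.

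The main obstacle will be verifying that the reference drawing of $6_1$ from Figure~\ref{fig:primes} really does present the twist stack as an explicit chain of lenses between consecutive twist vertices in the sense of Lemma~\ref{lem:lensmult}, so that the lens-multiplication step applies without modification. If that is not already apparent in the figure, I would instead construct an explicit plane Lombardi drawing of $6_1$ that exhibits this structure, combining a Möbius normalisation from Theorem~\ref{thm:summation} with the placement-circle argument from Property~\ref{prop:placement} to align the twist vertices along a common bisecting circular arc; by construction the lenses would then be present and Lemma~\ref{lem:lensmult} would close the induction on the number of twists.
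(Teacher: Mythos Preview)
Your proposal is correct and follows essentially the same route as the paper: handle $3_1$ directly, exclude $4_1$ and $5_2$, and obtain all larger twist knots from a plane Lombardi drawing of $6_1$ by iterated lens multiplication (Lemma~\ref{lem:lensmult}) along the twist chain. The paper additionally points to the drawings of $7_2$ and $8_1$ in Figure~\ref{fig:primes} to make the progressive lens pattern visually evident, but the underlying argument is the same as yours.
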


\begin{proof}
	We know from Theorem~\ref{thm:smallknots} that $4_1$ and $5_2$ are not plane Lombardi knots.
	The smallest twist knot is $3_1$ and the other twist knots with at most eight vertices are $6_1$, $7_2$, and $8_1$, which all have a plane Lombardi drawing as shown in Fig.~\ref{fig:primes}. The progressive twisting  pattern defining the twist knots and seen in the drawings of $6_1$, $7_2$, and $8_1$ can easily be extended for all twist knots by incrementally applying the lens multiplication of Lemma~\ref{lem:lensmult}.
\end{proof}

\section{Plane 2-Lombardi Drawings of Knots and Links}
\label{sec:2lombardi}

Since not every knot admits a plane Lombardi drawing, we now consider plane 
2-Lombardi drawings; see Fig.~\ref{fig:knot41-smooth} for an example.
Bekos et al~\cite{Smoothorthogonalintro} recently introduced 
\emph{smooth orthogonal drawings of complexity $k$}. 
These are drawings where every edge consists of a sequence of at most $k$ 
circular arcs and axis-aligned segments that meet smoothly with horizontal or
vertical tangents, and where at every 
vertex, each edge emanates either horizontally or vertically and no two edges 
emanate in the same direction. For the special case of 4-regular graphs,
every smooth orthogonal drawing of complexity~$k$ is also a plane $k$-Lombardi drawing.
Alam et al.~\cite{Smoothorthogonal4planar} showed that every plane graph with maximum degree 4 can be
redrawn as a plane smooth-orthogonal drawing of complexity 2. 
Their algorithm takes as input an orthogonal drawing produced by the algorithm 
of Liu et al.~\cite{lms-la2be-DAM98} and transforms it
into a smooth orthogonal drawing of complexity~2. 
We show how to modify the algorithm by Liu et al., to compute an orthogonal drawing for
a 4-regular plane multigraph and then use the algorithm by Alam et al.\ to transform
it into a smooth orthogonal drawing of complexity~2.

\begin{figure}[t]
	\centering
	\subcaptionbox{\label{fig:knot41-smooth}}{\includegraphics[page=2,scale=1, angle=90]{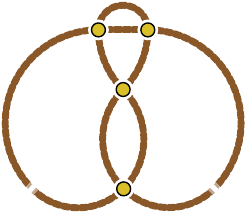}}
  \hfil
  \subcaptionbox{\label{fig:knot41-2lomb}}{\includegraphics[page=1,scale=1]{knot4_1}}
  \hfil
	\subcaptionbox{\label{fig:knot41-epslomb}}{\includegraphics[scale=1.0]{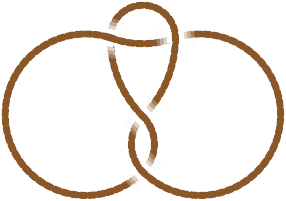}}
	\caption{Drawings of knot $4_1$ which by Lemma~\ref{lem:4_1} does not admit a plane Lombardi drawing. (a)~A smooth orthogonal drawing of complexity~2, (b)~a different plane 2-Lombardi drawing, and (c)~a plane $\eps$-angle Lombardi drawing.}
  \label{fig:2lombardi}
\end{figure}

\begin{theorem}\label{thm:smoothorthogonal}
  Every biconnected 4-regular plane multigraph $G$ admits a plane 2-Lombardi drawing with the same embedding.
\end{theorem}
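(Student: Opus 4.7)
The plan is to follow exactly the two-stage pipeline sketched in the paragraph immediately preceding the theorem. Stage~1: construct a plane orthogonal drawing of $G$ preserving its given combinatorial embedding, by adapting the bend-minimization algorithm of Liu et al. Stage~2: feed this drawing into the algorithm of Alam et al., which converts any plane orthogonal drawing of a 4-planar graph into a plane smooth-orthogonal drawing of complexity~2 with the same embedding and the same axis-parallel direction at every edge stub. Because $G$ is 4-regular, the four stubs at every vertex of such a drawing are in the four distinct axis-parallel directions, so consecutive incident edges form a $90^\circ$ angle, and each edge is a concatenation of at most two circular arcs meeting with a common tangent (axis-parallel line segments being degenerate infinite-radius arcs). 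By the observation recalled in the excerpt, this is precisely a plane 2-Lombardi drawing of $G$.

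For Stage~1, the algorithm of Liu et al.\ is stated for simple 4-planar graphs, whereas $G$ is a multigraph with lens faces. I would therefore first subdivide every edge of $G$ once, introducing a degree-2 dummy vertex per edge. The resulting graph $G^\ast$ is simple, biconnected, and 4-planar, and it carries the same combinatorial embedding as $G$; hence Liu et al.'s algorithm applies and produces an embedding-preserving plane orthogonal drawing of~$G^\ast$. Each dummy vertex has exactly two incident orthogonal polyline segments. Any $90^\circ$ bend located at a dummy vertex can be pushed onto an adjacent rectilinear sub-path by a local modification of the orthogonal representation that preserves planarity, the embedding, and the axis-parallel tangent directions at the original vertices of $G$; after this normalization every dummy is a straight pass-through, so contracting the dummy vertices yields a plane orthogonal drawing of $G$ itself with the desired embedding and with each edge a (possibly bent) axis-parallel polyline.

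The second stage is then a direct application of the algorithm of Alam et al.\ to this orthogonal drawing, giving a plane smooth-orthogonal drawing of complexity~2 of $G$; by the argument of the first paragraph this is the claimed plane 2-Lombardi drawing. The main technical obstacle is Stage~1: one needs to verify that the flow-based orthogonalization of Liu et al.\ remains feasible after the subdivision, in particular that the capacity constraints of the small lens faces (which have length~$4$ in $G^\ast$) can always be satisfied, and that the bend-pushing step can always move bends away from dummy vertices, even when both sides of a lens are involved. These are local checks on the orthogonal representation; provided they are carried out, the rest of the argument is mechanical.
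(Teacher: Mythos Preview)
Your two-stage pipeline is the same as the paper's, but Stage~1 contains a real gap, and it stems from a mischaracterization of the Liu et al.\ algorithm. That algorithm is \emph{not} a flow-based bend-minimizer; it is an $st$-ordering based construction whose output has a very specific shape: every edge is a polyline of at most three axis-aligned segments (one exceptional edge may have four), and there are no \emph{$S$-shapes} (three-segment edges whose two bends turn in opposite directions). The Alam et al.\ transformation is \emph{not} a generic converter of orthogonal drawings into smooth-orthogonal ones; it relies on exactly these structural guarantees on its input.

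Your subdivision trick breaks both guarantees. After running Liu et al.\ on $G^\ast$, each half-edge of $G$ may already carry up to three segments. Contracting a dummy vertex concatenates two such polylines, so an edge of $G$ can end up with five or more segments, and even if neither half is an $S$-shape the concatenation may well be one. The ``bend-pushing'' you propose only addresses a possible corner at the dummy itself; it does nothing about the accumulated bends or about newly created $S$-shapes. Consequently the drawing you hand to Alam et al.\ need not satisfy its preconditions, and the second stage is not justified.

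The paper avoids this entirely: instead of making $G$ simple, it runs Liu et al.\ directly on the multigraph. The only step of that algorithm sensitive to multi-edges is the $S$-shape elimination, and the paper checks, via the port-assignment rules induced by the $st$-ordering, that a pair of parallel edges between $u$ and $v$ can never both use a left port at one endpoint and a right port at the other; hence no multi-edge is an $S$-shape, the elimination step goes through unchanged, and the output is a legal input for Alam et al.
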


\begin{proof}
  The algorithm of Alam et al.~\cite{Smoothorthogonal4planar} takes as input
  an orthogonal drawing produced by the algorithm of Liu et al.~\cite{lms-la2be-DAM98} and transforms it
  into a smooth orthogonal drawing of complexity~2. The drawings by Liu et al.\ 
  have the property that every edge consists of at most~3 segments (except
  at most one edge that has~4 segments), and it contains no \emph{S-shapes}, that is, it contains
  no edge that consists of~3 segments where the bends are in opposite direction.
  To show this theorem, we only have to show that we can apply the algorithm
  of Liu et al.\ to 4-regular plane multigraphs to produce a drawing with the
  same property.
  
  Liu et al.\ first choose two vertices~$s$ and~$t$ and compute an 
  \emph{$st$-order} of the input graph. An $st$-order is an ordering 
  $(s=1,2,\dots,n=t)$ of the vertices such that every~$j$ ($2<j<n-1$) 
  has neighbors $i$ and $k$ with $i<j<k$. We can obtain an $st$-order for a
  multigraph by removing any duplicate edges. Liu et al.\ then direct all
  edges according to the $st$-order from a vertex with lower $st$-number to a
  vertex with higher $st$-number.
  
  According to the rotation system implied by the embedding of the input graph,
  Liu et al.\ then assign a port to every edge around a vertex such that
  every vertex (except~$t$) has an outgoing edge at the top port, every vertex
  (except~$s$) has an incoming edge at the top port, every vertex has an
  outgoing edge at the right port if and only if it has at least~2 outgoing
  edges, and every vertex has an incoming edge at the left port if and only
  if it has at least~2 incoming edges. They further make sure 
  the edge that uses the bottom port at~$s$
  is incident to the vertex~$r$ with $st$-number~2, and that the 
  edge~$(s,t)$, if it exists, uses the left port at~$s$ and the top port at~$t$;
  this edge is the only one drawn with~4 segments, but can still be transformed
  into a smooth orthogonal edge of complexity~2 by Alam el al.\ .
  They place the vertices~$s$ and~$r$ on the $y$-coordinate~2 and every
  other vertex on the $y$-coordinates
  equal to their $st$-number. The shape of the edges is then 
  implied by the assigned ports at their incident vertices. By placing vertices
  that share an edge with a bottom port and a top port above each other,
  there can be no S-shapes with two vertical segments, but there can still be
  S-shapes with two horizontal segments if an edge uses a left port and a right 
  port. To eliminate these S-shapes, the consider sequences of S-shapes, that
  is, paths in the graphs that are drawn only with S-shapes, and move the
  vertices vertically such that they all lie on the same $y$-coordinate.
  Up to the elimination of S-shapes, every step of the algorithm can 
  immediately applied to multigraphs. We choose~$s$ and~$t$ as vertices
  on the outer face of the given embedding such that the edge~$(s,t)$
  exists. We claim that then no multi-edge can be drawn as an S-shape.
  
  Let~$u$ and~$v$ be two vertices in~$G$ with at least two edges~$e_1$ and~$e_2$
  between them. \wlogc, let~$u$ have a lower $st$-number than~$v$.
  Then both~$e_1$ and~$e_2$ are directed from~$u$ to~$v$. 
  If~$u=s$ and~$v=r$, then both vertices are placed on the same $y$-coordinate,
  so there can be no S-shape between them. If $u=s$ and~$v=t$, then there
  is an edge that uses the left port at~$u$ and the top port at~$v$;
  since all multi-edges have to be consecutive around~$u$ and~$v$, there
  can be no edge between them that uses a left port and a right port.
  Otherwise, assume that~$e_2$ is the successor of~$e_1$ in counter-clockwise 
  order around~$u$ (and hence the predecessor of~$e_1$ in counter-clockwise 
  order around~$v$). If~$e_1$ uses the right port at~$u$ and the left port 
  at~$v$, then~$e_2$ has to use the top port at~$v$, which cannot occur by
  the port assignment. If~$e_1$ uses the left port at~$u$ and the right port 
  at~$v$, then~$e_2$ has to use the bottom port at~$u$, which also cannot
  occur by the port assignment. Thus, neither~$e_1$ nor~$e_2$ is drawn as 
  an S-shape and every sequence of S-shapes consists only of simple edges.
  Hence, we can use the algorithm of Liu et al.\ to produce an orthogonal
  drawing with the desired property for every 4-regular plane multigraph and
  then use the algorithm of Alam et al.\ to transform it into a smooth
  complexity drawing of complexity~2 which is also a plane 2-Lombardi drawing.
\end{proof}

\section {Plane Near-Lombardi Drawings}
\label{sec:almost}
  
  Since not all knots admit a plane Lombardi drawing, in this section we relax the 
  perfect angular resolution constraint. We say that a knot (or a link) is
  \emph{near-Lombardi} if it admits a drawing for every $\eps>0$ such that
  \begin{enumerate}[nolistsep]
    \item All edges are circular arcs,
    \item Opposite edges at a vertex are tangent;
    \item The angle between crossing pairs at each vertex is at least $90^\circ - \eps$. 
  \end{enumerate}
  We call such a drawing a \emph{$\eps$-angle Lombardi drawing}. Note that a
  Lombardi drawing is essentially a $0$-angle Lombardi drawing. For example, the
  knot~$4_1$ does not admit a plane Lombardi drawing, but it admits a plane $\eps$-angle
  Lombardi drawing, as depicted in Fig.~\ref{fig:knot41-epslomb}.
  
Let~$\Gamma$ be an $\eps$-angle Lombardi drawing of a 4-regular graph.
  If each angle described by the tangents of adjacent circular arcs at a vertex 
  in~$\Gamma$ is exactly~$90^\circ+\eps$ or~$90^\circ-\eps$, then we call~$\Gamma$
  an \emph{$\eps$-regular Lombardi drawing}. Note that any Lombardi drawing is a
  $0$-regular Lombardi drawing.
  
  We first extend some of our results for plane Lombardi drawings to
  plane $\eps$-angle Lombardi drawings. The following Lemma is a
  stronger version of Theorem~\ref{thm:polyhedral}.

  \begin{lemma}\label{lem:polyhedraleps}
    Let~$G=(V,E)$ be a biconnected 4-regular plane multigraph and let~$M$ and~$M'$
    be the primal-dual multigraph pair for which~$G$ is the medial graph. If 
    one of~$M$ and~$M'$ is simple, then~$G$ admits a plane $\eps$-regular 
    Lombardi drawing preserving its embedding for every $0^\circ\le\eps<90^\circ$. 
  \end{lemma}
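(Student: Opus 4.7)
The plan is to mimic the three-step strategy from the proof of Theorem~\ref{thm:polyhedral}, modifying only the angular placement of circular arcs to introduce the parameter $\eps$. Assume \wlogs\ that $M$ is simple. First, augment $M = M_0$ to a polyhedral graph $M_p$ by iteratively adding edges as before; the corresponding dual and medial graphs are modified exactly as in Figs.~\ref{sfg:ext_1}--\subref{sfg:ext_2}. Second, apply the Brightwell--Scheinerman primal-dual circle packing $C(M_p, M'_p)$, but this time construct an $\eps$-regular drawing $\Gamma_p$ of $G_p$ from it. Third, iteratively undo the augmentation, maintaining suitable lens invariants analogous to (i)--(iv) in the proof of Theorem~\ref{thm:polyhedral}.

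The key new ingredient lies in step two. Recall that each vertex of $G_p$ sits at a tangent point of two primal disks that is simultaneously a tangent point of two dual disks, and that at this point the primal and dual circles are locally perpendicular, forming four right-angled quadrants. In the original construction each arc $c(g)$ bisects its quadrant, meeting $\partial d(g)$ and $\partial d'(g)$ at $45^\circ$, which yields the four $90^\circ$ angles of a Lombardi drawing. For the $\eps$-regular version, we instead draw $c(g)$ as the unique circular arc through the same two tangent endpoints of $d(g)$ that makes angle $45^\circ + \eps/2$ with $\partial d(g)$ and angle $45^\circ - \eps/2$ with $\partial d'(g)$ at each endpoint, choosing the sign of the offset consistently (say, always rotating away from the bisector toward $\partial d(g)$). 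Since the four arcs meeting at a vertex $v$ of $G_p$ lie in the four quadrants and the primal circles share a common tangent at $v$ (as do the dual circles), applying the same $\eps/2$ rotation in each quadrant produces tangent directions that form the alternating angles $90^\circ+\eps$ and $90^\circ-\eps$, with opposite arcs sharing a tangent line. Thus $\Gamma_p$ is an $\eps$-regular Lombardi drawing.

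To accompany this we redefine the lens region: $\ell(g)$ is now bounded by $\partial d(g)$ together with the unique circle $\partial d''(g)$ through the two tangent endpoints that meets $\partial d(g)$ at angle $90^\circ - \eps$ (rather than $90^\circ$) on the appropriate side, so that $c(g)$ still bisects the angles at its endpoints within $\ell(g)$. One checks that for $\eps < 90^\circ$ this auxiliary disk is well defined and that the lens regions vary continuously with $\eps$ from their configuration at $\eps = 0$. Because the lens regions at $\eps = 0$ are interior-disjoint and their union covers $\partial d(g)$ in the correct cyclic order (invariants (iii) and (iv)), and because the modification is a strictly local reshaping near each pair of tangent points, the same invariants continue to hold for every $\eps \in [0, 90^\circ)$.

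The third step, iteratively reverting the edge additions from $M_p$ back to $M$, proceeds exactly as in Theorem~\ref{thm:polyhedral}: when removing an added edge, two consecutive arcs $c(g_1), c(g_2)$ along some $\partial d(v_1)$ are merged into a single arc $c(g)$ with endpoints at the outer tangent points and with tangent directions inherited from $c(g_1)$ and $c(g_2)$; the new lens region $\ell(g)$ is the unique intersection of $d(v_1)$ with a disk whose boundary meets $\partial d(v_1)$ at angle $90^\circ - \eps$ in those endpoints. The main obstacle is verifying that, under the $\eps$-tilted construction, the merged arc still lies inside $\ell(g) \supseteq \ell(g_1) \cup \ell(g_2)$ and that $\ell(g)$ remains disjoint from all other lens regions; this follows from a continuity argument that deforms $\partial d''(g)$ from the $\eps = 0$ configuration, together with the observation that the boundary of $d(v_1)$ is still covered by the new lens regions in the correct cyclic order. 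Once the invariants are re-established, the resulting $\Gamma$ is the desired plane $\eps$-regular Lombardi drawing of $G$ preserving its embedding.
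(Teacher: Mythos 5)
Your proposal is correct and follows essentially the same route as the paper: the paper likewise reruns the circle-packing construction of Theorem~\ref{thm:polyhedral} with every arc tilted by $\eps/2$ inside its lens and then merges arcs exactly as before, the only cosmetic difference being that it fixes the tilt direction via an explicit edge orientation (counter-clockwise around the faces belonging to~$M$) rather than your geometric rule of tilting toward the primal circle --- the two rules coincide up to the sign of~$\eps$. One small internal inconsistency to fix: you say the arc makes angle $45^\circ+\eps/2$ with $\partial d(g)$ while simultaneously \enquote{rotating toward $\partial d(g)$} and bisecting a lens of opening angle $90^\circ-\eps$; the latter two conditions require the angle $45^\circ-\eps/2$ to $\partial d(g)$, and that consistent choice is the one that keeps the arc inside the shrunken lens for all $\eps<90^\circ$.
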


\iffalse \begin{sketch}
    We first direct the edges such that
    every vertex has two incoming opposite edges and two outgoing opposite edges
    by orienting the edges around each face that belongs to~$M$ in 
    counter-clockwise order. We use the same primal-dual circle packing approach 
    as in Theorem~\ref{thm:polyhedral} to obtain a drawing
    of~$G'$, but instead of using the bisection of the lens region, we draw every
    edge with angles~$45^\circ+\eps/2$ and~$45^\circ-\eps/2$ around the
    source vertex in counter-clockwise order and around the target vertex in
    clockwise order. Whenever a vertex is removed, an incoming and an outgoing
    edge of it is substituted by a new edge between its neighbors, so the angles
    at the neighbors are compatible and the new edge can inherit the direction
    of the old edges.
  \end{sketch}
\fi 

  \begin{proof}
    We use the same algorithm as for the proof of Theorem~\ref{thm:polyhedral} with
    a slight modification. We first seek to direct the edges such that
    every vertex has two incoming opposite edges and two outgoing opposite edges.
    Let~$M$ and~$M'$ be the primal-dual pair corresponding to the medial graph~$G$.
    Every face in~$G$ corresponds to a vertex either in~$M$ or in~$M'$; we say
    that the face \emph{belongs to}~$M$ or~$M'$. We orient the edges around
    each face that belongs to~$M$ in counter-clockwise order. Every edge 
    in~$G$ lies between a face that belongs to~$M$ and a face that belongs 
    to~$M'$, so this gives a unique orientation for every edge. Further,
    the faces around any vertex belong to~$M$, to~$M'$, to~$M$, and to~$M'$
    in counter-clockwise order. Hence, the edges around any vertex are 
    outgoing, incoming, outgoing, and incoming in counter-clockwise order,
    which gives us the wanted edge orientation.
    
    We use the same primal-dual circle packing approach to obtain a drawing
    of~$G'$, but instead of using the bisection of the intersection of a
    primal and a dual circle, we use a circular arc with a different angle;
    see Fig.~\ref{fig:polyhedraleps-1}. Let~$e=(u,v)$ be an edge of~$G'$
    directed from~$u$ to~$v$, and
    let~$l(e)$ be the lens region of~$e$ between the primal-dual 
    circles~$d(e)$ and~$d'(e)$. \wlogc, assume that the $90^\circ$ angle
    inside~$l(e)$ is between~$d'(e)$ and~$d(e)$ in counter-clockwise order around~$u$.
    In the proof of Theorem~\ref{thm:polyhedral}, we would draw~$e$ as a 
    bisection of~$l(e)$.   
    We draw~$e$ that the angle between~$d'(e)$ and~$(u,v)$ at~$u$
    is~$45^\circ+\eps/2$ and the angle between~$e$ and~$d(e)$ at~$u$ 
    is~$45^\circ-\eps/2$. 
  
  \begin{figure}[t]
    \begin{subfigure}[b]{.47\textwidth}
      \centering
      \includegraphics[page=1]{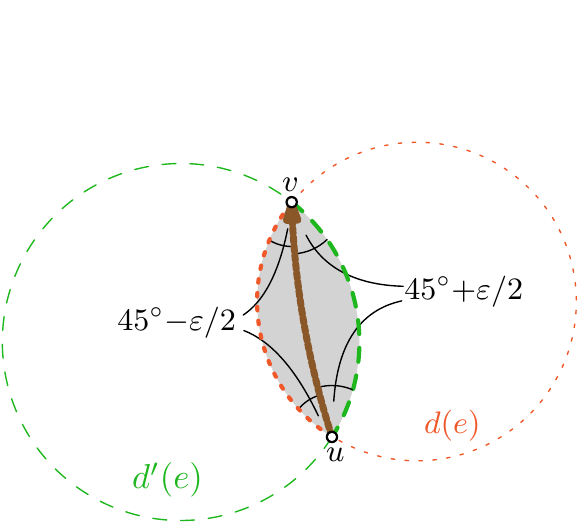}
      \caption{Drawing a directed edge~$e=(u,v)$ between the circles~$d(e)$ and~$d'(e)$}
      \label{fig:polyhedraleps-1}
    \end{subfigure}
    \hfil
    \begin{subfigure}[b]{.47\textwidth}
      \centering
      \includegraphics[page=2]{3connsimple-epsreg}
      \caption{Eliminating a vertex~$v$ and adding the edge~$(u,w)$ inside the circle~$d(e_1)$}
      \label{fig:polyhedraleps-2}
    \end{subfigure}
    \caption{Illustrations for the proof of Lemma~\ref{lem:polyhedraleps}.}
    \label{fig:polyhedraleps}
  \end{figure}
    
    Informally, this means that all outgoing edges 
    at a vertex are ``rotated'' by $\eps/2$ in counter-clockwise direction, and
    all incoming edges at a vertex are ``rotated'' by $\eps/2$ in clockwise 
    direction compared to a plane circular-arc drawing of~$G'$. Since opposite 
    edges of a vertex~$u$ have the same direction with respect to~$u$, 
    they are rotated by the same angle, so they are still tangent. Further, 
    since adjacent edges at~$u$ have a different direction with respect to~$u$,
    the angle between them is now either~$90^\circ+\eps$ or~$90^\circ-\eps$.
    
    We then use the same procedure as in Theorem~\ref{thm:polyhedral} to eliminate
    vertices from~$G'$ and obtain a plane $\eps$-regular Lombardi drawing of~$G$.
    In every step of this procedure, we eliminate a vertex~$v$ from~$G'$ and add
    an edge between two pairs of its adjacent vertices (without introducing
    self-loops); see Fig.~\ref{fig:polyhedraleps-2}. 
    Let~$u$ and~$w$ be two neighbors of~$v$ in~$G'$ such that we
    want to obtain the edge~$e=(u,w)$ in~$G$. \wlogc, assume
    that the edge~$e_1=(u,v)$ is directed from~$u$ to~$v$ in~$G'$ and that the
    edge~$e_2=(v,w)$ is directed from~$v$ to~$w$ in~$G'$. Following 
    the proof of Theorem~\ref{thm:polyhedral}, $e_1$ lies in the
    lens region $l(e_1)$ between disks $d(e_1)$ and~$d'(e_1)$, and~$e_2$ lies in the
    lens region $l(e_1)$ between disks $d(e_2)=d(e_1)$ and~$d'(e_2)$. Hence,~$u$ 
    and~$w$ lie on a common circle~$d(e_1)$ of the primal-dual circle packing. 
    Assume that the $90^\circ$ angle inside~$l(e_1)$ is between~$d'(e_1)$ 
    and~$d(e_1)$ in counter-clockwise order around~$u$; the other case is 
    symmetric. By the direction of
    the edges~$e_1$ and~$e_2$, the angle between~$e_1$ and~$d(e_1)$ 
    is~$45^\circ-\eps/2$ in counter-clockwise around~$u$ 
    and the angle between~$d(e_1)$ and~$e_2$ is 
    also~$45^\circ-\eps/2$ in counter-clockwise direction around~$w$. Hence, we can
    draw the edge~$e$ as a circular arc inside~$d(e_1)$ with angle~$45^\circ-\eps/2$
    to~$d(e_1)$ at both~$u$ and~$w$. We keep the ports at both vertices and
    by directing the edge from~$u$ to~$w$ we also keep a direction of the edges
    that satisfies the above property. Thus, we obtain a plane  $\eps$-regular
    Lombardi drawing of~$G$.
  \end{proof}

  The following Lemmas are stronger versions of Lemma~\ref{lem:lensmult} and Theorem~\ref{thm:summation}, respectively. 
  Since the proofs of the latter results do not rely on~$90^\circ$ angles, 
  they can also applied to the stronger versions.
For the sake of completeness, a formal proof of Lemma~\ref{lem:lensmulteps} is still given.

  \begin{lemma}\label{lem:lensmulteps}
    Let $G=(V,E)$ be a 4-regular plane multigraph with a plane $\eps$-angle Lombardi 
    drawing~$\Gamma$. Then, any lens multiplication $G'$ of $G$ also admits a 
    plane $\eps$-angle Lombardi drawing.
  \end{lemma}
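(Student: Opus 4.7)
The plan is to follow the proof of Lemma~\ref{lem:lensmult} essentially verbatim, noting that its key construction---drawing a bisecting arc of the lens and inserting a new vertex at its midpoint---does not actually rely on the lens angles being exactly $90^\circ$. Let $f$ be a lens in $\Gamma$ bounded by circular arcs $e_1$ and $e_2$ with common endpoints $u$ and $v$, and let $\alpha_u, \alpha_v$ denote the angles of $f$ at $u$ and $v$, respectively.

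First, I would observe that $\alpha_u = \alpha_v =: \alpha$. This follows from the tangent-chord (inscribed angle) theorem applied to each of $e_1$ and $e_2$: a circular arc makes equal angles with its chord at its two endpoints, and in a lens the two bounding arcs lie on opposite sides of the chord, so the two angle contributions simply add up to the same value at $u$ and at $v$. Since $\Gamma$ is an $\eps$-angle Lombardi drawing, this common value $\alpha$ lies in $[90^\circ-\eps,\,90^\circ+\eps]$. The resulting symmetry between the tangent configurations at $u$ and $v$ is what makes the rest of the construction possible.

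Next, mirroring Lemma~\ref{lem:lensmult}, I would define the bisecting arc $b$ of $f$ as the unique circular arc from $u$ to $v$ making angle $\alpha/2$ with both $e_1$ and $e_2$ at $u$; by the tangent-chord theorem applied to $b$ itself, it also makes angle $\alpha/2$ with $e_1$ and $e_2$ at $v$, so it bisects the lens at both endpoints. I would then place the new vertex $p$ at the midpoint of $b$ and draw four circular arcs from $u$ and $v$ to $p$ whose tangents at $u$ and $v$ match those of $e_1$ and $e_2$. By the reflective symmetry of the construction across $b$, the two arcs corresponding to $e_i$ ($i=1,2$) will share a common tangent at $p$, so opposite edges at $p$ are tangent; a further application of the tangent-chord argument then shows that the angle between adjacent arcs at $p$ equals $\alpha$, and hence again lies in $[90^\circ-\eps,\,90^\circ+\eps]$. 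Iterating this construction inside the resulting smaller lenses yields a plane $\eps$-angle Lombardi drawing of any lens multiplication $G'$ of $G$.

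The main step to verify carefully is the tangent/angle structure at the new vertex~$p$: namely that the pair of arcs from $u$ and $v$ corresponding to $e_i$ meet smoothly at $p$, and that adjacent arcs at $p$ meet at an angle equal to $\alpha$. I expect this to be a short geometric verification using the tangent-chord theorem together with the reflection symmetry across $b$, rather than a deep obstacle. The remaining ingredients---that the newly drawn arcs lie inside the lens and hence do not cross any other edges of $\Gamma$, and that the construction can be iterated to realize arbitrarily long chains of lenses---transfer from Lemma~\ref{lem:lensmult} without any change.
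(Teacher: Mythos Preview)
Your approach is essentially identical to the paper's: define the bisecting arc~$b$, place $p$ at its midpoint, draw four arcs matching the tangents of $e_1,e_2$ at $u$ and $v$, and iterate. Your explicit justification that $\alpha_u=\alpha_v$ via the tangent--chord theorem is a nice addition the paper omits. One small imprecision: the four adjacent angles at $p$ are not all equal to~$\alpha$ but alternate between $\alpha$ and $180^\circ-\alpha$ (as the paper states); since both values lie in $[90^\circ-\eps,\,90^\circ+\eps]$, your conclusion is unaffected. Also, the symmetry giving common tangents for opposite arcs at $p$ is the one swapping $u$ and $v$ (reflection across the perpendicular bisector of the chord), not reflection across~$b$; but this is a cosmetic slip.
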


  \begin{proof}
    Let $f$ be a lens in $\Gamma$ spanned by two vertices $u$ and $v$. We 
    denote the two edges bounding the lens as $e_1$ and $e_2$. 
    Let~$\alpha\in[90^\circ-\eps,90^\circ+\eps]$ be
    the angle between $e_1$ and $e_2$ in both end-vertices. We define the bisecting 
    circular arc $b$ of~$f$ as the unique circular arc connecting $u$ and $v$ 
    with an angle of $\alpha/2$ to both~$e_1$ and $e_2$. See 
    Fig.~\ref{fig:lens-subdivision} for an example.

    Let $p$ be the midpoint of $b$. If we draw circular arcs $a_1$ and $a_2$
    from both $u$ to~$p$ and circular arcs~$a_3$ and~$a_4$ from $v$ to $p$ 
    that have the same tangents as $e_1$ and $e_2$ in $u$ and $v$, then these 
    four arcs meet at~$p$ such that the angle between~$a_1$ and~$a_2$ as well
    as the angle between~$a_3$ and~$a_4$ is~$\alpha$, whereas the angle 
    between~$a_1$ and~$a_4$ and the angle between~$a_2$ and~$a_3$ 
    is~$180^\circ-\alpha\in[90^\circ-\eps,90^\circ+\eps]$.
    Further, each such arc lies inside lens $f$ and hence does not cross 
    any other arc of $\Gamma$. The resulting drawing is thus a plane $\eps$-angle Lombardi 
    drawing of a 4-regular multigraph that is derived from $G$ by subdividing the 
    lens~$f$ with a new degree-4 vertex.
	
    By repeating this construction inside the new lenses we can create plane 
    $\eps$-angle Lombardi drawings that replace lenses by chains of smaller lenses.  .
  \end{proof}

  \begin{lemma}\label{lem:summationeps}
    Let $A$ and $B$ be two 4-regular plane multigraphs with plane $\eps$-angle Lombardi drawings. 
    Let $a$ be an edge of $A$ and $b$ an edge of $B$. Then the composition $A + B$
    obtained by connecting $A$ and $B$ along edges $a$ and $b$ admits a plane 
    $\eps$-angle Lombardi drawing.
  \end{lemma}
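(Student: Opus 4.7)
The plan is to repeat verbatim the construction used for Theorem~\ref{thm:summation} and observe that every geometric operation involved preserves the weaker angular constraint defining an $\eps$-angle Lombardi drawing. First, I would apply a M\"obius transformation to the drawing of $A$ to turn the edge $a$ into the complement of a line segment on an infinite-radius circle, and do the same for $B$ and $b$. Since M\"obius transformations map circles (and lines) to circles (and lines) and preserve angles, the resulting drawings are still plane $\eps$-angle Lombardi drawings of $A$ and $B$: opposite edges remain tangent at every vertex, and every pair of crossing edges still meets at an angle in $[90^\circ-\eps,90^\circ+\eps]$.

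Next, I would rotate and translate the two transformed drawings so that $a$ and $b$ lie on a common line~$\ell$ and so that the subdrawings of $A\setminus\{a\}$ and $B\setminus\{b\}$ are vertex- and edge-disjoint; rigid motions preserve circles and angles, so the $\eps$-angle Lombardi property is again unaffected. I would then remove $a$ and $b$ and introduce the two new edges $c$ and $d$ of the knot sum as subsegments of $\ell$ (one a bounded segment, the other passing through infinity), exactly as in the proof of Theorem~\ref{thm:summation}. The key observation is that at each of the four vertices involved, the tangent direction of the new edge incident to that vertex coincides with the tangent direction that the removed edge had at the same vertex, because both are tangent to $\ell$. Consequently, the set of tangent directions at every vertex is unchanged from the original $\eps$-angle Lombardi drawings of $A$ and $B$, so opposite-edge tangency and the angular bound $[90^\circ-\eps,90^\circ+\eps]$ are automatically inherited.

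Finally, a further M\"obius transformation can be applied to bring the edge that currently passes through infinity back into a bounded circular arc, without changing any angle or tangency relation. The resulting drawing is a plane $\eps$-angle Lombardi drawing of $A+B$. The main thing to verify is that no crossings are introduced when gluing the two drawings along $\ell$; this is handled exactly as in Theorem~\ref{thm:summation} by choosing the alignment along $\ell$ so that the two subdrawings lie in disjoint regions, and the argument does not use the $90^\circ$ angle condition anywhere.
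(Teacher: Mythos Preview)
Your proposal is correct and follows exactly the approach the paper takes: the paper simply remarks that the proof of Theorem~\ref{thm:summation} does not rely on the $90^\circ$ angle condition and hence carries over verbatim to the $\eps$-angle setting. Your write-up in fact spells out more carefully than the paper does why each step (M\"obius transformation, rigid motion, replacing $a$ and $b$ by collinear arcs) preserves the $\eps$-angle Lombardi property.
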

  
  Let $G=(V,E)$ be a 4-regular plane multigraph 
  and let~$x\in V$ with edges~$(x,a)$, $(x,b)$, $(x,c)$, and~$(x,d)$
  in counter-clockwise order. A \emph{lens extension} of~$G$ is a 
  4-regular plane multigraph that is obtained by removing~$x$ and its incident
  edges from~$G$, and adding two vertices~$u$ and~$v$ to~$G$ with two edges 
  between~$u$ and~$v$ and the edges $(u,a),(u,b),(v,c),(v,d)$.
  Informally, that means that a vertex is substituted by a lens.
  
  \begin{lemma}\label{lem:lensext}
    Let $G=(V,E)$ be a 4-regular plane multigraph with a plane $\eps$-angle Lombardi drawing~$\Gamma$. 
	Then, any lens extension of~$G$ admits a plane $(\eps+\eps')$-angle Lombardi drawing for every~$\eps'>0$.
  \end{lemma}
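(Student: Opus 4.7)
The plan is to modify $\Gamma$ only inside a small disk $D$ centered at $x$ and otherwise disjoint from $\Gamma$. Within $D$, we remove $x$ and short initial segments of its four incident arcs and insert a \emph{lens gadget}: two vertices $u, v$ placed close to $x$, two lens edges $L_1, L_2$ between them, and four new external circular arcs $(u,a), (u,b), (v,c), (v,d)$. Each external arc is drawn as the unique circular arc whose tangent at its outer endpoint ($a$, $b$, $c$, or $d$) equals the tangent of the corresponding original arc in $\Gamma$; for $D$ small enough these new arcs are close to the originals throughout, so they introduce no new crossings and the tangents at $a, b, c, d$ (and hence the angles at those vertices) are preserved. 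The tangents $\tau_a(u), \tau_b(u)$ at $u$ and $\tau_c(v), \tau_d(v)$ at $v$ of the new external arcs depend continuously on $u, v$ and tend to the original tangents $T_a, T_b, T_c, T_d$ of $\Gamma$ at $x$ as $u, v \to x$.

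The key geometric observation is that a circular arc between $u$ and $v$ with prescribed outgoing tangent at $u$ has its outgoing tangent at $v$ determined by reflecting the tangent at $u$ across the perpendicular bisector of $uv$. The opposite-tangent condition at $u$ forces $L_1, L_2$ to leave $u$ with tangents $-\tau_a(u), -\tau_b(u)$ respectively, and the condition at $v$ requires them to leave $v$ with tangents $-\tau_d(v), -\tau_c(v)$. Since opposite edges at $x$ in $\Gamma$ are tangent, $T_c=-T_a$ and $T_d=-T_b$; hence in the limit $u, v \to x$ the two reflection requirements collapse to one and the same axis condition, namely the line through $x$ in direction $T_a - T_b$. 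Placing $u, v$ symmetrically about $x$ along the direction perpendicular to this axis and drawing $L_1, L_2$ as the unique circular arcs satisfying the opposite-tangent condition at $u$ yields a valid lens gadget. When $\Gamma$ happens to be a plane Lombardi drawing (angle at $x$ equal to $90^\circ$), a direct computation confirms that all four angles at $u$ and at $v$ also equal $90^\circ$.

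The main obstacle is to make this construction work for a general $\eps$-angle drawing and to quantify the resulting angular error. For finite-size lenses the tangents $\tau_\cdot(u), \tau_\cdot(v)$ are only close to $T_a, T_b, T_c, T_d$, so the symmetric placement satisfies the two opposite-tangent equations at $v$ only approximately. To fix this, I will invoke the implicit function theorem on the map from positions $(u,v) \in \R^4$ to the two scalar tangent-mismatch values at $v$; at the symmetric limit configuration this map has non-degenerate Jacobian, because independent perturbations of the direction and the length of the segment $uv$ move the two residuals independently. Hence for any sufficiently small lens there exist nearby positions $(u,v)$ that make all opposite-tangent conditions hold exactly, while the angles at $u, v$ deviate from the original angle at $x$ (which lies in $[90^\circ-\eps, 90^\circ+\eps]$) only by an amount that vanishes as $u, v \to x$. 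Choosing $D$ small enough bounds this extra deviation by $\eps'$, so the final drawing has all angles at $u, v$ in $[90^\circ-\eps-\eps', 90^\circ+\eps+\eps']$, yielding the claimed plane $(\eps+\eps')$-angle Lombardi drawing of the lens extension.
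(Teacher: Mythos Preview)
Your overall plan---localize near $x$, insert a small lens, and argue by continuity that the angular defect is bounded by $\eps'$---is reasonable in spirit, but the crux of the argument, the implicit function theorem step, is not justified and in fact appears to fail as stated. Using the standard fact that a circular arc from $u$ to $v$ with tangent direction $\phi$ at $u$ has tangent direction $2\theta-\phi$ at $v$ (where $\theta$ is the direction of the chord $uv$), the two tangent-mismatch residuals at $v$ are, to leading order,
\[
R_1 \approx 2\theta-\phi_a-\phi_d,\qquad R_2 \approx 2\theta-\phi_b-\phi_c,
\]
with $\phi_a,\phi_b,\phi_c,\phi_d$ the angles of $T_a,T_b,T_c,T_d$. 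Since $T_c=-T_a$ and $T_d=-T_b$, these expressions are \emph{identical}; both residuals depend on the direction $\theta$ of $uv$ in the same way and, in this leading-order picture, not at all on its length. So your claim that ``independent perturbations of the direction and the length of the segment $uv$ move the two residuals independently'' is false at leading order: the Jacobian is rank~$1$, not rank~$2$, and the implicit function theorem does not apply. Any rescue would have to come from the first-order variation of $\tau_a(u),\tau_b(u),\tau_c(v),\tau_d(v)$ with the positions of $u$ and $v$, which you do not analyze. (Relatedly, your remark that the Lombardi case yields exact $90^\circ$ angles cannot be correct as a general statement, since it would imply that lens extensions preserve plane Lombardiness, contradicting the non-Lombardi examples in Section~\ref{sec:negative}.)

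The paper avoids this difficulty by a different and more explicit route. It first \emph{fixes} the angle at $u$ (and at $v$) to a prescribed value $\alpha+\lambda$ with $0<\lambda\le\eps'$; by Property~\ref{prop:placement} this confines $u$ and $v$ to one-dimensional placement arcs $C_u^\lambda$ and $C_v^\lambda$ near $x$. On these arcs only a single compatibility condition remains (the two port-matching equations reduce to one, since $\beta_a+\beta_b=\beta_c+\beta_d$), and the paper verifies it by an intermediate-value argument between the two extreme positions on $C_u^\lambda,C_v^\lambda$. This sidesteps the degeneracy entirely: the angle error is controlled by construction (it is exactly $\lambda\le\eps'$), and the remaining existence problem is one-dimensional, where IVT suffices with a concrete check at the endpoints. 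If you want to repair your approach, you should either carry out the higher-order Jacobian analysis explicitly or, more simply, replace the IFT step by this placement-circle/IVT argument.
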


  \begin{proof}
    Let~$x\in V$ be the vertex that we want to perform the lens extension on
    such that we get the edges~$(u,a),(u,b),(v,c),(v,d)$
    in the obtained graph~$G'$. Let~$\alpha$ be the angle between the tangents
    of~$(x,a)$ and~$(x,b)$ at~$x$ in~$\Gamma$. Since~$\Gamma$ is a plane
    $\eps$-angle Lombardi drawing, we have that~$\alpha\le 90^\circ+\eps$.
    Further, the angle between the tangents of~$(x,c)$ and~$(x,d)$ at~$x$ 
    in~$\Gamma$ is also~$\alpha$, while the angles between the tangents of~$(x,b)$ 
    and~$(x,c)$ at~$x$ and between the tangents of~$(x,d)$ and~$(x,a)$ at~$x$ 
    are both~$180^\circ-\alpha$. We apply the Möbius-transformation on~$\Gamma$
    that maps the edges~$(x,a)$ and~$(x,d)$ to straight-line segments
    and~$a$ lies on the same~$y$-coordinate and to the right of~$x$;
    hence,~$d$ lies strictly below~$x$.
  
  \begin{figure}[t]
    \centering
    \begin{subfigure}[b]{.45\textwidth}
      \centering
      \includegraphics[width=\textwidth,page=1]{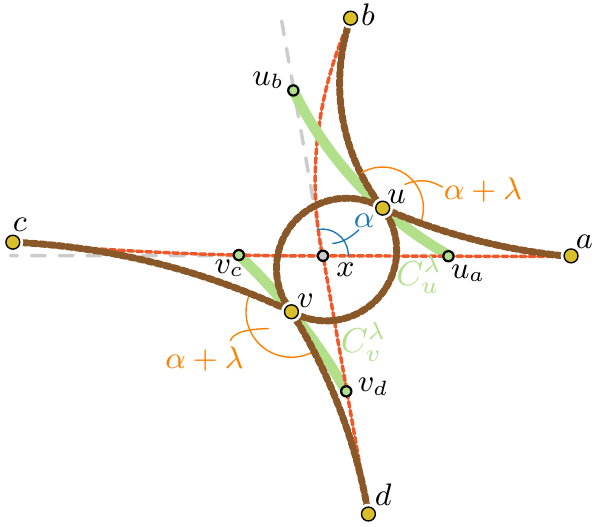}
      \caption{}
      \label{fig:lens-extension-1-app}
    \end{subfigure}
    \hfill
    \begin{subfigure}[b]{.45\textwidth}
      \centering
      \includegraphics[width=\textwidth,page=2]{lens-extension-all}
      \caption{}
      \label{fig:lens-extension-2}
    \end{subfigure}
    \caption{(a)~The circular arc~$C_u^\lambda$ between~$u_a$ and~$u_b$ 
      on the placement circles of~$u$ and the circular arc~$C_v^\lambda$ 
      between~$v_c$ and~$v_d$ on the placement circles of~$v$.
      (b)~Placing~$u$ on~$u_a$ and~$v$ on~$v_c$ gives~$\beta_a=0^\circ$. }
    \label{fig:lens-extension-12}
  \end{figure}
    
We aim to place~$v$ such that the angle between the arcs~$(v,c)$ and~$(v,d)$ 
    is $\alpha+\lambda$ for some~$0<\lambda\le\eps'$ which we will show how
    to choose later. We have fixed ports at~$c$ and~$d$
    and a fixed angle $\alpha+\lambda$ at~$v$. According to 
    Property~\ref{prop:placement}, all possible 
    positions of~$v$ lie on a circle through~$c$ and~$d$. Note that the circle
    through~$c,d,x$ describes all possible positions of neighbors of~$c$ and~$d$
    with angle~$\alpha$. Since the desired angle gets larger, the position circle
    for~$v$ contains a point~$v_d$ on the straight-line edge~$(x,d)$ and a 
    point~$v_c$ on the half-line starting from~$x$ with the angle of the port
    used by the arc~$(x,c)$; see Fig.~\ref{fig:lens-extension-1-app}.
    We denote by~$C_v^\lambda$ the circular arc between~$v_c$ and~$v_d$ on the
    placement circle of~$v$ that gives the angle~$\alpha+\lambda$ at~$v$. We do 
    the same construction for~$u$ to obtain the circular 
    arc~$C_u^\lambda$ between~$u_a$ and~$u_b$.
    
    Since the drawing of~$G$ is plane, there is some non-empty region in which
    we can move~$x$ such that the arcs $(x,a),(x,b),(x,c),(x,d)$ are drawn with
    the same ports at~$a,b,c,d$ and do not cross any other edge of the drawing.
    We choose~$\lambda$ as the largest value with~$0<\lambda\le\eps'$
    such that the two circular arcs~$C_u$ and~$C_v$ lie completely inside this
    region. 
    
    We now have to find a pair of points on~$C_v^\lambda$ and~$C_u^\lambda$ 
    such that we can connect them via a lens. 
    The ports of the two arcs we seek to draw between~$u$ and~$v$ lie opposite
    of the ports used by the arcs~$(u,a),(u,b),(v,c)$, and~$(v,d)$
    We label the ports at~$u$ and~$v$ 
    as~$p_u^a$ opposite of $(u,a)$ at~$u$, as~$p_u^b$ opposite of $(u,b)$ at~$u$,
    as~$p_v^c$ opposite of $(v,c)$ at~$v$, and as~$p_v^d$ opposite of $(v,d)$ at~$v$.
    We have to find a pair of points on~$C_u^\lambda$ and~$C_v^\lambda$ such that
    these ports are ``compatible'':
    Take a point~$q_u$ on~$C_u^\lambda$ and a point~$q_v$ on~$C_v^\lambda$ and 
    connect them by a segment~$S_{uv}$.
    Then the angle~$\beta_b$ between~$S_{uv}$ and $p_u^b$ has to be the same as the angle~$\beta_c$
    between~$S_{uv}$ and $p_v^c$, and the angle~$\beta_a$ between~$S_{uv}$ and $p_u^a$ has 
    to be the same as the angle~$\beta_d$ between~$S_{uv}$ and $p_v^d$. 
    By construction, we have that~$\beta_a+\beta_b=90^\circ+\alpha+\lambda$ 
    and~$\beta_c+\beta_d=90^\circ+\alpha+\lambda$, so
    it suffices to find a pair of points such that~$\beta_a=\beta_d$.
    
    Assume that~$v$ is placed on~$v_c$ and~$u$ is placed on~$u_a$; see Fig.~\ref{fig:lens-extension-2}.
    The edge~$(x,a)$ is drawn as a straight-line segment, and the edge~$(x,c)$
    uses the port opposite of the one of~$(x,a)$. Hence, the segment~$S_{uv}$
    is a segment through~$x$. Furthermore, it uses exactly the port~$p_u^a$
    at~$u$, so we have~$\beta_a=0^\circ$. On the other hand,~$\beta_d$ is 
    strictly positive: The segment~$(x,d)$ enters~$x$ with an angle
    of~$\gamma=180^\circ-\alpha>0^\circ$ to the segment~$(x,v)$. 
    Since~$v$ lies to the left of~$x$, the angle
    described between the tangent of the circular arc~$(v,d)$ at~$v$ and the
    segment~$(v,x)$ is strictly larger than~$\gamma$. Since~$\beta_d$ is 
    described by the same tangent and segment, we have that~$\beta_d=\gamma>0^\circ$.
  
  \begin{figure}[t]
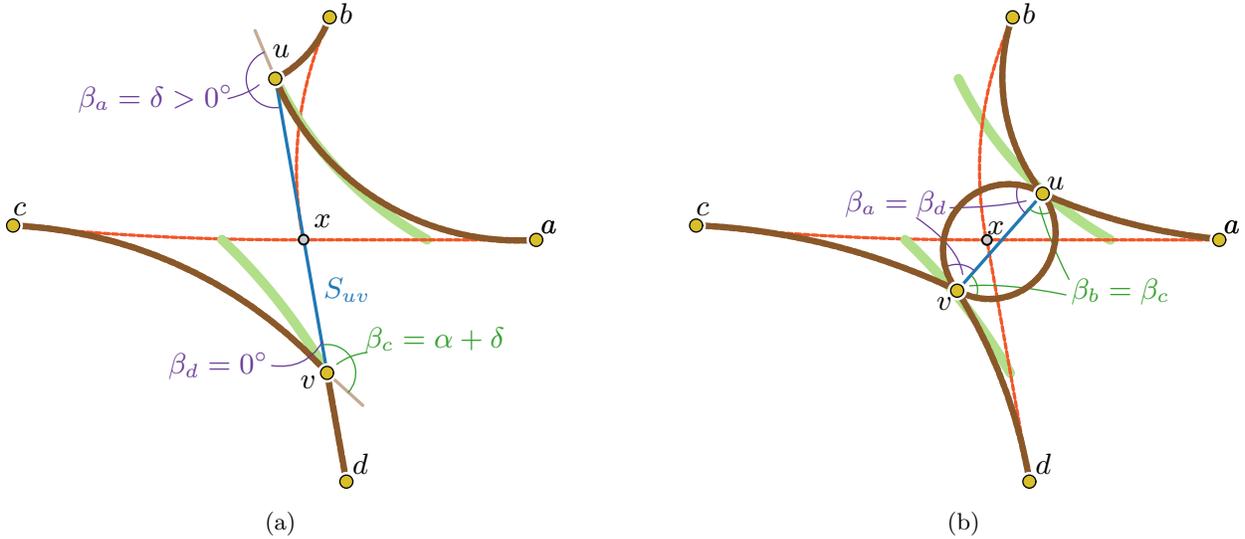

    \centering
    \begin{subfigure}[b]{.45\textwidth}
      \centering
      \includegraphics[width=\textwidth,page=3]{lens-extension-all}
      \caption{}
      \label{fig:lens-extension-3}
    \end{subfigure}
    \hfill
    \begin{subfigure}[b]{.45\textwidth}
      \centering
      \includegraphics[width=\textwidth,page=4]{lens-extension-all}
      \caption{}
      \label{fig:lens-extension-4}
    \end{subfigure}
    \caption{(a)~Placing~$u$ on~$u_b$ and~$v$ on~$v_d$ gives~$\beta_d=0^\circ$.
      (b)~Placing~$u$ and~$v$ such that~$\beta_a=\beta_d$ gives a lens
      between~$u$ and~$v$ with the desired angles.}
    \label{fig:lens-extension-34}
  \end{figure}
    
    Now assume that~$v$ is placed on~$v_d$ and~$u$ is placed on~$u_b$; see Fig.~\ref{fig:lens-extension-3}.
    The edge~$(x,d)$ is drawn as a straight-line segment, and the edge~$(x,b)$
    uses the port opposite of the one of~$(x,d)$. Hence, the segment~$S_{uv}$
    is a segment through~$x$. Furthermore, it uses exactly the port~$p_u^d$
    at~$u$, so we have~$\beta_d=0^\circ$. On the other hand,~$\beta_a$ is 
    strictly positive: The segment~$(x,a)$ enters~$x$ with an angle
    of~$\delta=90^\circ+\alpha>0^\circ$ to the segment~$(x,v)$. 
    Since~$u$ lies above~$x$, the angle
    described between the tangent of the circular arc~$(u,a)$ at~$u$ and the
    segment~$(u,x)$ is strictly larger than~$\delta$. Since~$\beta_d$ is 
    described by the same tangent and segment, we have that~$\beta_d=\delta>0^\circ$.
    
    Hence, we have found a pair of points for~$u$ and~$v$ such that~$\beta_a=0^\circ$
    and~$\beta_d=\gamma>0^\circ$ and we have found a pair of points for~$u$ and~$v$ such 
    that~$\beta_a=\delta>0^\circ$ and~$\beta_d=0\circ$. Since we can move~$u$ and~$v$ freely
    along the curves~$C_u$ and~$C_v$ between these pairs of points,~$\beta_a$
    can become any angle between~$0^\circ$ and~$\delta$ and~$\beta_d$ can
    become any angle between~$0^\circ$ and~$\gamma$. Thus, there has to exist
    some pair of points for~$u$ and~$v$ such that~$\beta_a=\beta_d$; see Fig.~\ref{fig:lens-extension-4}.
    We choose this pair of points and connect~$u$ and~$v$ by two circular arcs
    such that one of them uses the ports~$p_u^a$ and~$p_v^d$ and the other
    one uses the ports~$p_u^b$ and~$p_v^c$. Note that the arcs~$(u,a)$ and~$(u,b)$
    are now drawn the same way as if we moved~$x$ onto the determined position
    of~$u$ and the arcs~$(v,c)$ and~$(v,d)$
    are now drawn the same way as if we moved~$x$ onto the determined position
    of~$v$. Hence, by the choice of~$\lambda$, they do not introduce any
    crossing and thus the drawing is plane.
  \end{proof}

  \begin{lemma}\label{lem:maxdeg3}
    Every 4-regular plane multigraph with
    at most~3 vertices admits a plane $\eps$-regular Lombardi drawing for
    every $0\le\eps<90^\circ$.
  \end{lemma}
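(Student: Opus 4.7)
The plan is to enumerate the biconnected 4-regular plane multigraphs with at most 3 vertices and dispatch each case either by a previous result or by an explicit construction. No $1$-vertex graph qualifies, since 4-regularity on a single vertex would force self-loops. For $n=2$, the only possibility is two vertices joined by four parallel edges. For $n=3$, 4-regularity forces exactly two parallel edges between each of the three vertex pairs, giving the doubled $K_3$ in its finitely many spherical embeddings.

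For the $n=3$ cases I would appeal directly to Lemma~\ref{lem:polyhedraleps}. Biconnectedness of $G$ is immediate, since removing any one vertex leaves the other two still connected by their remaining doubled edge. By the medial-graph identities and Euler's formula, $|E(M)| = |E(G)|/2 = 3$ and $|V(M)| + |V(M')| = F(G) = 5$. Neither $M$ nor $M'$ can be a tree, for otherwise the medial would have self-loops, so $|V(M)|, |V(M')| \le 3$ and therefore $\{|V(M)|, |V(M')|\} = \{2,3\}$. The summand with $3$ vertices and $3$ edges on the sphere is a triangle, and in particular simple. Lemma~\ref{lem:polyhedraleps} then furnishes the required plane $\eps$-regular Lombardi drawing, covering the Lombardi case $\eps=0$ as well.

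For $n=2$ I would give an explicit symmetric construction. Place $u=(-1,0)$ and $v=(1,0)$. At each endpoint I assign two opposite horizontal ports and two opposite ports tilted from vertical by $\eps/2$; by the opposite-tangent condition the four adjacent angles become exactly $90^\circ\pm\eps$. I would then connect the two horizontal ports of $u$ and $v$ by the segment $uv$ together with the complementary arc through infinity, and connect the remaining two ports by a pair of circular arcs drawn symmetrically above and below the horizontal axis; Property~\ref{prop:placement} provides the center and radius of each arc explicitly.

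The main obstacle I anticipate is verifying planarity of the $n=2$ construction as $\eps$ approaches $90^\circ$, since the tilted ports then rotate toward the horizontal axis and the symmetric arcs can grow large. However, by construction the two tilted arcs lie on opposite sides of the horizontal axis, so they cannot cross each other or the horizontal edges, and the radii depend continuously on $\eps$; a short symmetric check confirms that the construction is plane for every $0 \le \eps < 90^\circ$, completing the proof.
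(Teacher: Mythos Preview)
Your proposal is correct in outline but takes a different route from the paper for the $n=3$ case. The paper's own proof is purely by inspection: it observes that there are exactly two underlying multigraphs (the quadruple edge on two vertices and the doubled $K_3$), exhibits a plane Lombardi drawing of each in a figure, and then remarks that one obtains an $\eps$-regular drawing ``by simply making the circular arcs larger or smaller''. Your detour via Lemma~\ref{lem:polyhedraleps} is appealing in that it avoids drawing anything by hand and automatically covers every spherical embedding of the doubled $K_3$, but it leans on a lemma whose proof, as written in the paper, augments $M$ to a polyhedral graph by adding edges---and a triangle on three vertices cannot be so augmented. The primal-dual circle packing for a triangle does exist directly, so the gap is easy to close, but you should flag it rather than cite the lemma blindly. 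A smaller point: ``neither $M$ nor $M'$ is a tree'' only gives you $|V(M)|,|V(M')|\le 3$; to conclude that the $3$-vertex summand is actually a triangle you also need minimum degree $\ge 2$ in $M$, which follows from $G$ being loopless (a degree-$1$ vertex of $M$ would correspond to a face of $G$ bounded by a loop).

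For $n=2$ your explicit construction matches the paper's picture-based argument in spirit, but there is a slip: tilting the second pair of opposite ports by $\eps/2$ from vertical produces adjacent angles of $90^\circ\pm\eps/2$, not $90^\circ\pm\eps$; the tilt should be by $\eps$. Also, your edge through infinity needs a final M\"obius transformation to land in the plane, as in the proof of Theorem~\ref{thm:summation}.
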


  \begin{proof}
    There are only two 4-regular multigraphs with at most~3 vertices and each of them
    has a plane Lombardi drawing as depicted in Fig.~\ref{fig:biconn3lomb}. 
    For some~$0^\circ<\eps<90^\circ$, we can obtain a plane $\eps$-regular Lombardi
    drawing by simply making the circular arcs larger or smaller, as depicted in
    Fig.~\ref{fig:biconn3eps}.
  \end{proof}

   \begin{figure}[t]
    \centering
\begin{subfigure}[b]{.4\textwidth}
        \centering
		\includegraphics[page=1]{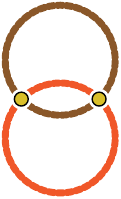} \hspace{1cm}
\includegraphics[page=2]{biconn3}
        \caption{}
        \label{fig:biconn3lomb}
      \end{subfigure}
      \hfill
      \begin{subfigure}[b]{.55\textwidth}
        \centering
        \includegraphics[page=3]{biconn3} \hspace{1cm}
\includegraphics[page=4]{biconn3}
        \caption{}
        \label{fig:biconn3eps}
      \end{subfigure}
      \caption{The only biconnected 4-regular multigraphs with at most~3 vertices.
        (a) plane Lombardi and (b) plane $\eps$-angle Lombardi drawings.}
      \label{fig:biconn3}
\end{figure}
 
  We are now ready to present the main result of this section.
  The proof boils down to a large case distinction using the tools developed in the 
  previous discussion. We split the original graph into biconnected components
   and then use Lemma~\ref{lem:maxdeg3} and~\ref{lem:polyhedraleps} as base cases.
  With the help of lens extensions, lens multiplications, and knot sums we can combine the \enquote{near-Lombardi} drawings 
  of the biconnected graphs to generate an \enquote{near-Lombardi} drawing of the original graph.
  As a consequence, every knot is near-Lombardi.

  \begin{theorem}\label{thm:biconneps}
    Let $G=(V,E)$ be a biconnected 4-regular plane multigraph and let~$\eps>0$.
    Then~$G$ admits a plane $\eps$-angle Lombardi drawing.
  \end{theorem}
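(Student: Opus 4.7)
The plan is to proceed by induction on the number of vertices $n$ of $G$, combining the tools established above---Lemmas~\ref{lem:maxdeg3}, \ref{lem:polyhedraleps}, \ref{lem:lensmulteps}, \ref{lem:summationeps}, and~\ref{lem:lensext}---according to the structure of $G$. For the base case $n\le 3$, Lemma~\ref{lem:maxdeg3} directly yields a plane $\eps$-regular Lombardi drawing, which is in particular $\eps$-angle Lombardi. For the inductive step, I would distinguish cases depending on whether $G$ is 3-connected, contains parallel edges, or has a 2-vertex separator.

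If $G$ is 3-connected and simple, then by Lemma~\ref{lem:3connsimple} the primal-dual pair $(M,M')$ for which $G$ is the medial is simple, so Lemma~\ref{lem:polyhedraleps} finishes the case. If $G$ contains a chain of $k\ge 2$ consecutive lenses between two vertices, I would shorten the chain to length $k-1$, apply induction to the smaller graph, and restore the full chain using Lemma~\ref{lem:lensmulteps}, whose construction preserves the angle bound exactly. If $G$ contains an isolated lens, I would collapse it by identifying its two incident vertices and deleting the two parallel edges, obtaining a smaller biconnected 4-regular plane multigraph $G'$; applying the induction to $G'$ with parameter $\eps/2$ and then invoking Lemma~\ref{lem:lensext} with $\eps'=\eps/2$ reinstates the lens and keeps the total angular error at most $\eps$.

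The remaining case is that $G$ is simple but not 3-connected. Let $\{u,v\}$ be a 2-vertex separator with bridges $B_1,B_2$, and let $d_u^i,d_v^i$ denote the number of edges from $u$ and $v$ into $B_i$. If $(d_u^i,d_v^i)=(1,1)$ for some $i$, then $B_i$ is attached through a 2-edge cut, $G$ decomposes as a knot sum $G_1+G_2$ along those two edges, and I would apply induction to both summands and combine the drawings using Lemma~\ref{lem:summationeps}, which is angle-preserving. Otherwise every 2-vertex separator induces a $4$- or $6$-edge cut; in that subcase I would search for a 2-edge cut elsewhere in $G$ (such cuts may exist even when no 2-vertex separator produces one, as in graphs formed by joining two 3-connected blocks along two parallel edges) or perform a lens extension at a carefully chosen vertex to create a lens or a 2-edge cut, then recurse.

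The main obstacle I expect is this last subcase: showing that every biconnected simple 4-regular plane multigraph without a convenient 2-edge cut at some separator can nevertheless be reduced by a combination of lens extensions, lens multiplications, and knot sums. This is precisely what the authors foreshadow as a ``large case distinction'', and the planarity plus 4-regularity constraints on the possible separator structures $(d_u^i,d_v^i)\in\{(1,1),(1,3),(2,2),(3,3),\dots\}$ will need to be exploited systematically. A secondary technical point is the $\eps$-budget: since Lemma~\ref{lem:lensext} is the only operation that consumes angular slack and at most $n$ collapses can occur along any inductive branch, allocating $\eps/2$ to the recursive call and $\eps/2$ to the extension at each level (or, more uniformly, $\eps/(2n)$ per extension) ensures the final drawing stays within the prescribed $\eps$-bound.
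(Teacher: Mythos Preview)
Your inductive scaffold---base case via Lemma~\ref{lem:maxdeg3}, polyhedral case via Lemma~\ref{lem:polyhedraleps}, multilenses via Lemma~\ref{lem:lensmulteps}, and isolated lenses via Lemma~\ref{lem:lensext} with the $\eps/2$ budget---matches the paper's architecture. There is one small gap and one substantial one. The small one: collapsing an isolated lens between $u$ and $v$ need not yield a biconnected graph. If $\{u,v\}$ is itself a separation pair of $G$, the merged vertex becomes a cut vertex of $G'$, so the induction hypothesis does not apply; you also do not cover three parallel edges between $u$ and $v$, which is neither a multilens nor an isolated lens. The paper therefore branches here: it uses your contraction-plus-Lemma~\ref{lem:lensext} only when $\{u,v\}$ is not a separator, and otherwise decomposes along the lens as a knot sum.

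The substantial gap is the case you yourself flag. Your two proposed escapes both fail: there exist simple, biconnected, non-3-connected 4-regular plane graphs with no 2-edge cut whatsoever, so ``searching elsewhere'' can come up empty; and lens extension increases the vertex count, so it cannot drive an induction. The paper's resolution of the $(2,2)$ two-component subcase uses an idea absent from your sketch. One chooses a \emph{minimal} separation pair $\{u,v\}$, so that the smaller side $A_{u,v}$ contains no further separation pair; adding a double $uv$-edge then yields a graph $A'$ that is 3-connected with exactly one multi-edge, whence by Lemma~\ref{lem:3connsimple} one of its primal/dual is simple, and Lemma~\ref{lem:polyhedraleps} produces an $\alpha$-\emph{regular} (not merely $\alpha$-angle) Lombardi drawing of $A'$ for any prescribed $\alpha$. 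This regularity is the crux: one first recursively draws the other side $B'$ with parameter $\eps$, reads off the exact angle $90^\circ+\alpha$ realised at its $uv$-lens, then draws $A'$ with that same $|\alpha|$ and M\"obius-nests $B'$ inside the lens of $A'$ so that the ports at $u$ and $v$ coincide. No additional angular error is incurred in this step, which is why the $\eps$-budget survives.
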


\begin{proof} 
    If~$G$ has at most~3 vertices, then we obtain a plane Lombardi drawing
    of~$G$ by Lemma~\ref{lem:maxdeg3}.    
    So assume that~$G$ is a biconnected 4-regular plane graph with~$n\ge 4$.
    We seek to draw~$G$ by recursively by splitting it into smaller graphs.
    We prove our algorithm by induction on the number of vertices; to this end, 
    suppose that every biconnected 4-regular plane graph with at most~$n-1$
    vertices admits a plane $\eps'$-angle Lombardi drawing for every~$\eps'>0$;
    this holds initially for $n=4$. We proceed as follows.
    
    \ccase{c:3con-app} $G$ is polyhedral. In this case, we can draw it 
    plane Lombardi using Theorem~\ref{thm:polyhedral}. 
    
    \begin{figure}[b]
      \centering
      \subcaptionbox{Case~\ref{c:multilens-app}\label{fig:c:multilens-app}}{\includegraphics{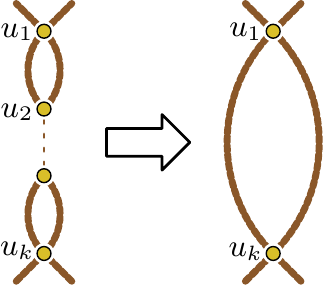}}
      \hfil
      \subcaptionbox{Case~\ref{sc:lens-3-app}\label{fig:sc:lens-3-app}}{\includegraphics{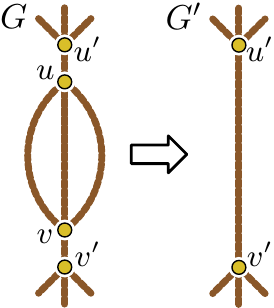}}
      \caption{Illustrations for the proof of Theorem~\ref{thm:biconneps}.}
      \label{fig:biconneps-1}
    \end{figure}
    
    \ccase{c:multilens-app} $G$ contains a \emph{multilens}, that is, a sequence of 
    lenses between the vertices~$u_1,\ldots,u_k$ with~$k\ge 2$.
    We contract the lenses to a single lens,
    that is, we remove the vertices~$u_2,\ldots,u_{k-1}$ 
    and their incident edges from~$G$ and add two edges between~$u_1$ and~$u_k$
    to form a new graph~$G'$; see Fig.~\ref{fig:c:multilens-app}.
    This operation is essentially a reverse lens multiplication and introduces 
    no self-loops. It also preserves biconnectivity since~$u_1$ and~$u_k$ 
    form a separation pair in~$G$, so any cutvertex in~$G'$ would also be
    a cutvertex in~$G$. Hence,~$G'$ is a biconnected 4-regular plane graph
    with~$n-k+1\le n-1$ vertices and by induction admits a plane $\eps$-angle
    Lombardi drawing. Furthermore,~$G$ is a lens multiplication of~$G'$
    on the lens~$(u_1,u_k)$, so we can use Lemma~\ref{lem:lensmulteps}
    to obtain a plane $\eps$-angle Lombardi drawing of~$G$.
    
    \ccase{c:lens-app} $G$ contains a lens between two vertices~$u$ and~$v$, but 
    it contains no multilens. We consider three subcases based on the number
    of edges between~$u$ and~$v$ in~$G$.
    
    \subcase{sc:lens-4-app} There are four edges between~$u$ and~$v$ in~$G$.
    Since~$G$ is 4-regular, it consists exactly of these two vertices and four 
    edges and can be drawn by Lemma~\ref{lem:maxdeg3}.
    
    \subcase{sc:lens-3-app} There are three edges between~$u$ and~$v$ in~$G$; see Fig.~\ref{fig:sc:lens-3-app}.
    Then there exists also some edge~$(u,u')$ and some edge~$(v,v')$ in~$G$. 
    Since~$G$ is biconnected, we have~$u'\neq v'$; otherwise,
    it would be a cutvertex. We remove~$u$ and~$v$ from~$G$ and add
    an edge between~$u'$ and~$v'$ to form a new graph~$G'$. 
    This operation preserves biconnectivity 
    as~$u'$ and~$v'$ form a separation pair in~$G$
    and it introduces no self-loops because $v\neq v'$.
    Hence, the graph~$G'$ is a biconnected 4-regular plane graph with~$n-2$
    vertices and by induction admits a plane $\eps$-angle Lombardi drawing.
    Let~$G''$ be the graph that consists of~$u$ and~$v$ and four multi-edges 
    between them. This graph has a plane $\eps$-regular Lombardi drawing 
    by Lemma~\ref{lem:maxdeg3}.
    Furthermore,~$G$ can be obtained by adding~$G'$ and~$G''$
    along the edge~$(u',v')$ of~$G'$ and one of the edges of~$G''$.
    Using Lemma~\ref{lem:summationeps}, we can obtain a plane $\eps$-angle Lombardi drawing of~$G$.
    
    \subcase{sc:lens-2-app} There are two edges between~$u$ and~$v$ in~$G$.
    We consider two subcases.
    
    \subsubcase{ssc:lens-2-nosep-app} Removal of~$u$ and~$v$ from~$G$ preserves
    connectivity; see Fig.~\ref{fig:ssc:lens-2-nosep-app}. We contract~$u$ and~$v$ to a new vertex:
    we remove them from~$G$ and add a new vertex~$x$ that is connected to the 
    neighbors of~$u$ and~$v$ different from~$u$ and~$v$ to form~$G'$. This operation 
    preserves biconnectivity: Since~$G$ is biconnected, the only cutvertex in~$G'$
    can be~$x$; but since the removal of~$u$ and~$v$ from~$G$ preserves connectivity,
    so does the merged vertex~$x$. Since there are exactly two edges between~$u$
    and~$v$, the new vertex~$x$ has degree~4. Hence,~$G'$ is a biconnected
    4-regular plane graph with~$n-1$ vertices and by induction admits a plane
    $\eps/2$-angle Lombardi drawing. Furthermore,~$G$ can be obtained from~$G'$
    by a lens extension on~$x$. We obtain a plane $\eps$-angle Lombardi drawing
    of~$G$ using Lemma~\ref{lem:lensext}.
    
    \begin{figure}[t]
      \centering
      \subcaptionbox{Case~\ref{ssc:lens-2-nosep-app}\label{fig:ssc:lens-2-nosep-app}}{\includegraphics[page=1]{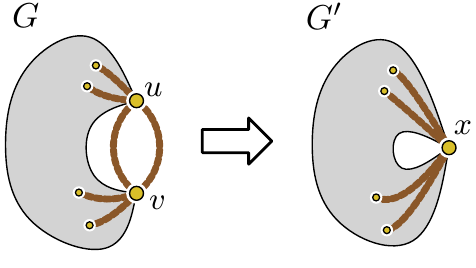}}
      \hfill
      \subcaptionbox{Case~\ref{ssc:lens-2-sep-app}\label{fig:ssc:lens-2-sep-app}}{\includegraphics[page=2]{case-lens-2}}
      \caption{Illustrations for Case~\ref{sc:lens-2-app} in the proof of Theorem~\ref{thm:biconneps}.}
      \label{fig:lens-2}
    \end{figure}
    
    \subsubcase{ssc:lens-2-sep-app} The removal of~$u$ and~$v$ from~$G$ disconnects the
    graph, that is,~$u$ and~$v$ form a separation pair in~$G$; see Fig.~\ref{fig:ssc:lens-2-sep-app}. Since there
    are exactly two edges between~$u$ and~$v$, their removal disconnects~$G$
    into two connected components~$A$ and~$B$ with at least two vertices each
    (otherwise, there would be a self-loop). Furthermore, $G$ contains an edge
    from~$u$ to a vertex~$u_A$ in~$A$ and another edge from~$u$ to a vertex~$u_B$ in~$B$. 
    If this would not be the case than~$v$ would be a cutvertex in~$G$. 
    Analogously, there is an edge from~$v$ to a vertex~$v_A$
    in~$A$ and an edge from~$v$ to a vertex~$v_B$ in~$B$.
    We have~$u_A\neq v_A$ (and $u_B\neq v_B$); otherwise, this vertex would be a cutvertex in~$G$.
    
    Let~$A'$ be the graph~$G-B$ with an additional edge between~$u$ and~$v$.
    Since~$G$ is biconnected, there are two disjoint paths in~$G$ between any two 
    vertices from~$A$. Only one of these paths can \enquote{leave}~$A$ through 
    the separation pair $u,v$. Hence, we can redirect the part outside~$A$ to 
    the new edge $(u,v)$ in~$A'$, which shows that every two vertices in~$A'$ 
    are connected with at least two disjoint paths. This shows that~$A'$ is
    biconnected.
    
    Let~$B'$ be the graph~$B$ with an additional edge between~$u_B$ and~$v_B$.
    We can show that~$B'$ is biconnected by the same arguments we have applied
    for~$A'$: In~$G$ there have to be two disjoint paths between every vertex 
    pair from~$B$. Only one of these paths can leave $B$ over the separation
    pair~$u,v$ and this part can be replaced by the new edge that we added to $B'$.
    Hence between every two vertices in~$B'$ we have two disjoint paths, which 
    proves that~$B'$ is a biconnected 4-regular plane graph with at most $n-4$ vertices.
    By induction~$B'$ admits a plane $\eps$-angle Lombardi drawing.
    Furthermore,~$G$ can be obtained by adding~$A'$ and~$B'$
    along one of the edges between~$u$ and~$v$ of~$A'$ and the edge~$(u_B,v_B)$
    of~$B'$. Using Lemma~\ref{lem:summationeps}, we can obtain a plane $\eps$-angle 
    Lombardi drawing of~$G$.
    
    \ccase{c:nolens-app} $G$ is simple, but not 3-connected, so there exists at least
    one separation pair that splits~$G$ into at least two connected components.
    Let~$A_{u,v}$ be a smallest connected component induced by the separation
    pair~$u,v$. We say that~$u,v$ is a \emph{minimal separation pair} if
    $A_{u,v}$ does not contain any separation pair and there is no 
    separation pair between a vertex of~$A_{u,v}$ and either~$u$ or~$v$.
    
    \begin{figure}[t]
      \centering
      \subcaptionbox{Case~\ref{c:nolens-app}\label{fig:c:nolens-app}}{\includegraphics[page=1]{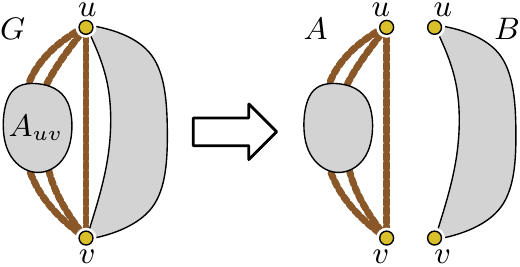}}
      \hfil
      \subcaptionbox{Case~\ref{sc:nolens-33-app}\label{fig:sc:nolens-33-app}}{\includegraphics[page=2]{case-nolens}}
      \caption{Illustrations for Case~\ref{c:nolens-app} in the proof of Theorem~\ref{thm:biconneps}.}
      \label{fig:biconneps-2}
    \end{figure}
    
    We create two biconnected 4-regular plane plane graphs as
    follows; see Fig.~\ref{fig:c:nolens-app}. Let~$A$ be the subgraph of~$G$ induced by the vertices in~$A_{u,v}$, 
    $u$, and~$v$, let~$B$ be the subgraph of~$G$ that contains all vertices
    not in~$A_{u,v}$ and all edges not in~$A$; in particular, there is no edge~$(u,v)$ in~$B$.
    By this construction, all edges of~$G$ are either part of~$A$ or part of~$B$
    and both~$A$ and~$B$ are connected, and every vertex is part of either~$A$
    or~$B$, except the two vertices~$u$ and~$v$ which are part of both.
    However,~$A$ and~$B$ are not 4-regular, so we create two 4-regular 
    graphs~$A'$ and~$B'$ for the recursion as follows. 
    Let~$\deg_A(u),\deg_A(v),\deg_B(u),\deg_B(v)$ be the degree of~$u$ and~$v$
    in~$A$ and~$B$, respectively, with~$\deg_A(u)+\deg_B(u)=\deg_A(v)+\deg_B(v)=4$.
    
    \subcase{sc:nolens-11-app} $\deg_A(u)=1$ or $\deg_A(v)=1$. \wlogc,
    let~$\deg_A(u)=1$. Let~$x$ be the neighbor of~$u$
    in~$A$. We have that~$x\neq v$ since otherwise~$A$ consists only of a
    single edge (if~$\deg_A(v)=1$) or~$v$ is a cutvertex in~$G$ (if~$\deg_A(v)=3$).
    Then~$x,v$ is a separation
    pair of~$G$ whose removal gives a connected component~$A_{x,v}$
    with less vertices than~$A_{u,v}$, as it contains the same vertices but not~$x$,
    contradicting the minimality of the separation pair~$u,v$.
    
    \subcase{sc:nolens-33-app} $\deg_A(u)=\deg_A(v)=3$; see Fig.~\ref{fig:sc:nolens-33-app}. 
    We add an edge between~$u$ and~$v$ to~$A$
    to obtain the graph~$A'$. The resulting graph is biconnected: consider
    any pair of vertices~$a,b\in A'$. There were at least two vertex-disjoint
    paths in~$G$ between~$a$ and~$b$. Since~$u,v$ is a separation pair in~$G$,
    at most one of these two paths traverses vertices in~$G-A$, and
    any path through these vertices must contain~$u$ and~$v$. Hence, there is a
    path that traverses the same edges in~$A'$ and uses the newly introduced
    edge between~$u$ and~$v$ instead.
    
    We remove~$u$ and~$v$ from~$B$ and add an edge
    between their neighbors to form~$B'$.  Let~$x$ be the neighbor of~$u$
    in~$B$ and let~$y$ be the neighbor of~$v$ in~$B$. We have that~$x\neq y$ 
    since otherwise~$x$ would be a cutvertex in~$G$.
    Hence, we introduce no self-loops. With a similar argument, $B'$ is
    also biconnected, as any path between two vertices through vertices in~$A$
    has to traverse~$u$ and~$v$ and---since they both have degree~1 in~$B$---
    their neighbors, so the path can use the newly introduced edge instead.
    
    We recursively obtain a plane $\eps$-angle Lombardi drawing of~$A'$ and~$B'$.
    Since both~$A'$ and~$B'$ have fewer vertices than~$G$, they admit
    one by induction. To obtain a drawing of~$G$ from~$A'$ and~$B'$, we have to 
    remove the edge~$(u,v)$ 
    from~$A'$ and the edge~$(x,y)$ from~$B'$ and we have to add the edges~$(u,x)$
    and~$(v,y)$. This procedure is equivalent to adding~$A'$ and~$B'$
    along these respective edges,
    so we can solve it using the algorithm described in Lemma~\ref{lem:summationeps}.
    
    \subcase{sc:nolens-22-app} $\deg_A(u)=\deg_A(v)=2$. We consider two more subcases.
    
    \begin{figure}[b]
      \centering
      \includegraphics[page=3]{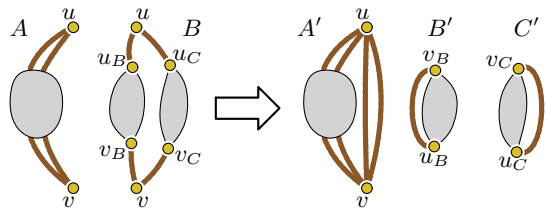}
      \caption{Illustration for Case~\ref{ssc:nolens-22-3comp-app} in the proof of Theorem~\ref{thm:biconneps}.}
      \label{fig:ssc:nolens-22-3comp-app}
    \end{figure}
    
    \subsubcase{ssc:nolens-22-3comp-app} The separation pair~$u,v$ splits~$G$
    into three connected components~$A_{u,v}$, $B_{u,v}$, and~$C_{u,v}$; see Fig.~\ref{fig:ssc:nolens-22-3comp-app}.
    We add two edges between~$u$ and~$v$ to~$A$ to obtain~$A'$.
    Let~$u_B$ be the neighbor of~$u$ in~$B_{u,v}$ and let~$v_B$ be the
    neighbor of~$v$ in~$B_{u,v}$. We have that~$u_B\neq v_B$, as otherwise it would
    be a cutvertex of~$G$. We obtain the 4-regular multigraph~$B'$ by adding an edge between~$u_B$
    and~$v_B$ to~$B_{u,v}$. By the same argument as in Case~\ref{ssc:lens-2-sep-app},
    $B'$ is biconnected. Analogously, we obtain the biconnected 4-regular multigraph~$C'$ by adding an edge
    between the neighbor~$u_C$ of~$u$ in~$C_{u,v}$ and the neighbor~$v_C$ of~$v$
    in~$C_{u,v}$ to~$C_{u,v}$. We recursively create a plane $\eps$-angle Lombardi
    drawing of~$A'$, $B'$, and~$C'$. Then, we create a plane $\eps$-angle Lombardi
    drawing with the use of Lemma~\ref{lem:summationeps} by adding~$A'$ and~$B$
    along one edge between~$u$ and~$v$ of~$A'$ and the edge~$(u_B,v_B)$ of~$B$,
    and adding the resulting graph and~$C$ along the other edge between~$u$ and~$v$ 
    and the edge~$(u_C,v_C)$ of~$C$.
    
    \subsubcase{ssc:nolens-22-2comp-app} The separation pair~$u,v$ splits~$G$
    into two connected components~$A_{u,v}$ and $B_{u,v}$; see Fig.~\ref{fig:ssc:nolens-22-2comp-app}. In this case,
    the graph~$B$ consists of~$B_{u,v}$,~$u$, and~$v$ and the edges incident to~$u$
    or~$v$ and a vertex of~$B_{u,v}$.

    We add two edges between~$u$ and~$v$ to both graphs~$A$ and~$B$ to 
    obtain~$A'$ and~$B'$. 
    Let~$M$ and~$M'$ be the primal-dual pair for which~$A'$ is the medial graph.
    We claim that~$M$ or~$M'$ is simple. 
    Let~$u_1$ and~$u_2$ be the neighbors of~$u$ in~$A$ and 
    let~$v_1$ and~$v_2$ be the neighbors of~$v$ in~$A$.
    Since~$G$ is simple, 
    there is no multi-edge between~$u$ and~$v$ in~$A$.
    Furthermore, there is no single edge~$(u,v)$ in~$A$, since otherwise~$u$ 
    and~$v$ would each only have one neighbor in~$A_{u,v}$ and these neighbors
    would be a separation pair of~$G$ that induces a smaller connected component.
    Thus, each of $u_1,u_2,v_1,v_2$ is different from~$u$ and~$v$ and we introduce no
    self-loops. By construction, the graph~$A_{u,v}$ contains no separation
    pair and thus has either at most~3 vertices or is 3-connected. We claim 
    that~$A'$ is 3-connected. If~$A_{u,v}$ has only~1 vertex, then~$u_1=u_2$,
    so there is a multi-edge in~$A_{u,v}$ which contradicts simplicity of~$G$.
    If~$A_{u,v}$ has only~2 vertices, then there has to be a multi-edge between
    them, which again contradicts simplicity of~$G$.
    If~$A_{u,v}$ has only~3 vertices, then there have to be~4 edges in~$A_{u,v}$,
    which also contradicts simplicity of~$G$.
    If~$A_{u,v}$ has at least~4 vertices, then~$u$ and~$v$
    have at least~3 different neighbors in~$A_{u,v}$, as otherwise there would
    be a cutvertex or a separation pair that gives a smaller connected component
    than the separation pair~$u,v$. Thus, if~$u$ and~$v$ are connected to at 
    least~$3$ vertices of~$A_{u,v}$ 
    and~$u$ and~$v$ are connected by an edge, which preserves
    3-connectivity. Hence,~$A'$ is 3-connected. Since~$G$ is simple, 
    there is no multi-edge between~$u$ and~$v$ in~$A$.
    Furthermore, there is no single edge~$(u,v)$ in~$A$, since otherwise~$u$ 
    and~$v$ would each only have one neighbor in~$A_{u,v}$ and these neighbors
    would be a separation pair of~$G$ that induces a smaller connected component.
    Hence,~$A'$ has no separation pair and exactly one multi-edge between~$u$
    and~$v$. By 
    Lemma~\ref{lem:3connsimple}, that means that~$M$ and~$M'$ have exactly
    one pair of parallel edges in total, so one of them has to be simple.
    
    \begin{figure}[t]
      \centering
      \includegraphics[page=4,width=\textwidth]{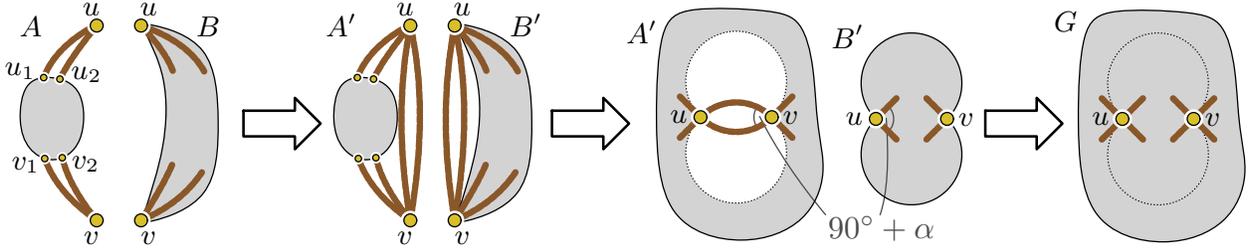}
      \caption{Illustration for Case~\ref{ssc:nolens-22-2comp-app} in the proof of Theorem~\ref{thm:biconneps}.}
      \label{fig:ssc:nolens-22-2comp-app}
    \end{figure}
    
    We recursively obtain a plane $\eps$-angle Lombardi drawing of~$B'$.
    Let~$90^\circ+\alpha$ be the angle described by the tangents of the two edges
    between~$u$ and~$v$ at~$u$. Note that~$\alpha$ might be negative, but $|\alpha|\le\eps$.
    Since the primal or dual of~$A'$ is simple, we obtain a plane $|\alpha|$-regular
    Lombardi drawing of~$A'$ by Lemma~\ref{lem:polyhedraleps}.
    Thus, the angle described by the tangents of the two edges between~$u$ and~$v$
    at~$u$ is either~$90^\circ+\alpha$ or~$90^\circ-\alpha$. We can make sure
    that the angle is~$90^\circ+\alpha$ by inverting the direction of all edges
    in the proof of Lemma~\ref{lem:polyhedraleps} in case it is not.
    
    We perform a Möbius-transformation on the drawing of~$A'$ such that the
    edges between~$u$ and~$v$ are drawn with an angle of~$45^\circ+\alpha/2$
    between either edge and the segment between~$u$ and~$v$. We pick the
    Möbius-transformation such that~$u$ and~$v$ are very close to each other;
    in particular, we want them to be close enough such that the two circles
    that the edges between~$u$ and~$v$ lie on contain no other vertex of~$A'$
    and no edges of~$A'$ that is incident to neither~$u$ nor~$v$. Note that
    the radius of these circles are the same and approach~$0$ as the distance
    between~$u$ and~$v$ approaches~$0$; hence, such a Möbius-transformation exists.
    
    We apply another Möbius-transformation on~$B'$ such that the distance 
    between~$u$ and~$v$ is the same as in the drawing of~$A'$ and such that
    the two edges between~$u$ and~$v$ are drawn with an angle of~$135^\circ-\alpha/2$
    between either edge and the segment between~$u$ and~$v$.
    We now place the drawing of~$B'$ on the drawing of~$A'$ such that both copies
    of~$u$ lie on the same coordinate and both copies of~$v$ lie on the same 
    coordinate and then we remove all edges between~$u$ and~$v$. By construction,
    the whole drawing of~$B'$ lies inside the region described by the two edges
    between~$u$ and~$v$ in the drawing of~$B'$. Further, since these edges lie on the same 
    circles as the two edges between~$u$ and~$v$ in~$A'$, this region contains 
    no vertices or edges in the drawing of~$A'$ (except~$u$ and~$v$ and their
    incident edges themselves). Since the drawings of~$A'$ and~$B'$ are plane
    and we cannot introduce a crossing between an edge of~$A'$ and an edge of~$B'$
    after removing the multi-edges between~$u$ and~$v$, the resulting drawing
    of~$G$ is also plane. Since~$u$ and~$v$ use the same ports in the drawing
    of~$A'$ and the drawing of~$B'$, the resulting drawing is a plane $|\alpha|$-angle
    Lombardi drawing of~$G$. Because of~$|\alpha|\le\eps$, this drawing is also
    a plane $\eps$-angle Lombardi drawing of~$G$.  
\end{proof}

\section{Conclusion and Open Problems}

We have studied plane Lombardi drawings of knots and links, which can
be modeled as 4-regular multigraphs. We have shown
that not all knots admit a plane Lombardi drawing. On the other hand,
we have given an algorithm to draw 4-regular polyhedral multigraphs plane Lombardi.
Further, we have shown that every biconnected 4-regular plane multigraph
admits a plane 2-Lombardi drawing, where every edge is composed of two
circular arcs, and a plane near-Lombardi drawing, where
the angle between two edges at a vertex is at least $90^\circ-\eps$ for any $\eps>0$,
while the angle between opposite edges remains $180^\circ$.

Although we made progress on the original question,
there are several questions that remain open. As main questions concerning Lombardi drawings we have the following.

\begin{question}
Can we give a complete characterization of 4-regular plane multigraphs that admit a plane Lombardi drawing? 
\end{question}

\begin{question}
What is the complexity of deciding whether a given 4-regular plane multigraph admits a plane Lombardi drawing?
\end{question}

\begin{question}
Given a 4-regular plane multigraph, what is the minimum number of edges consisting of two circular arcs in any plane 2-Lombardi drawing?
\end{question}

\paragraph{Acknowledgements.} 
Research for this work was initiated at Dagstuhl Seminar 17072 \emph{Applications of Topology to the Analysis of 1-Dimensional Objects} which took place in February 2017.
We thank Benjamin Burton for bringing the problem to our attention
and Dylan Thurston for helpful discussion.

\bibliographystyle{abbrvurl}
\bibliography{abbrv,bibliography}

\clearpage
\appendix

\section{\texorpdfstring{Drawing Knots~$5_1$, $6_2$, $7_7$, and $8_{18}$ via Circle Packing}{Drawing Knots 5\_1, 6\_2, 7\_7, and 8\_18 via Circle Packing}}\label{app:circle_packings}

\begin{figure}[h]
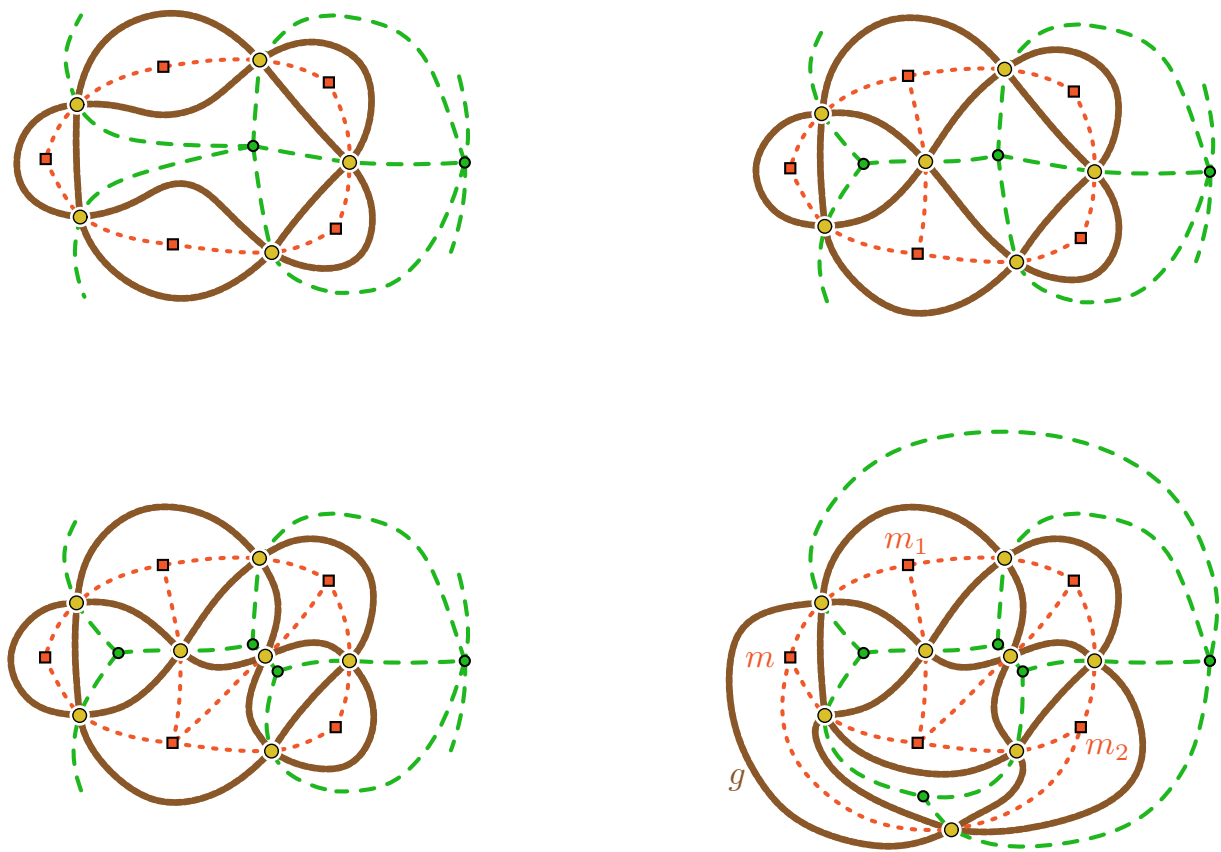

  \centering
\includegraphics[page=27,width=.4\textwidth]{extend_jocg}
  \hfill
  \includegraphics[page=12,width=.4\textwidth]{extend_jocg}

  \medskip
  \includegraphics[page=13,width=.4\textwidth]{extend_jocg}
  \hfill
  \includegraphics[page=14,width=.4\textwidth]{extend_jocg}
\caption{Extension of the primal graph (dotted) of knot~$5_1$ to the square pyramid and its dual (dashed). The medial graph in the top right is the knot~$6_2$, the medial graph
in the bottom left is the knot~$7_7$, and the medial graph in the bottom right is the knot~$8_{18}$.}
  \label{fig:extension}
\end{figure}

\begin{figure}[!ht]
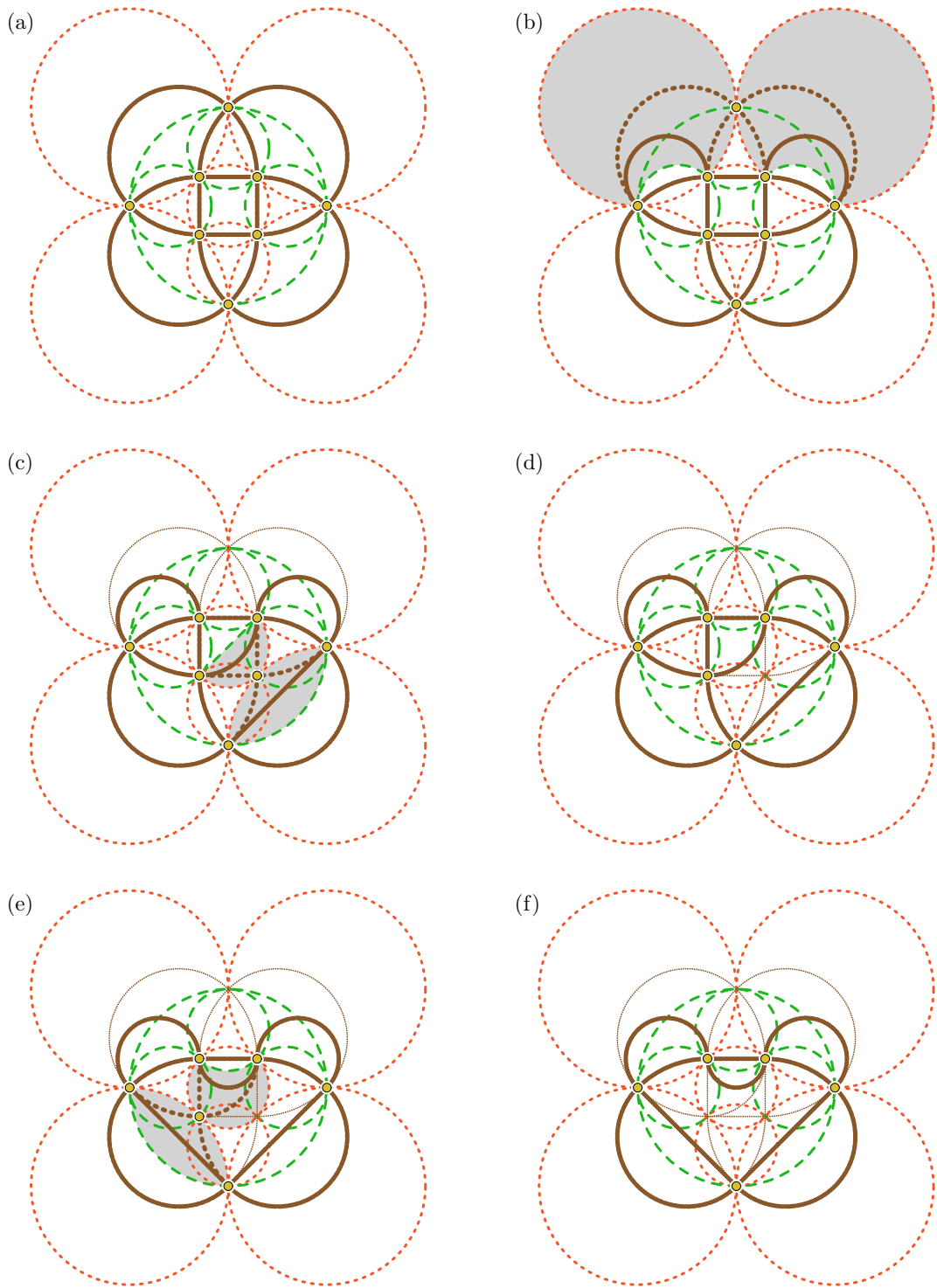

  \centering
\includegraphics[page=2,width=.39\textwidth]{circle_packings}
  \hfil
  \includegraphics[page=8,width=.39\textwidth]{circle_packings}\\[4ex]
  
\includegraphics[page=10,width=.39\textwidth]{circle_packings}
  \hfil
  \includegraphics[page=11,width=.39\textwidth]{circle_packings}\\[4ex]
  
\includegraphics[page=13,width=.39\textwidth]{circle_packings}
  \hfil
  \includegraphics[page=15,width=.39\textwidth]{circle_packings}
  
  \bigskip
  \caption{(a)~A circle packing for the square pyramid (dotted) and
    its dual (dashed), and a plane Lombardi drawing for the medial graph $8_{18}$ (solid);
    (b)~eliminating an edge of the primal and the plane Lombardi drawing of~$7_7$;
    (c)~eliminating an edge of the primal; (d)~the plane Lombardi drawing of~$6_2$;
    (e)~eliminating an edge of the primal; (f)~the plane Lombardi drawing of~$5_1$.}
  \label{fig:circle_packings}
\end{figure}

\clearpage
\section{Drawings of all Lombardi Prime Knots up to 8 Vertices}\label{app:smallknots}

	\begin{figure}[htbp]
		\centering
	\begin {tabular} {cccccccc}
	    \includegraphics [scale=.40,page=2] {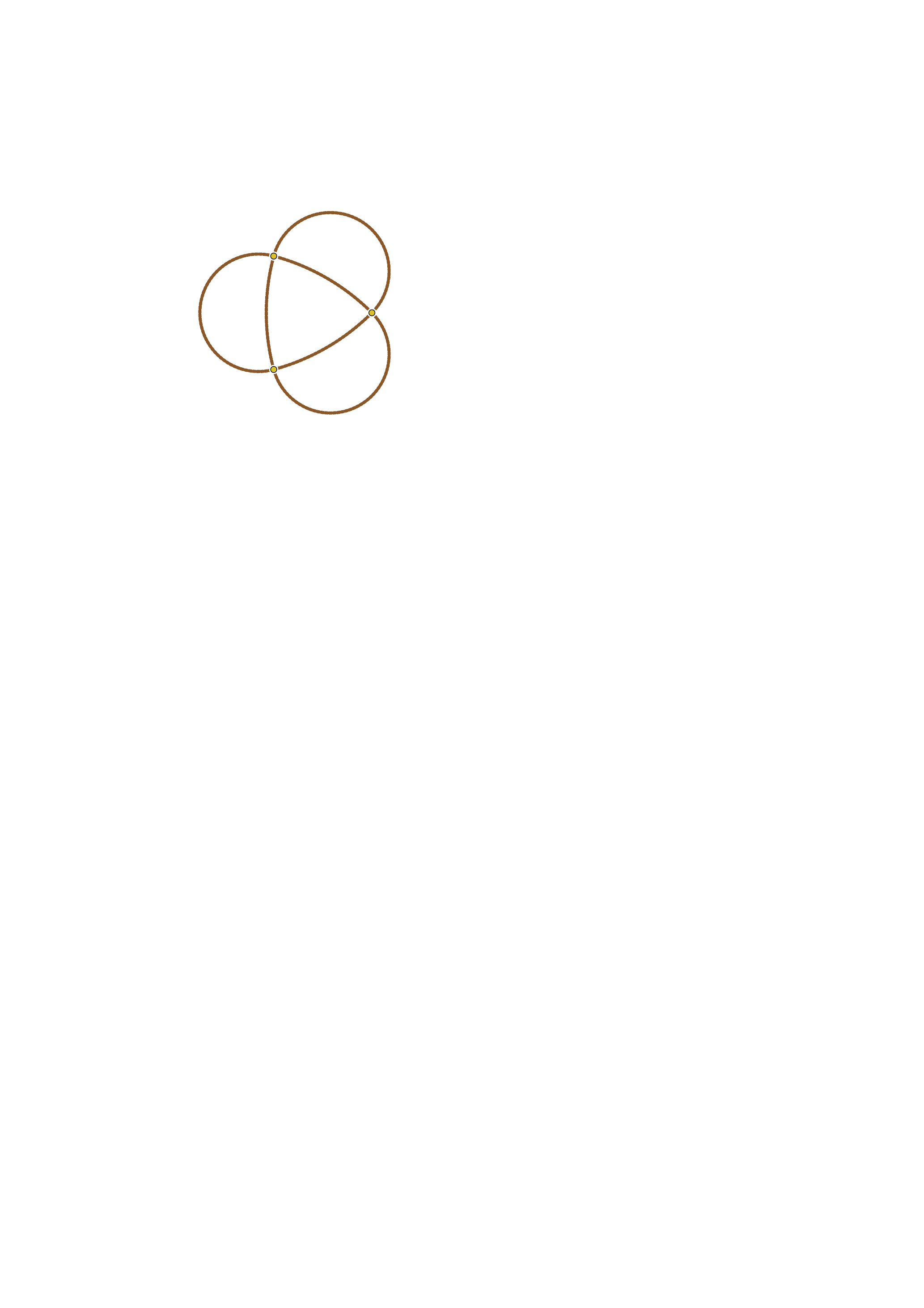} $3_1$
	  & \includegraphics [scale=.40,page=2] {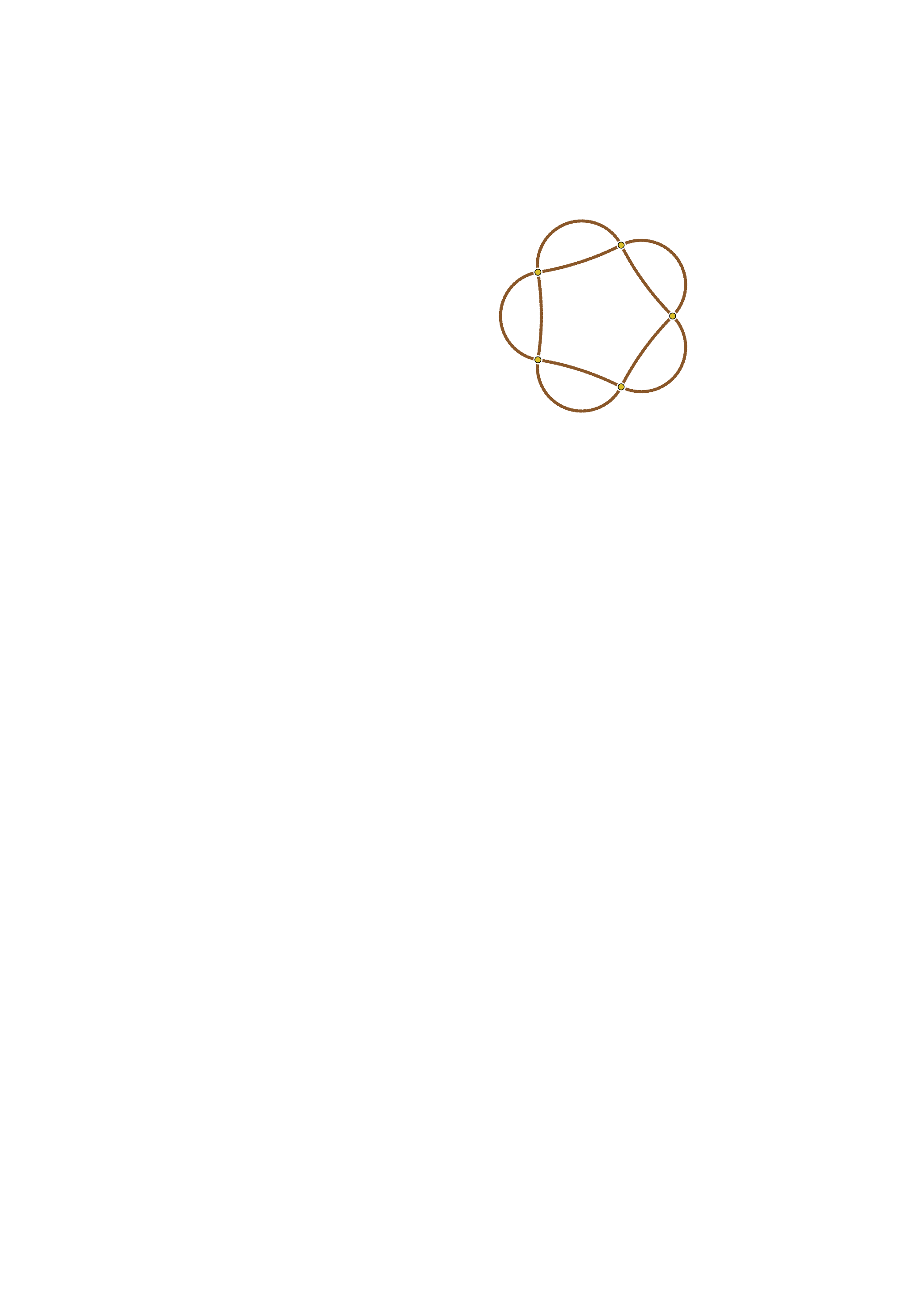} $5_1$
	  & \includegraphics [scale=.40,page=2] {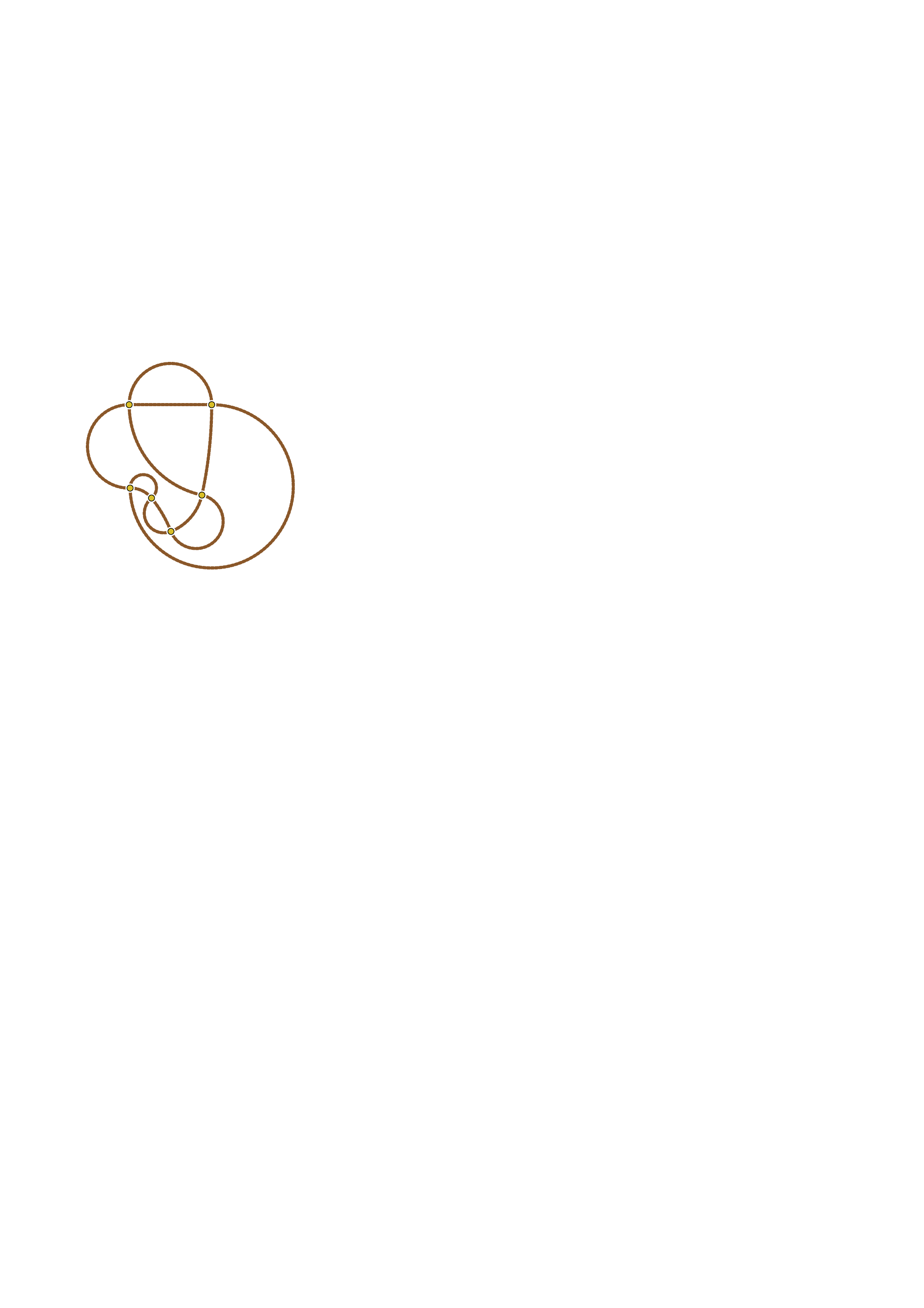} $6_1$
	  & \includegraphics [scale=.40,page=2] {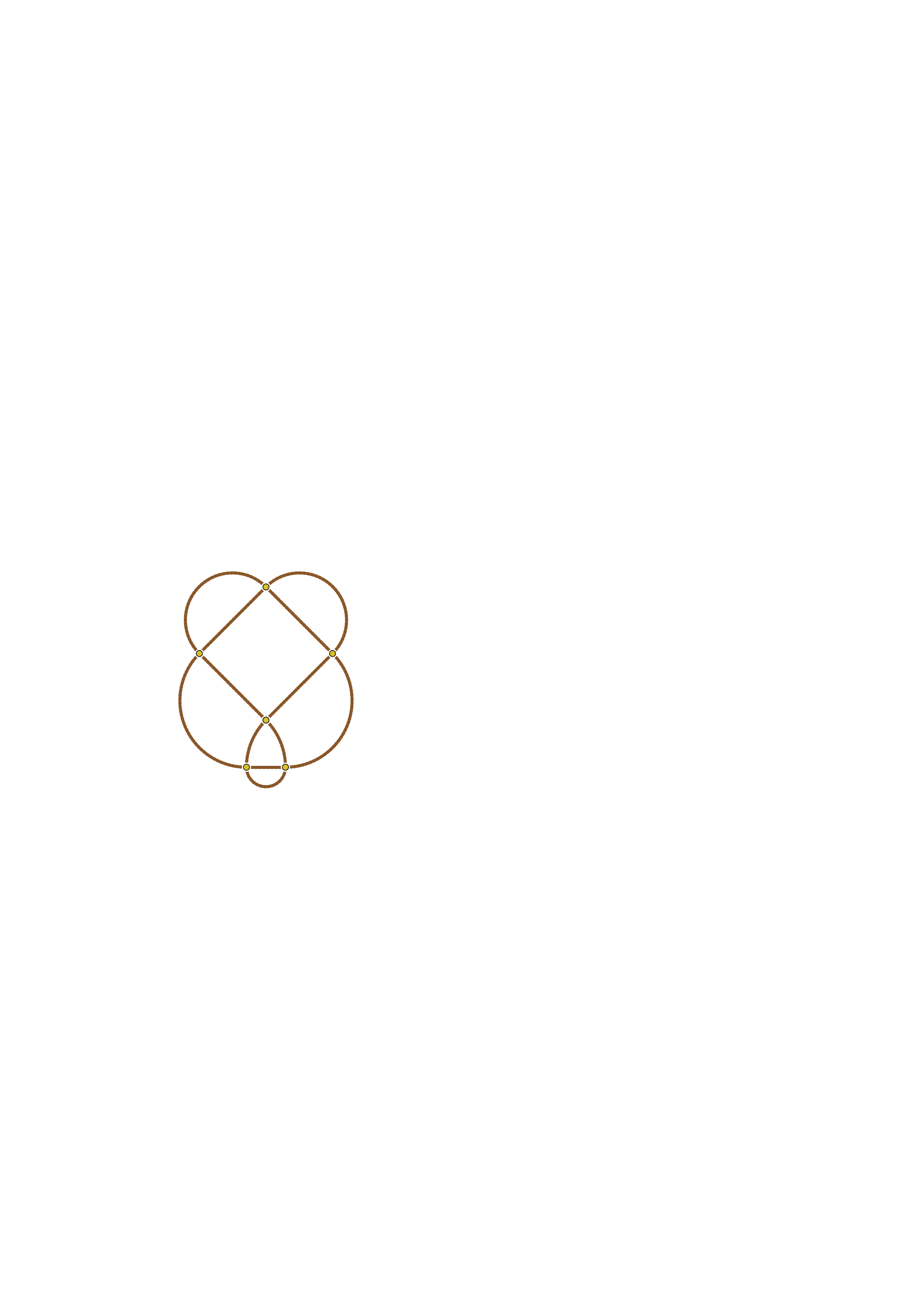} $6_2$
	  & \includegraphics [scale=.40,page=2] {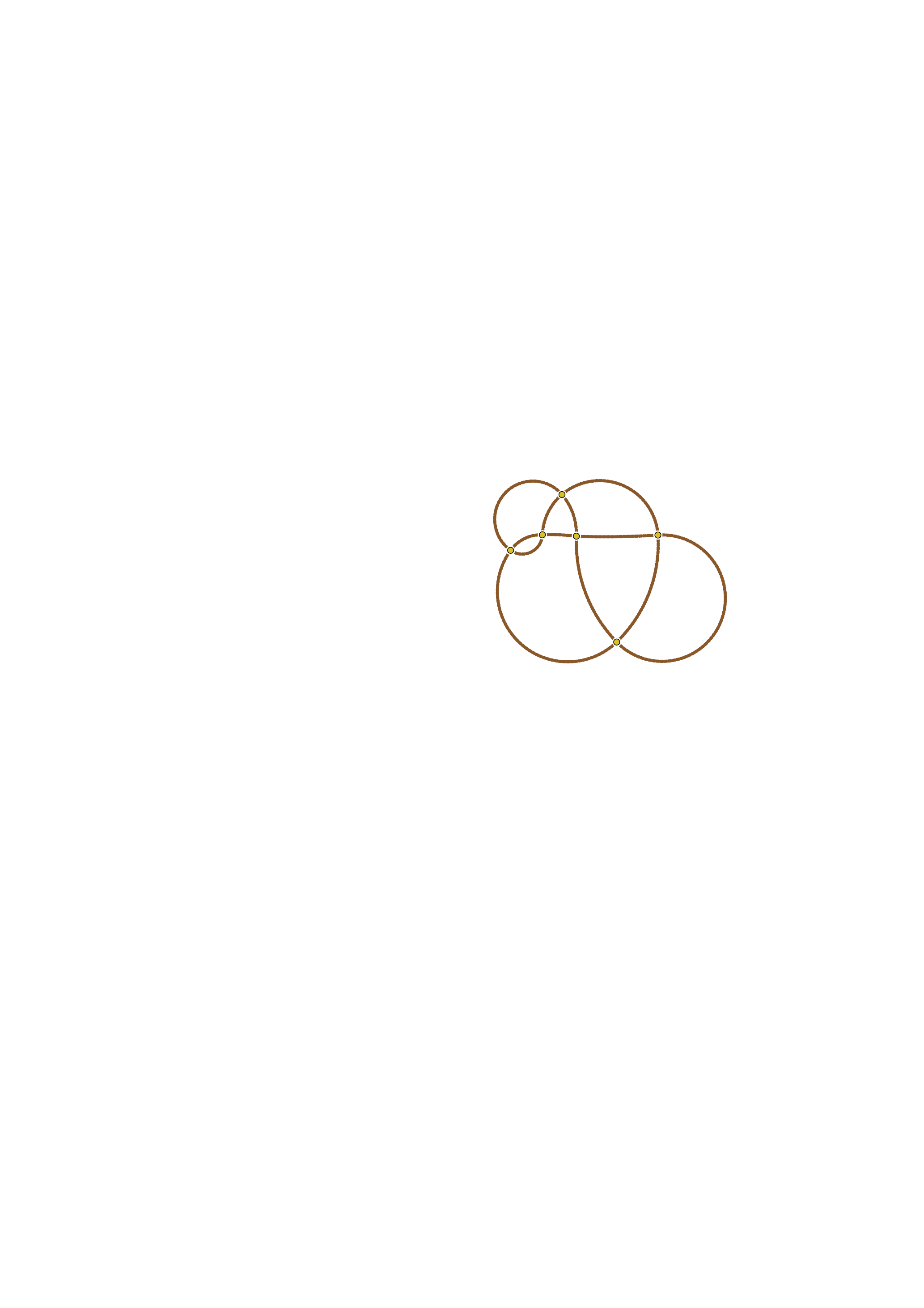} $6_3$ \\[1em]
	  \includegraphics [scale=.40,page=2] {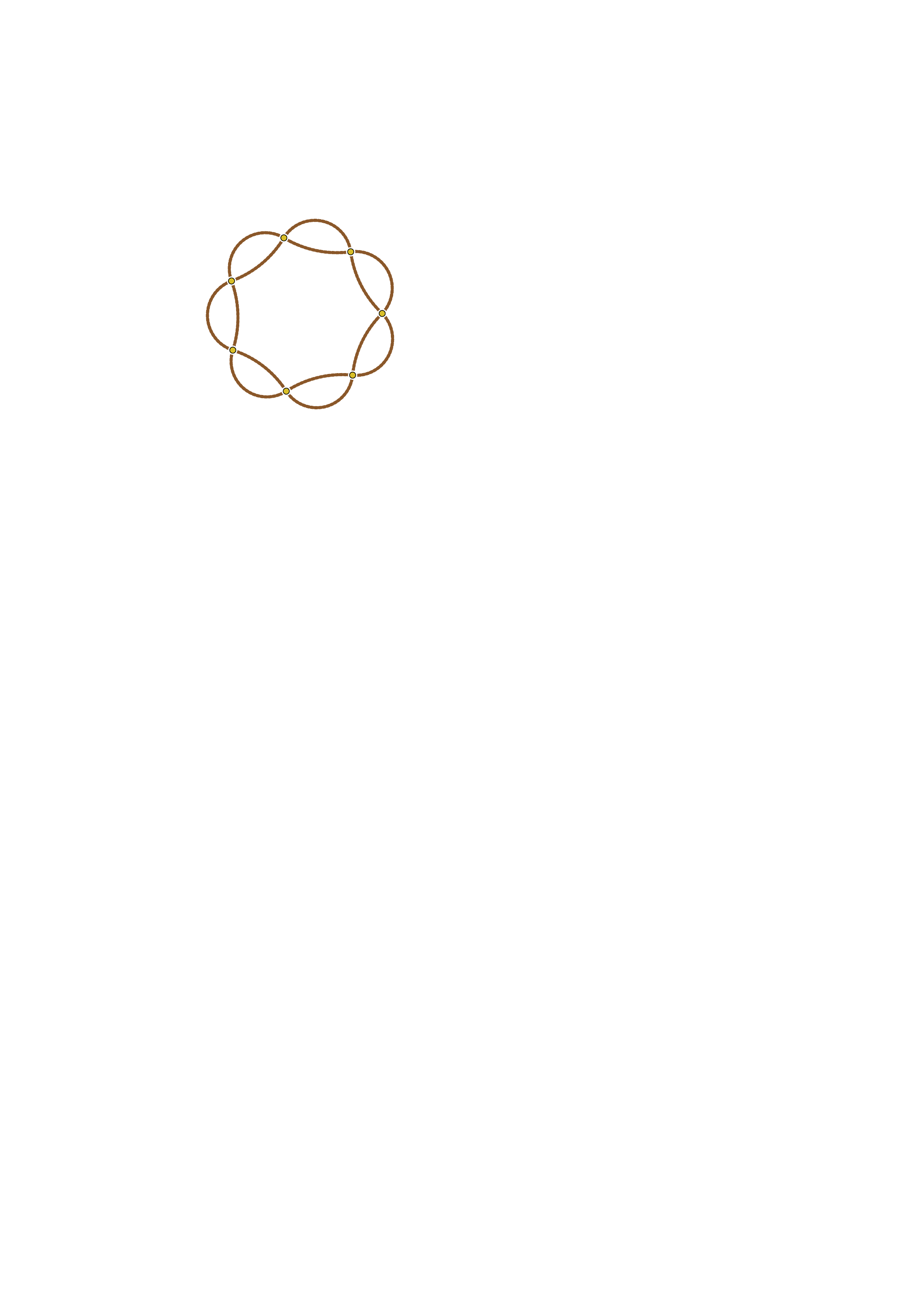} $7_1$  
	  & \includegraphics [scale=.40,page=2] {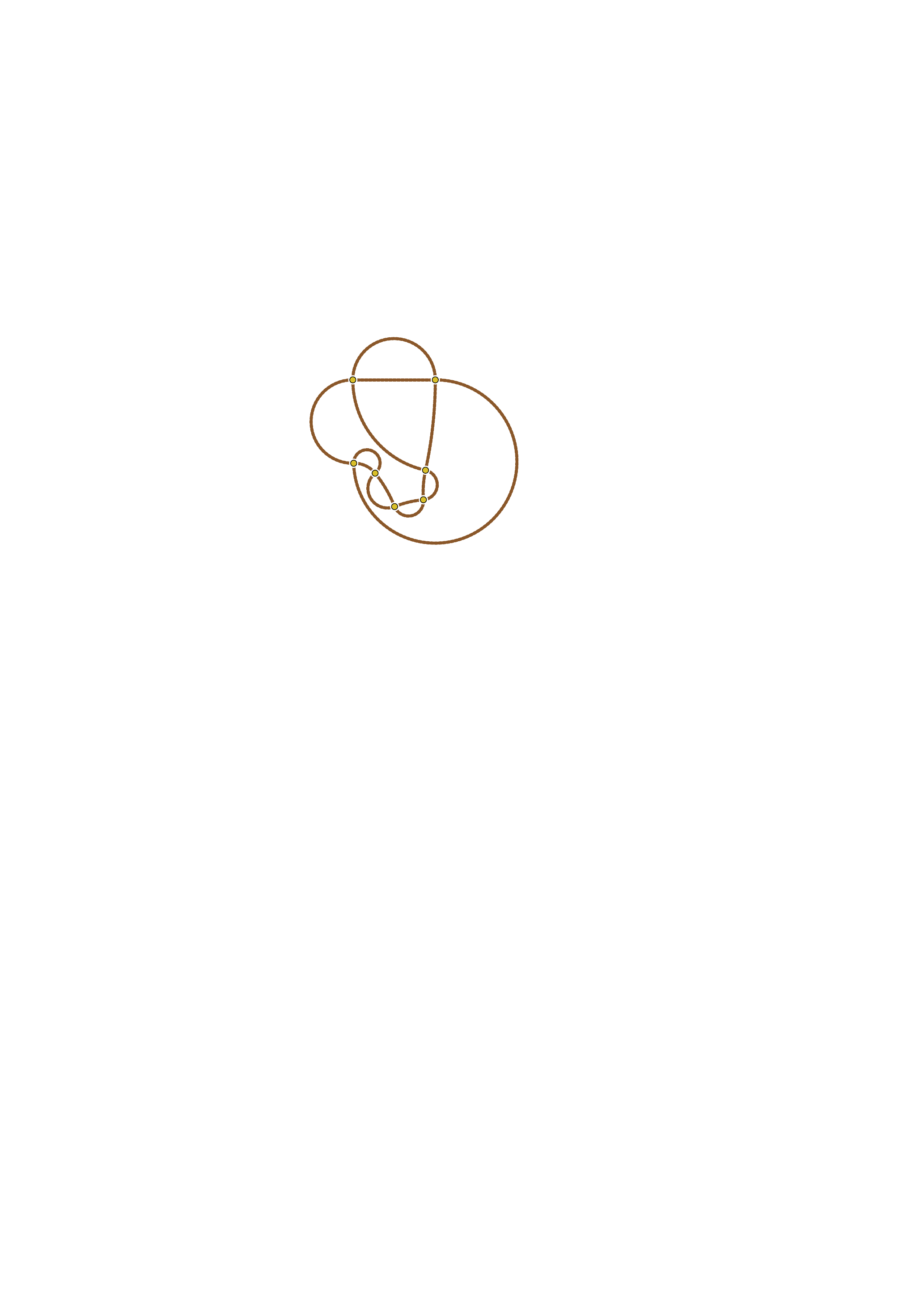} $7_2$  
	  & \includegraphics [scale=.40,page=2] {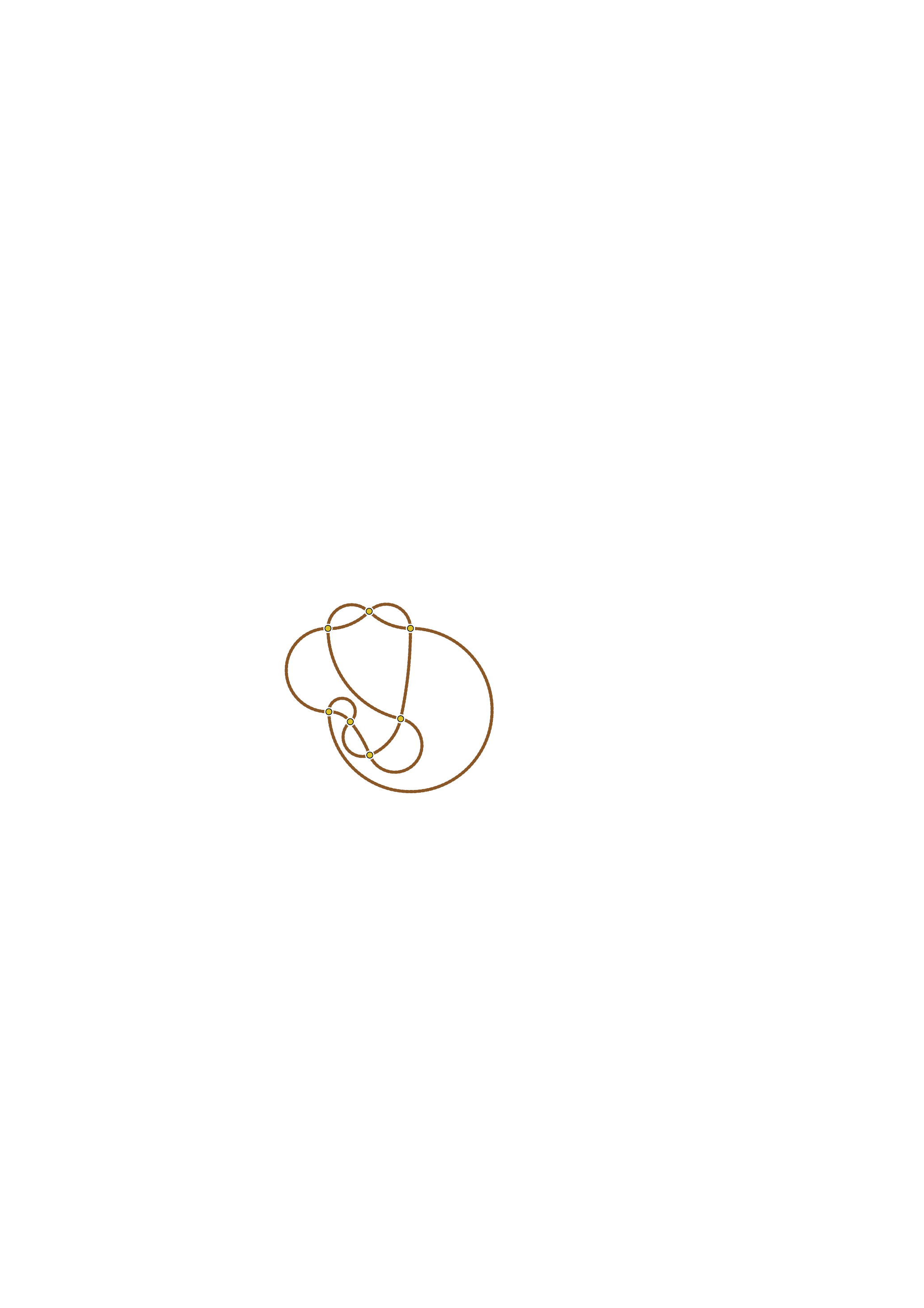} $7_3$  
	  & \includegraphics [scale=.40,page=2] {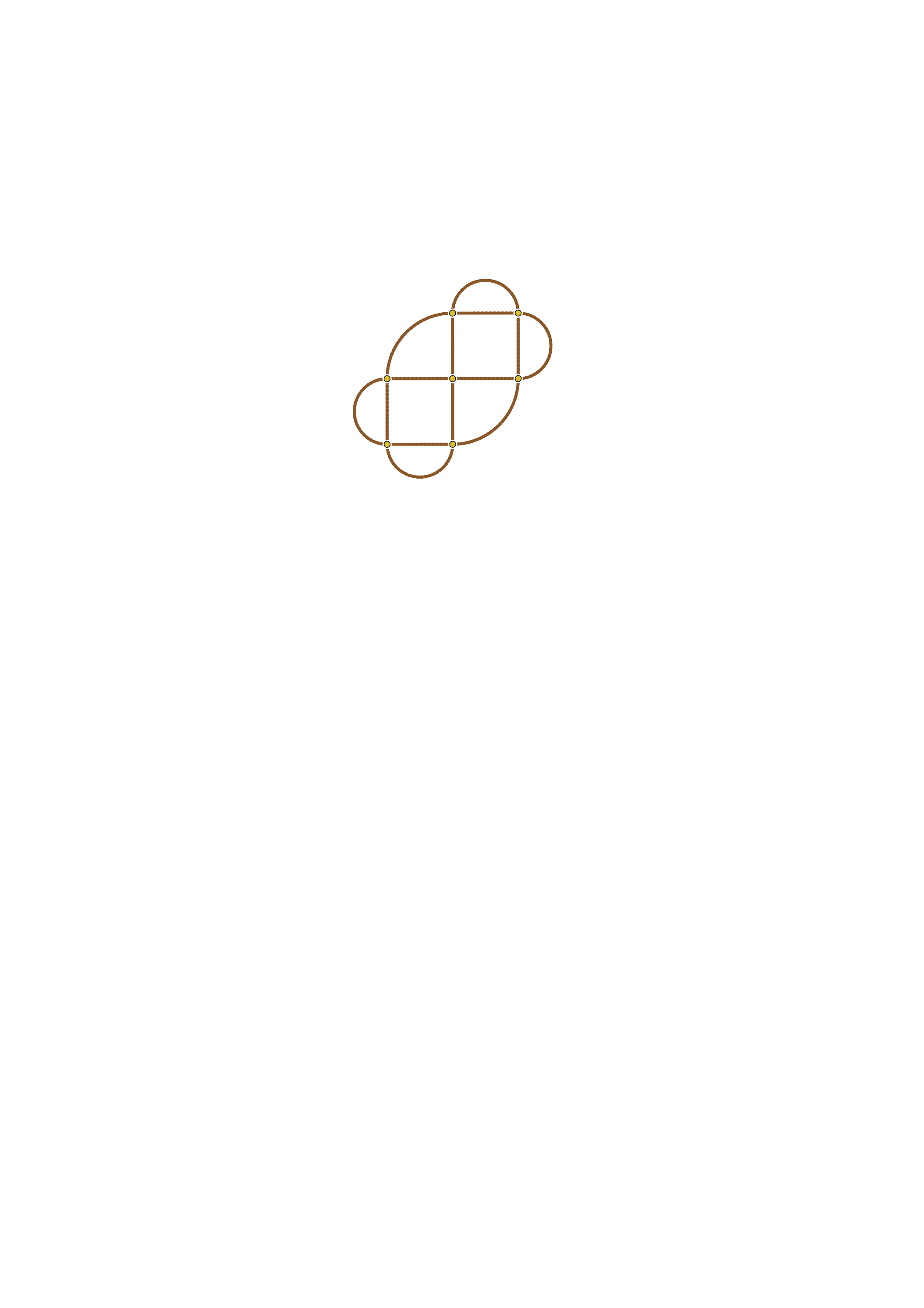} $7_4$  
	  & \includegraphics [scale=.40,page=2] {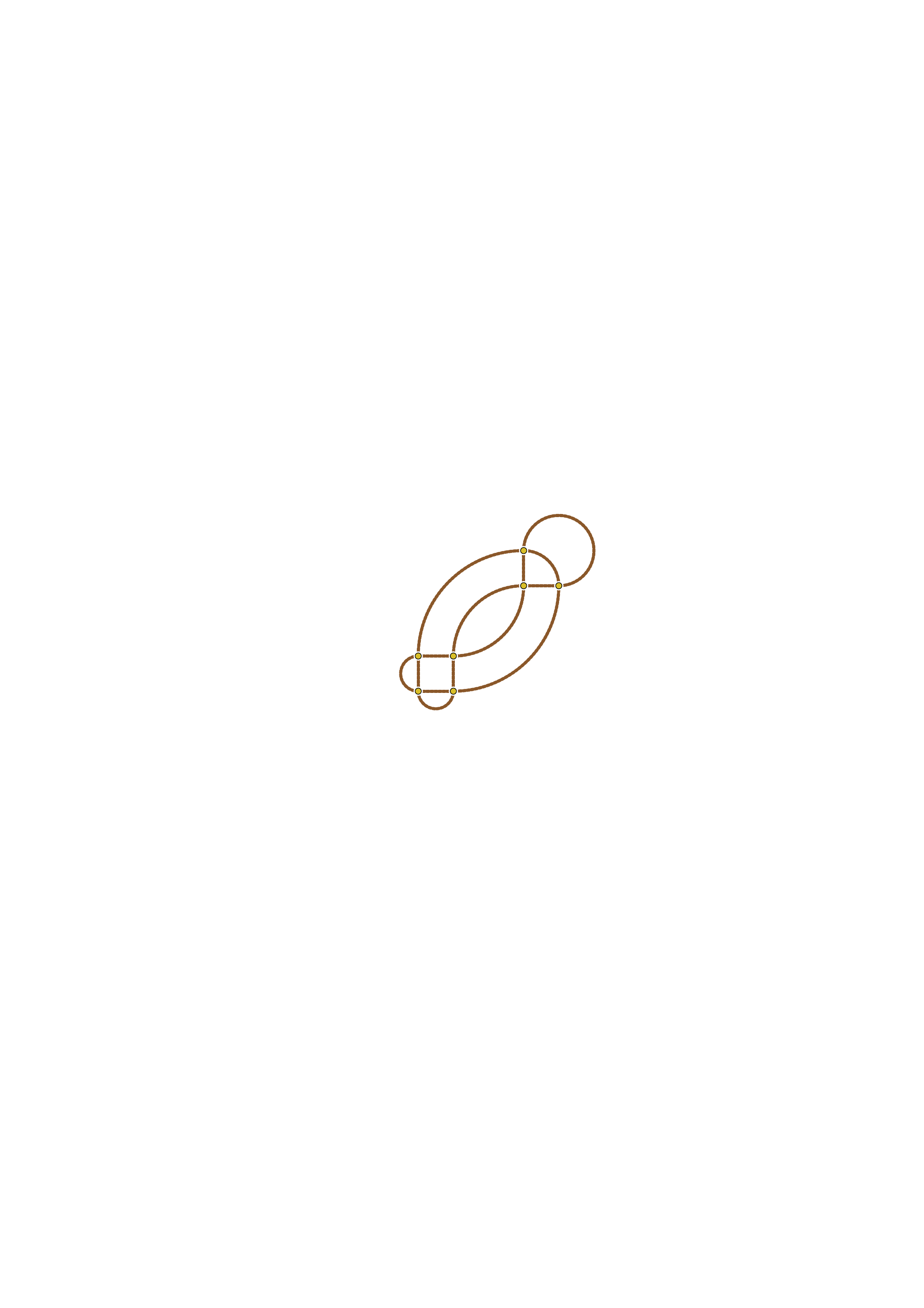} $7_5$  \\[1em]
	  \includegraphics [scale=.40,page=2] {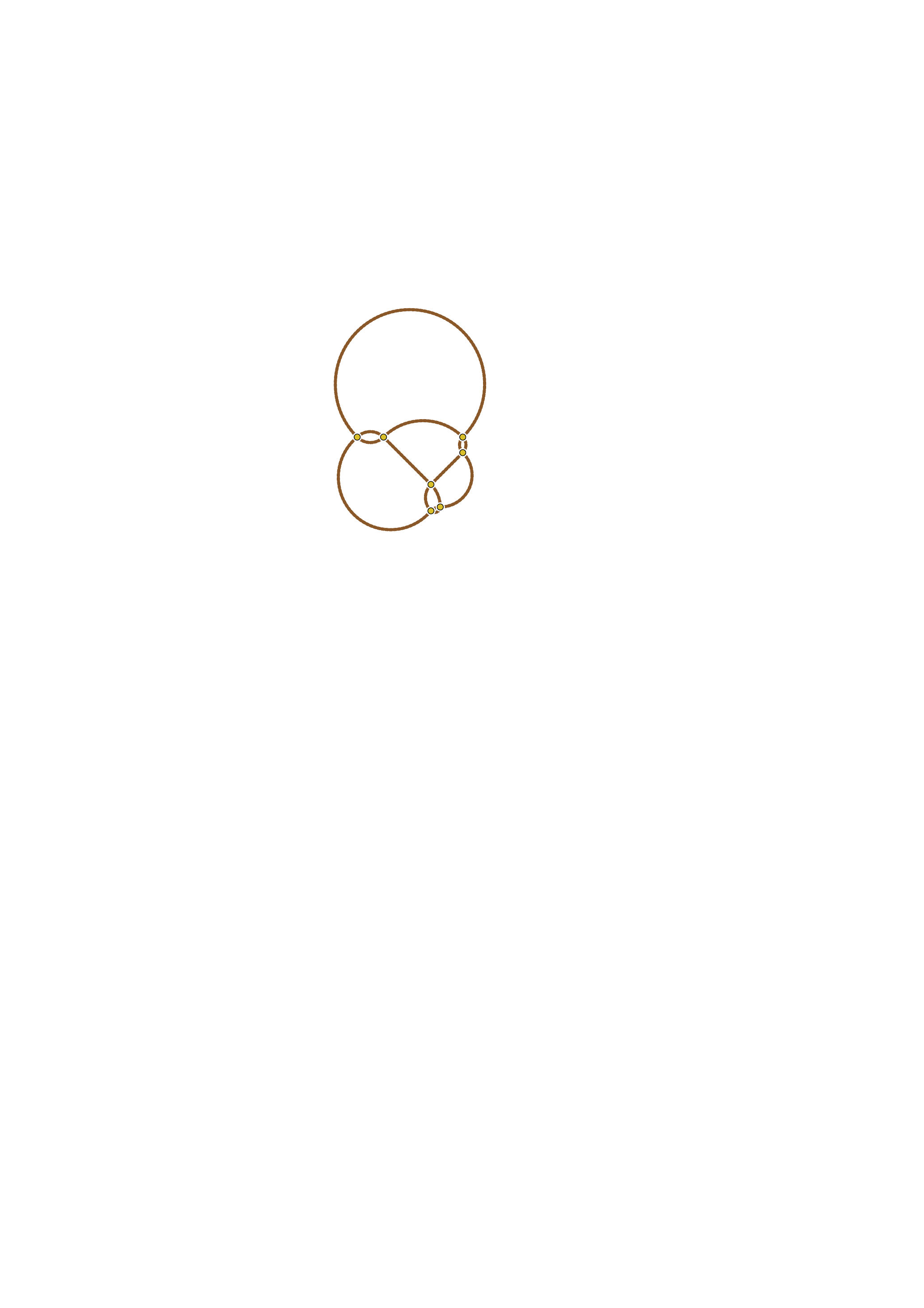} $7_6$  
	  & \includegraphics [scale=.40,page=2] {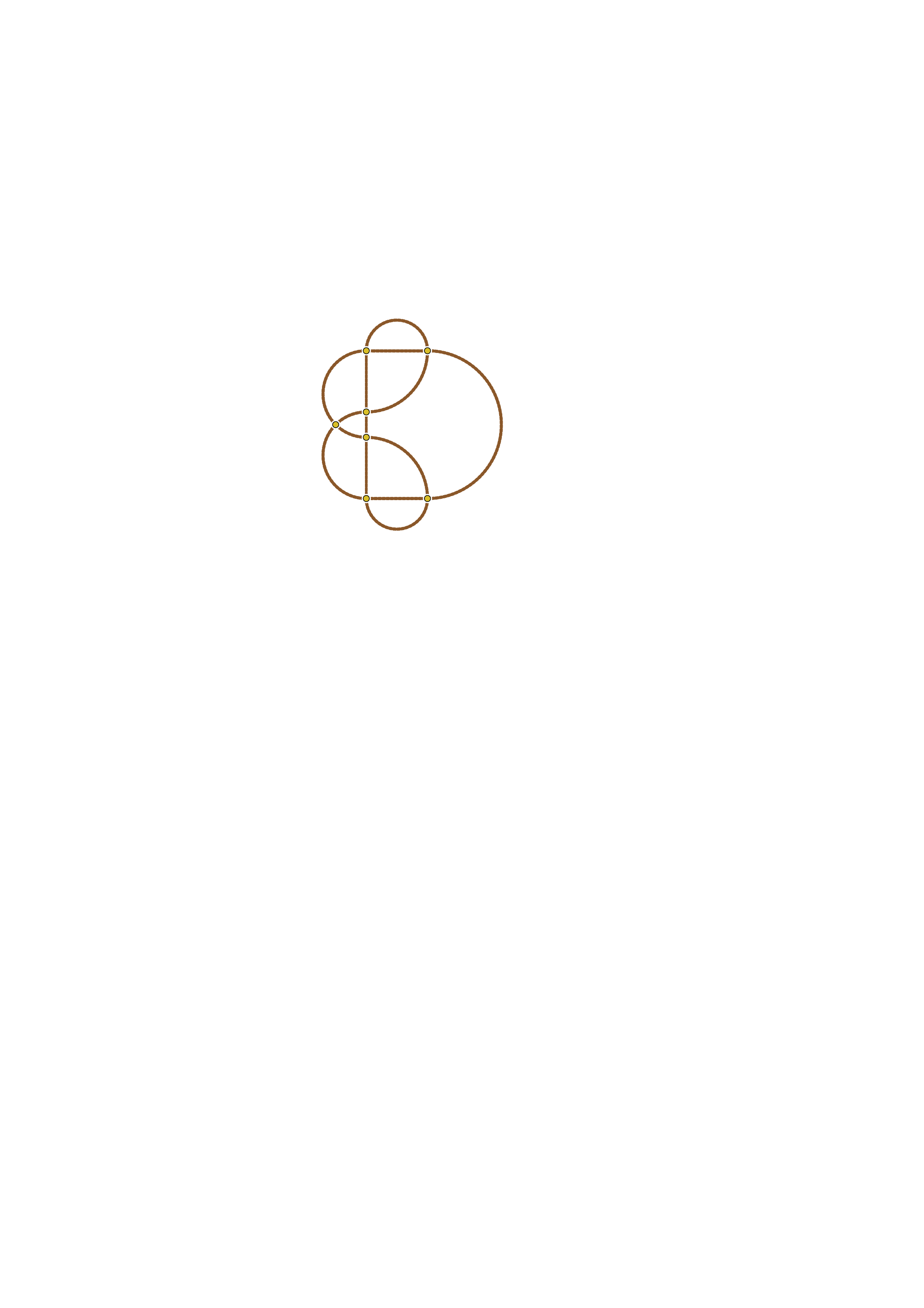} $7_7$  
	  & \includegraphics [scale=.40,page=2] {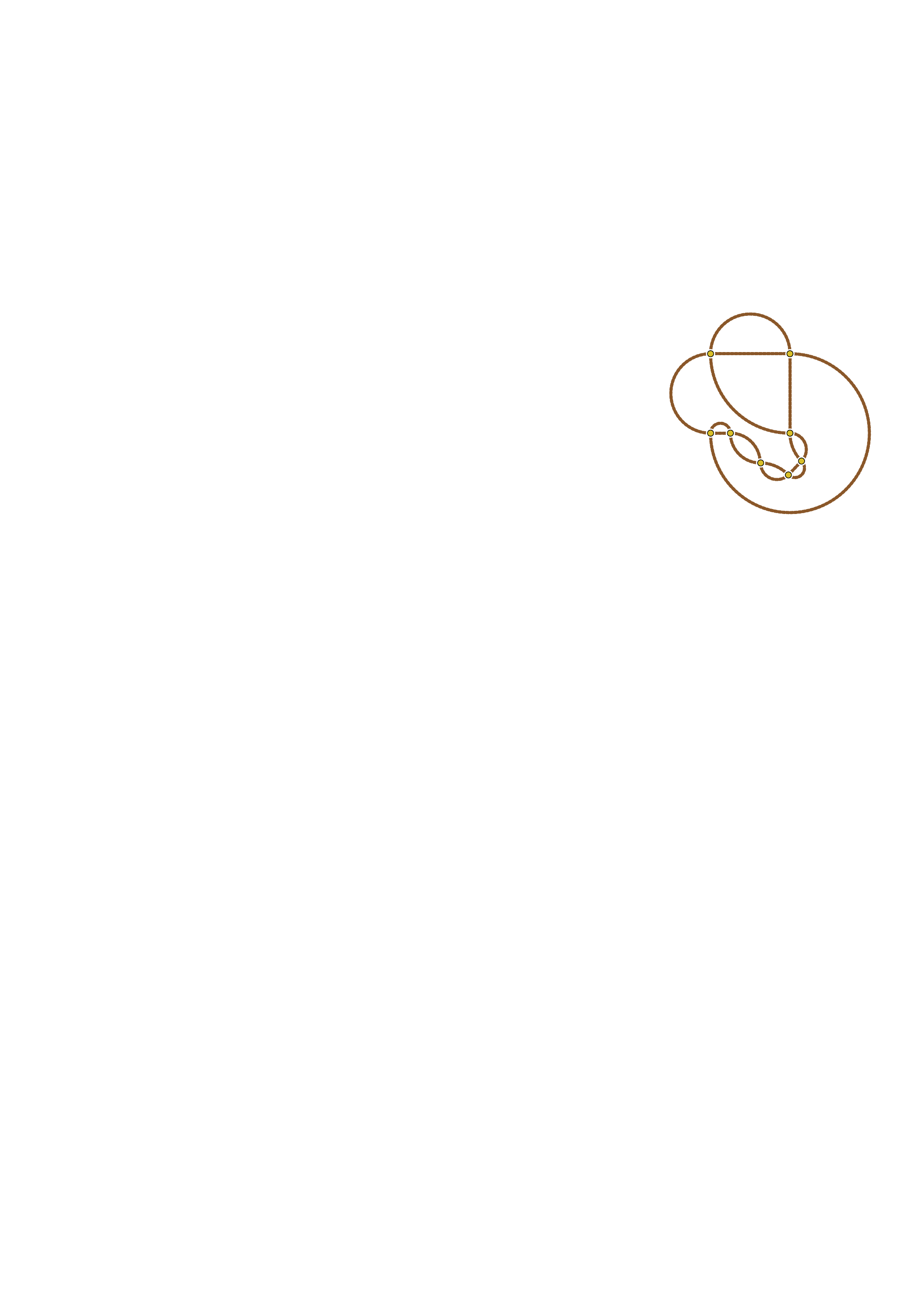} $8_{1}$  
	  & \includegraphics [scale=.40,page=2] {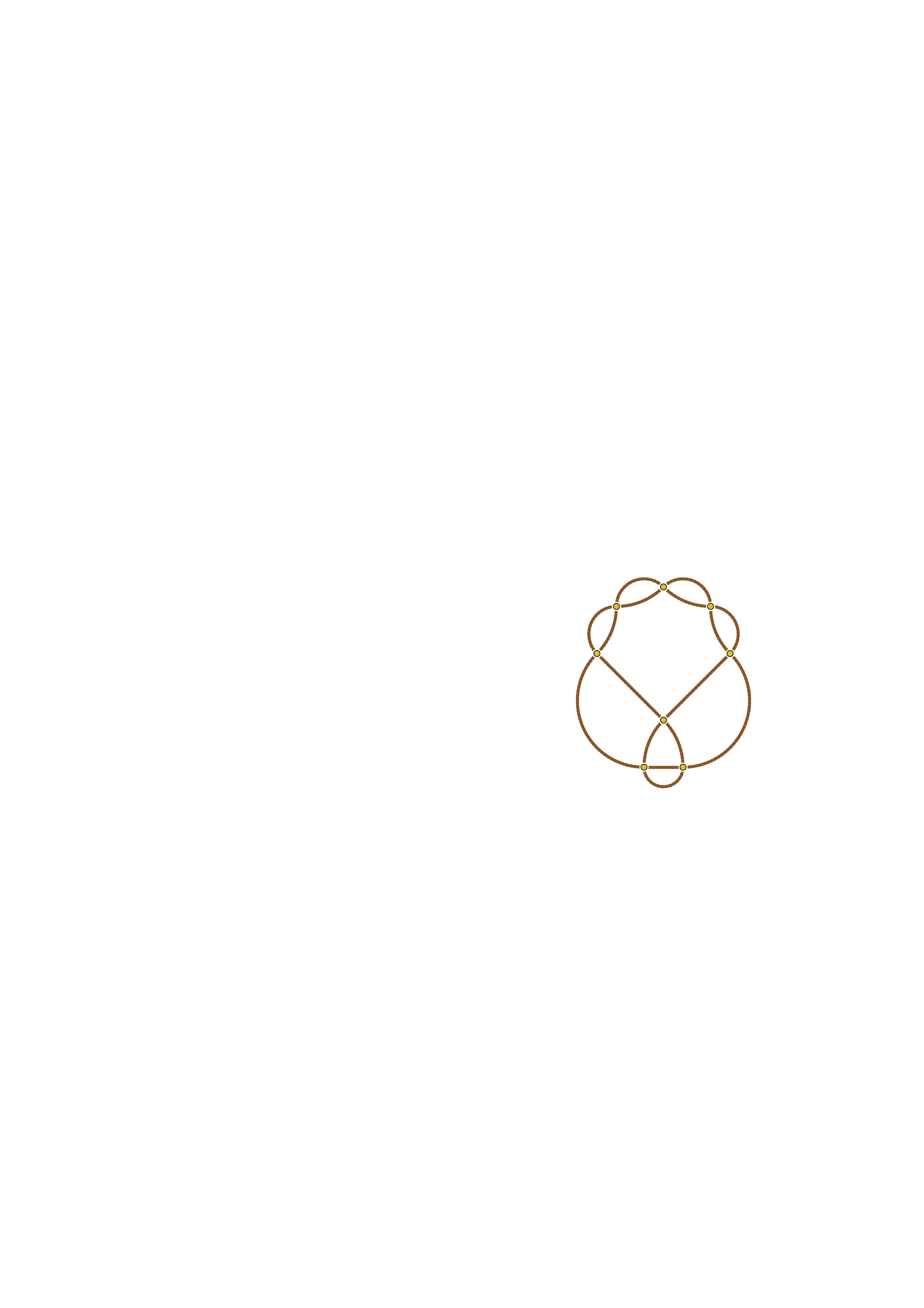} $8_{2}$  
	  & \includegraphics [scale=.40,page=2] {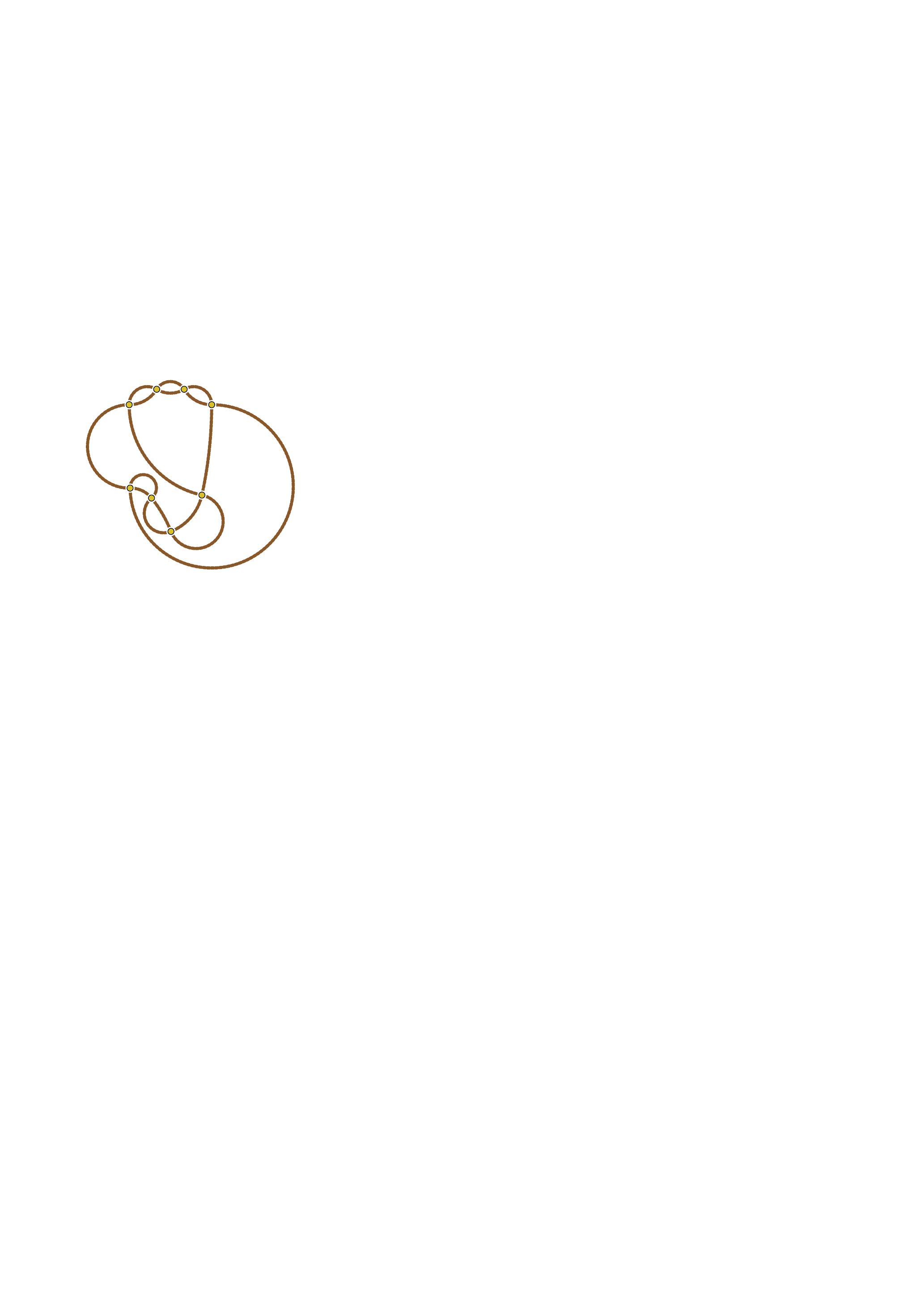} $8_{3}$   \\[1em]
	   \includegraphics [scale=.40,page=2] {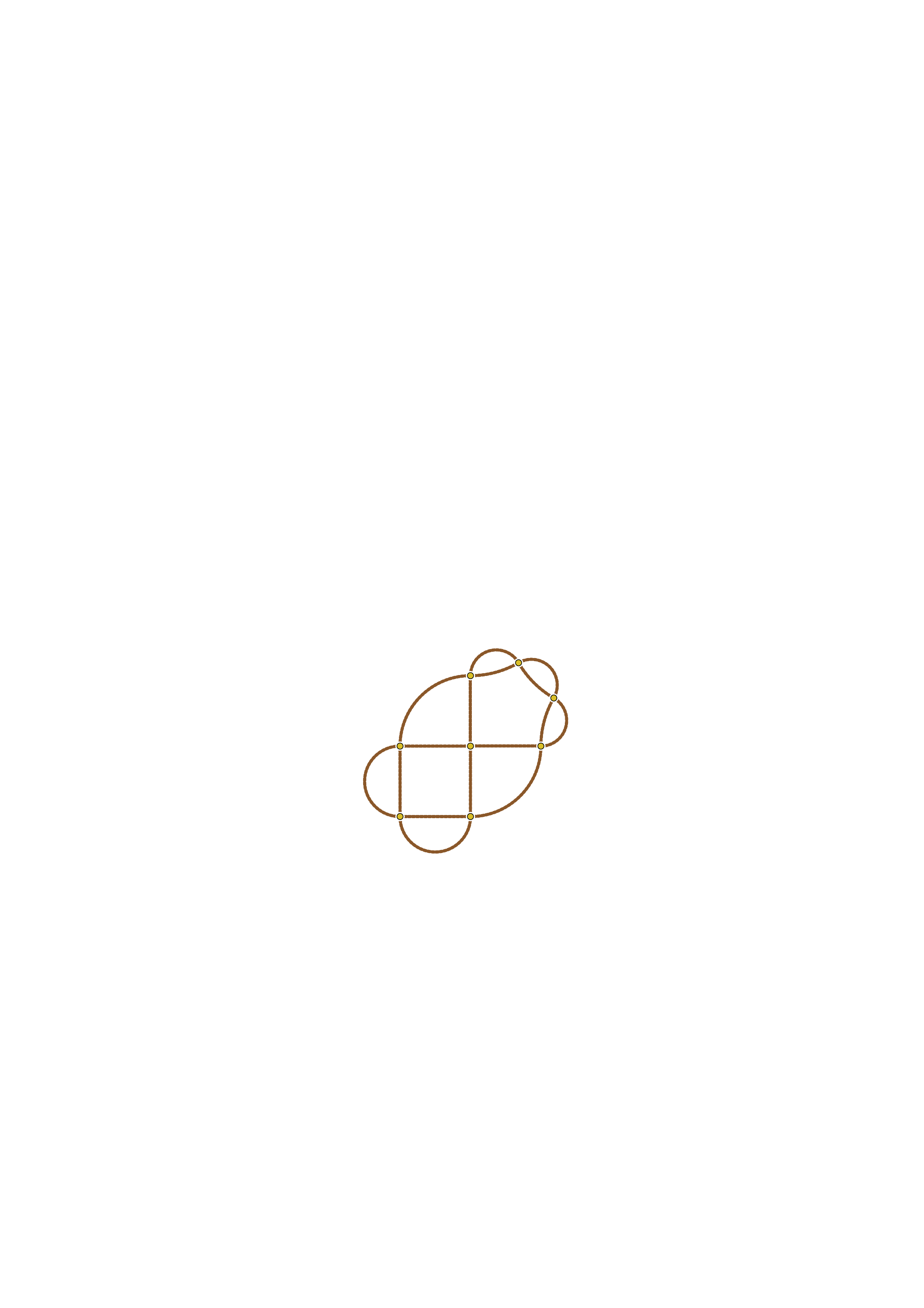} $8_{4}$  
	  & \includegraphics [scale=.40,page=2] {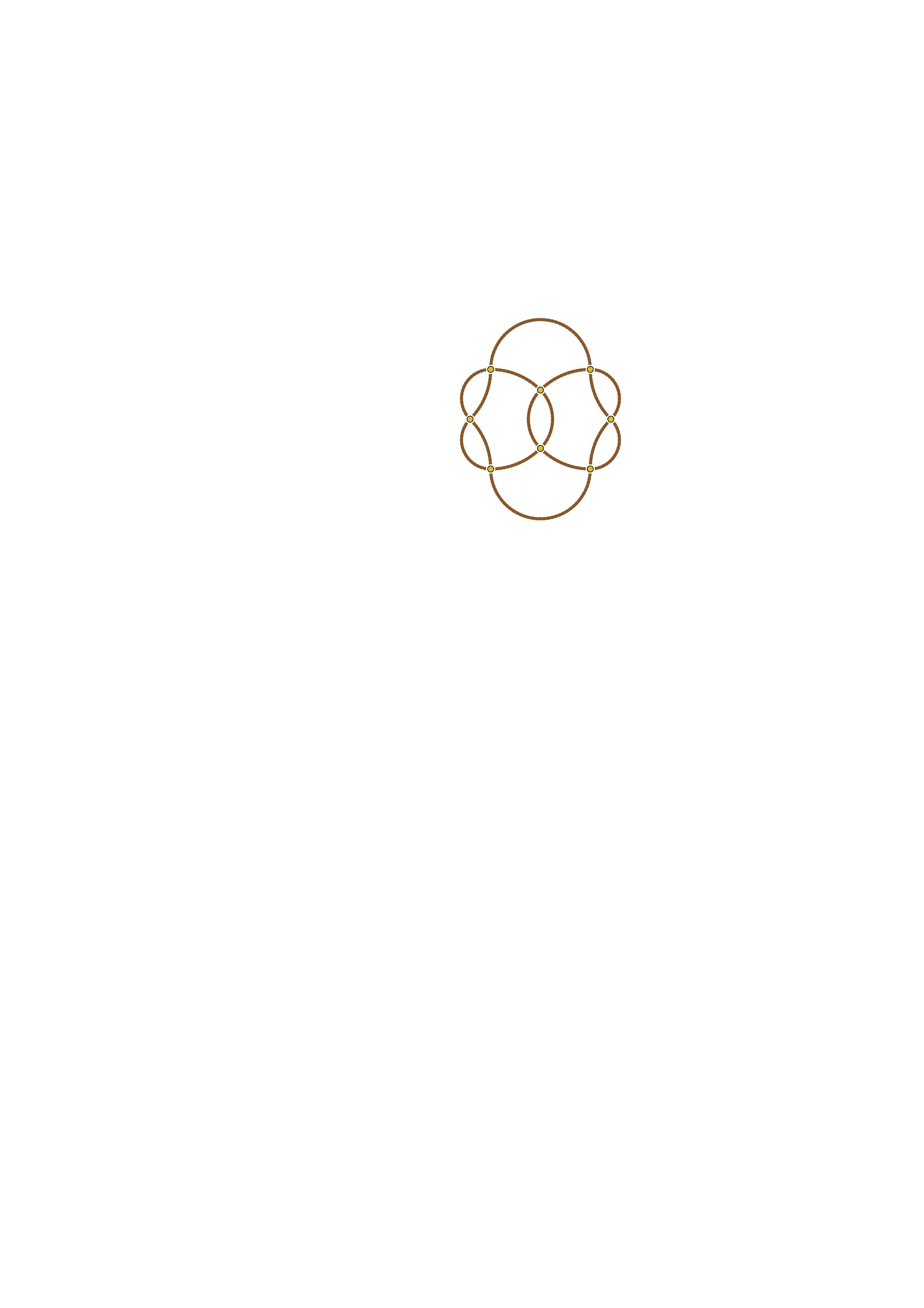} $8_{5}$  
	  & \includegraphics [scale=.40,page=2] {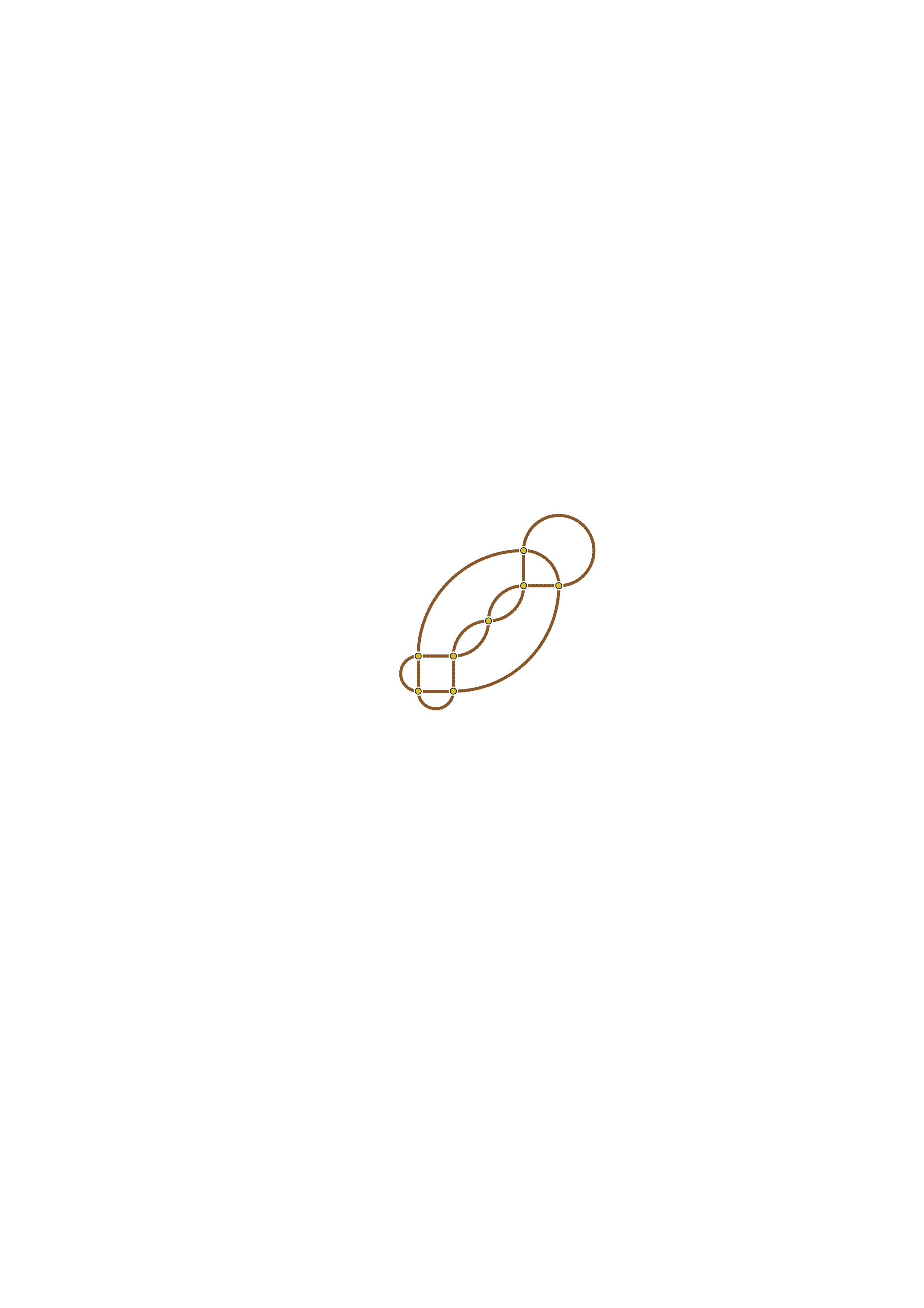} $8_{6}$  
	  & \includegraphics [scale=.40,page=2] {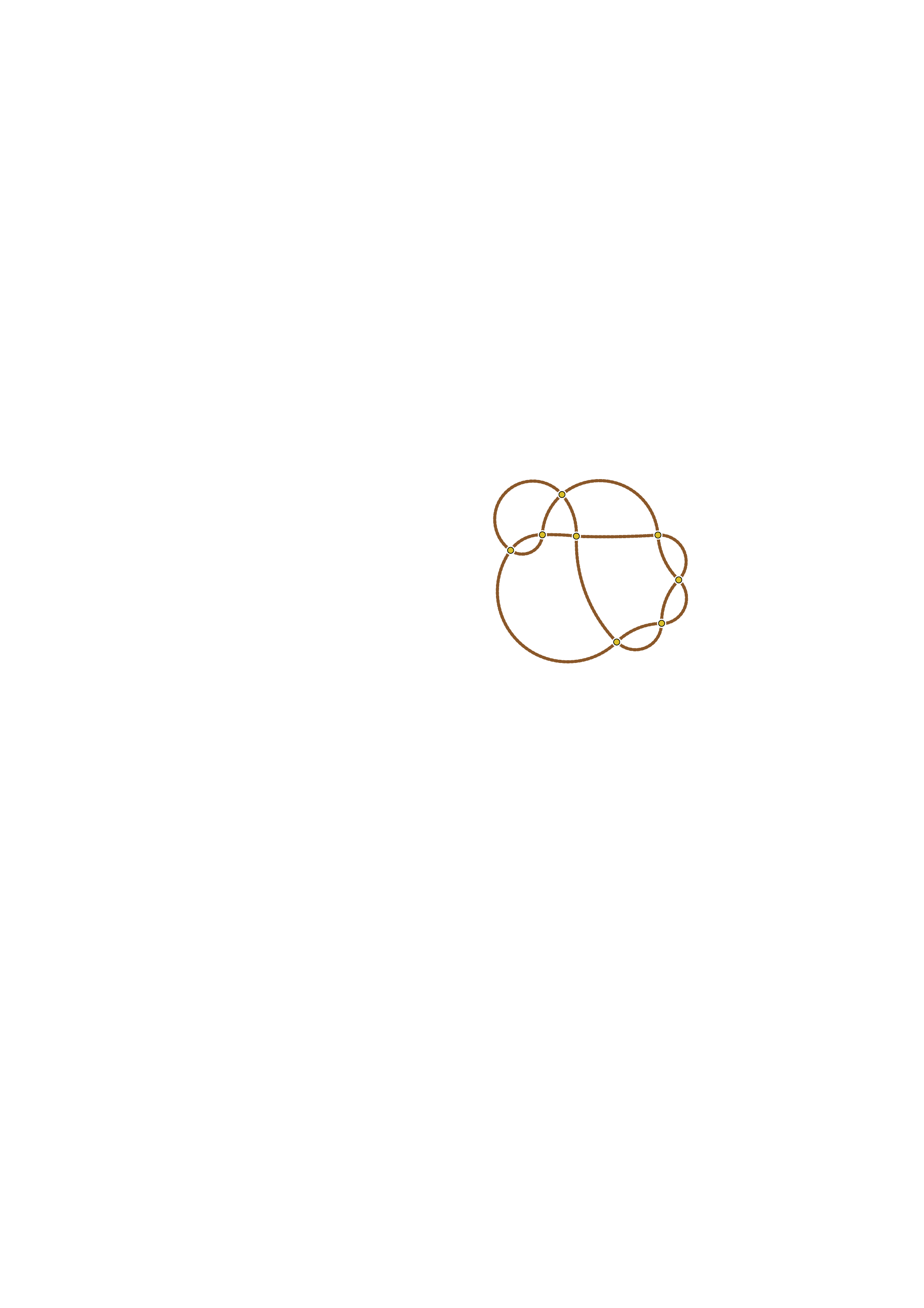} $8_{7}$  
	  & \includegraphics [scale=.40,page=2] {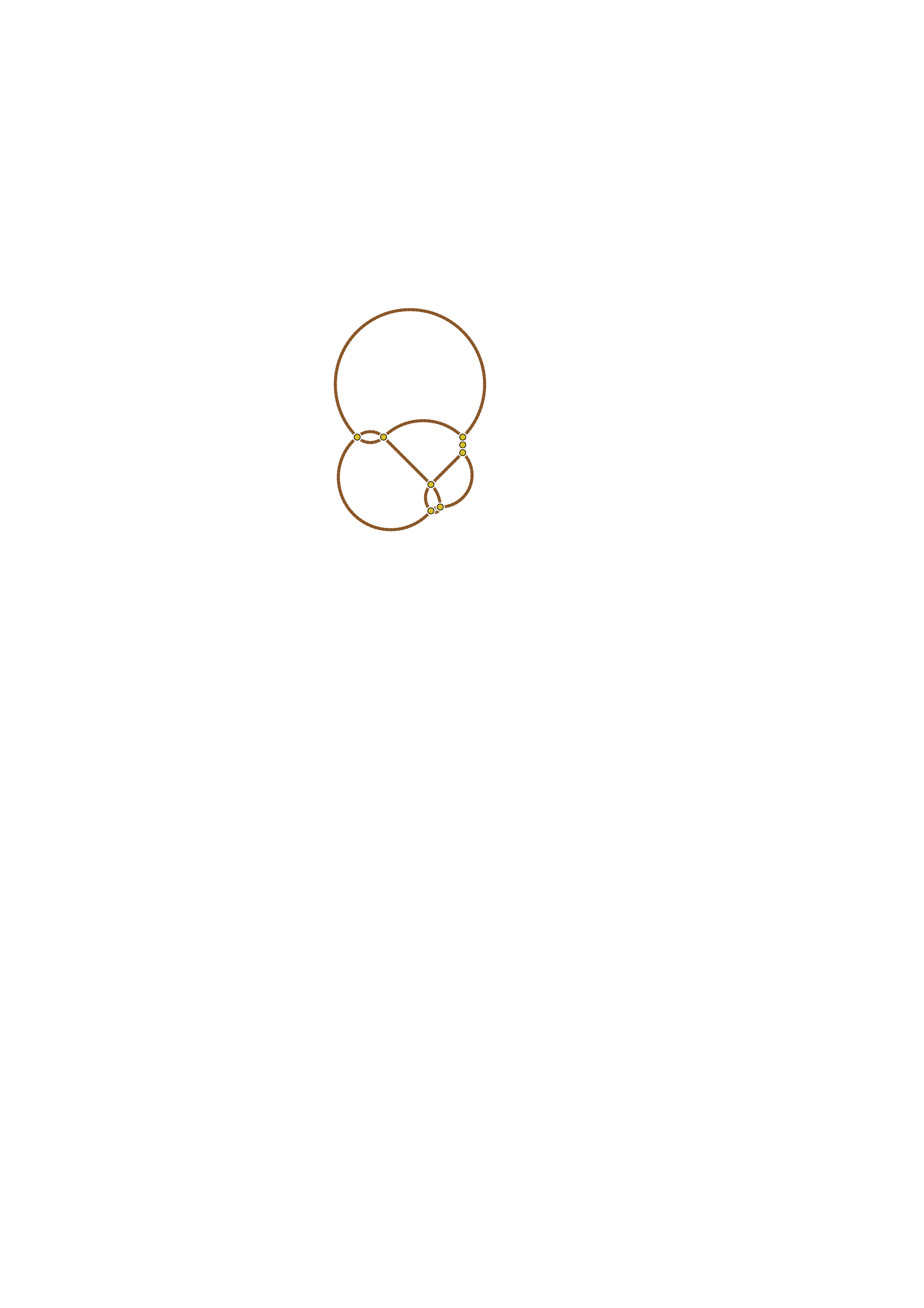} $8_{8}$   \\[1em]
	   \includegraphics [scale=.40,page=2] {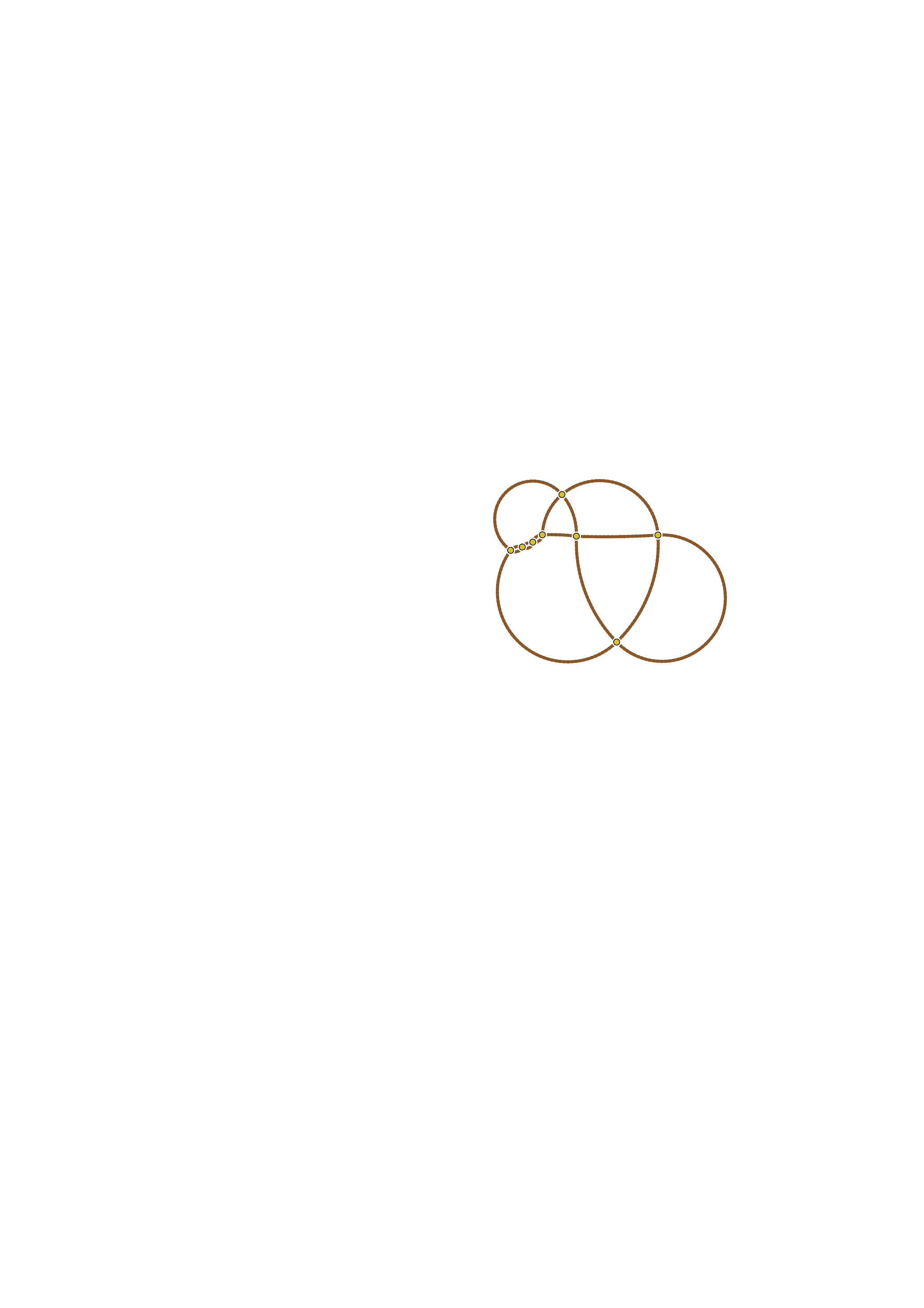} $8_{9}$  
	  & \includegraphics [scale=.40,page=2] {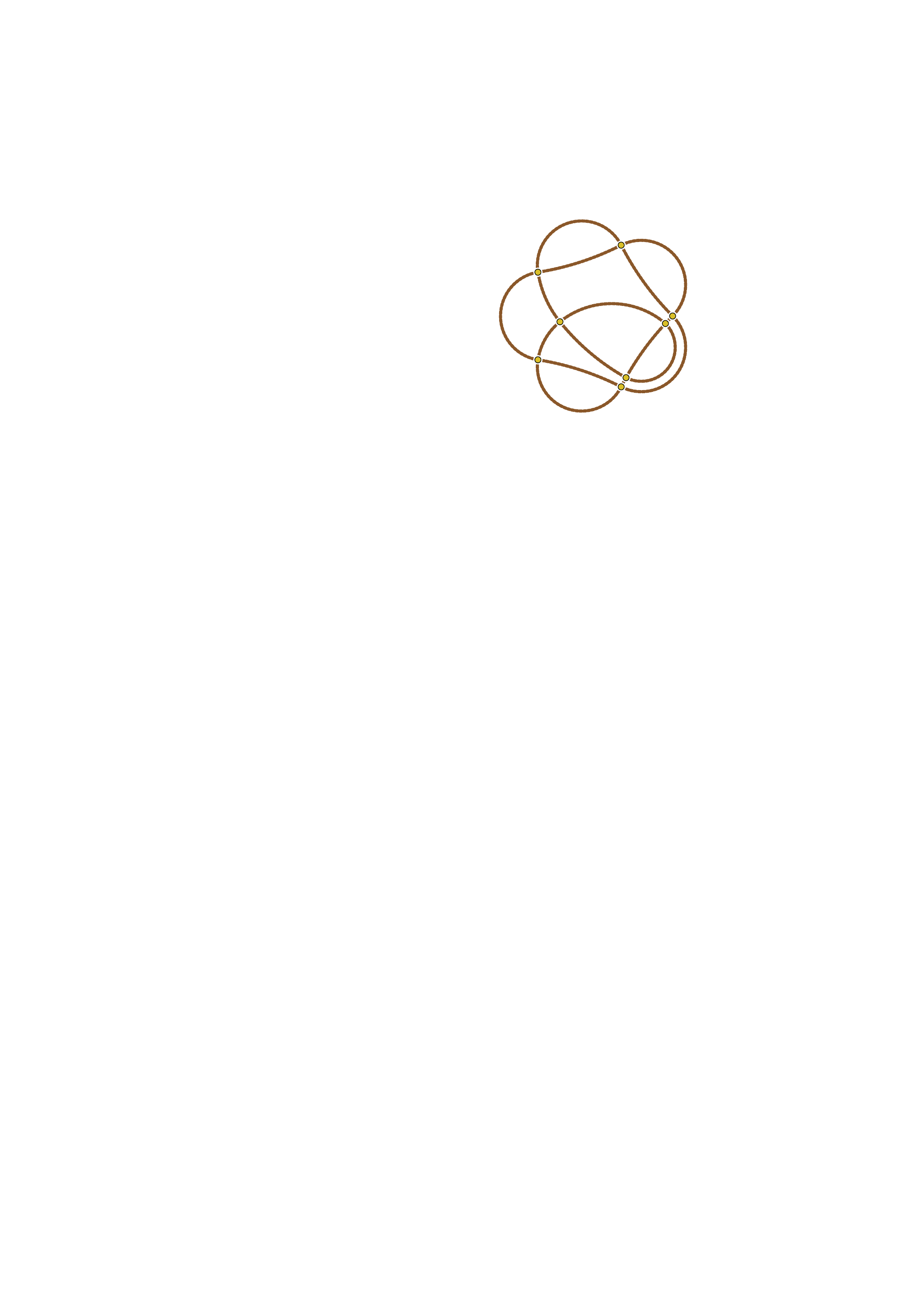} $8_{10}$  
	  & \includegraphics [scale=.40,page=2] {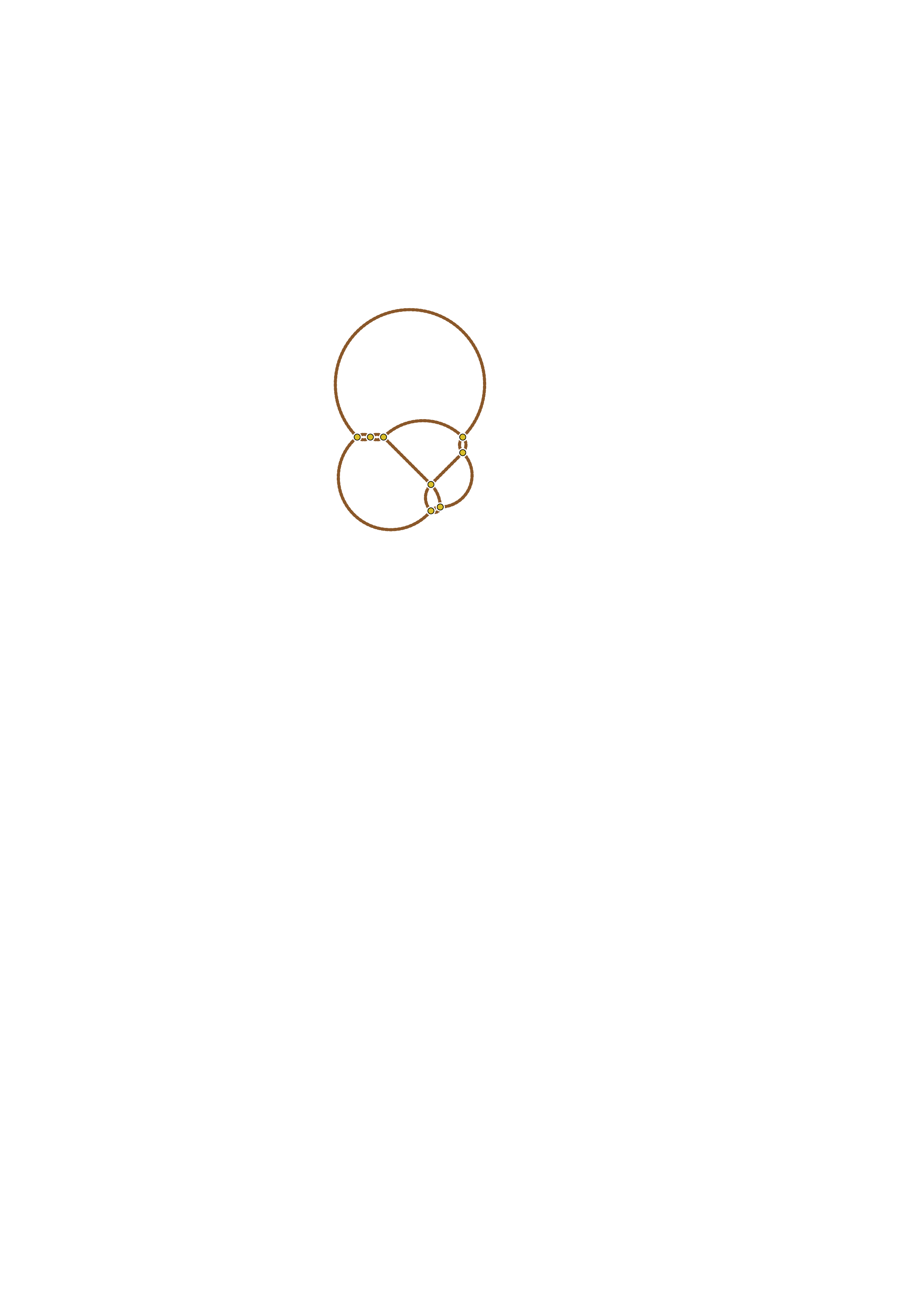} $8_{11}$  
	  & \includegraphics [scale=.40,page=2] {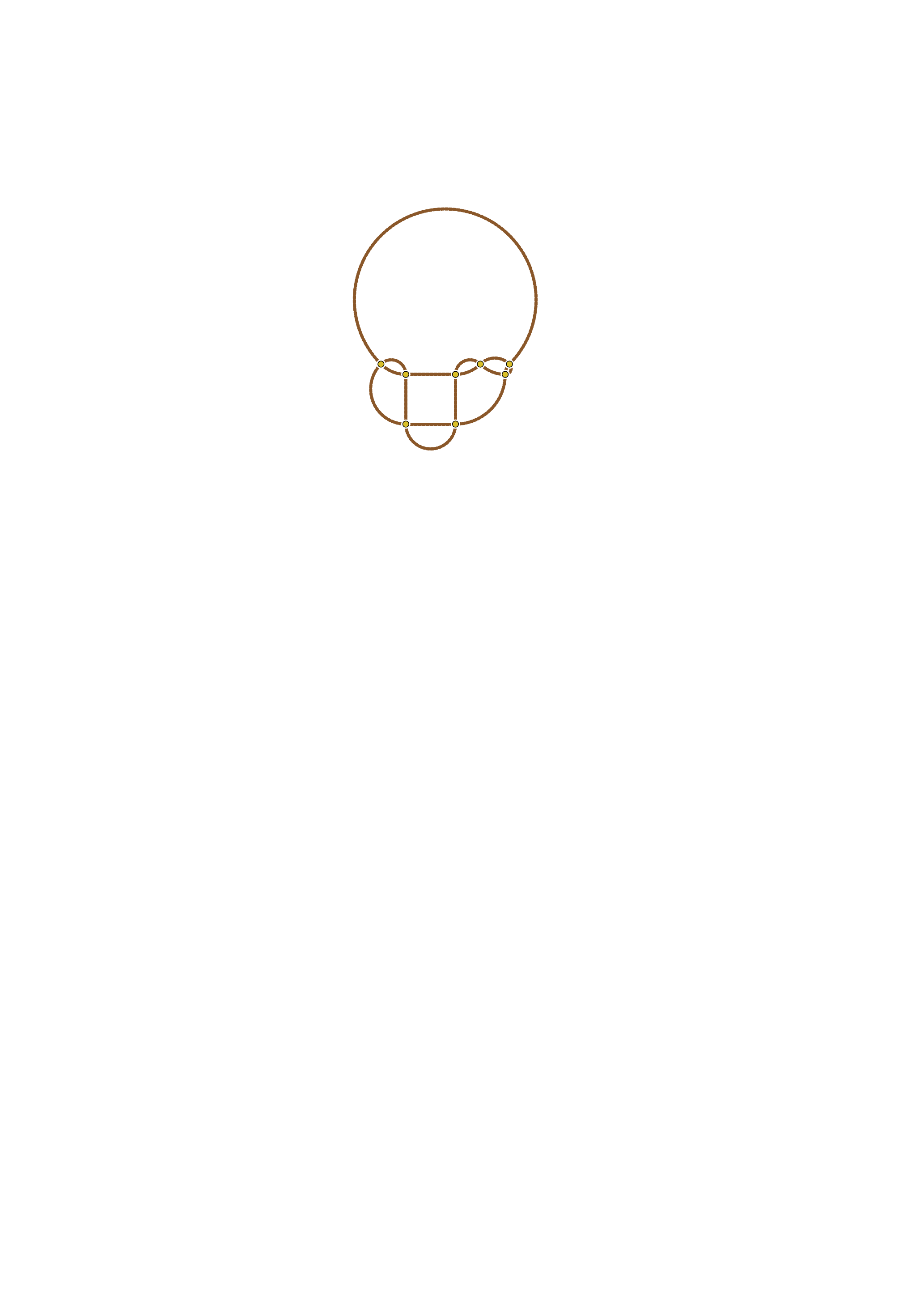} $8_{12}$  
	  & \includegraphics [scale=.40,page=2] {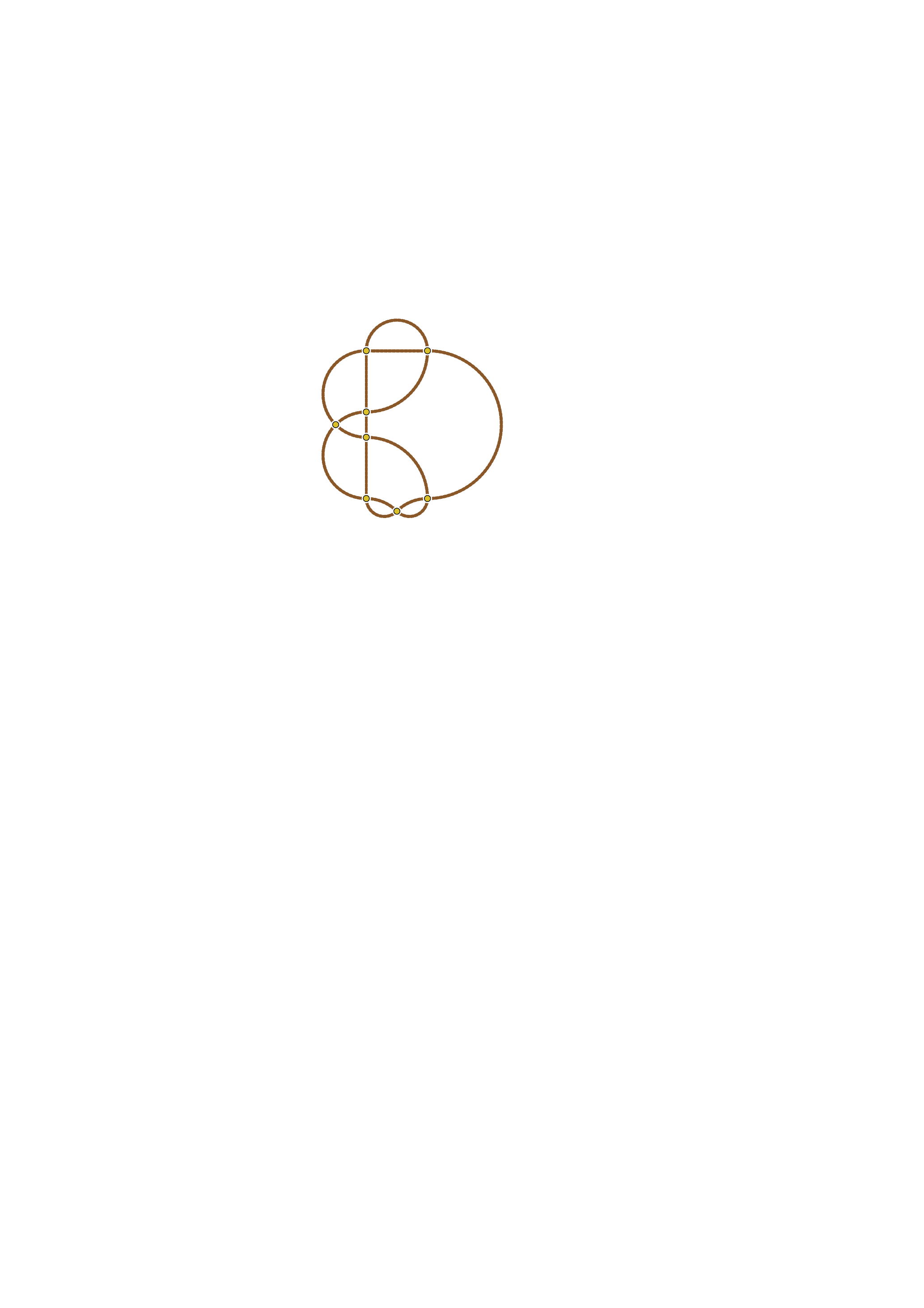} $8_{13}$   \\[1em]
	   \includegraphics [scale=.40,page=2] {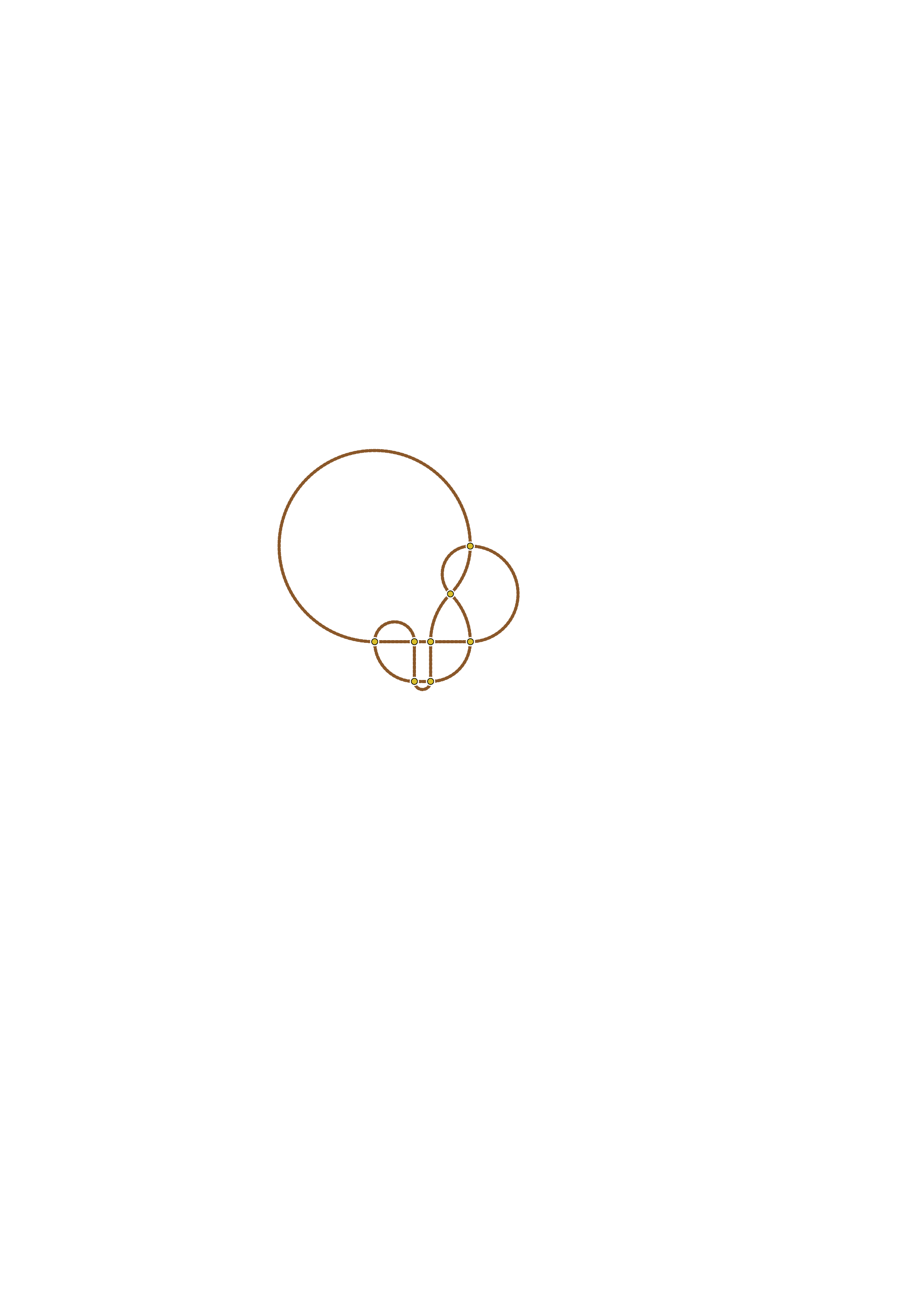} $8_{14}$  
	  & \includegraphics [scale=.40,page=2] {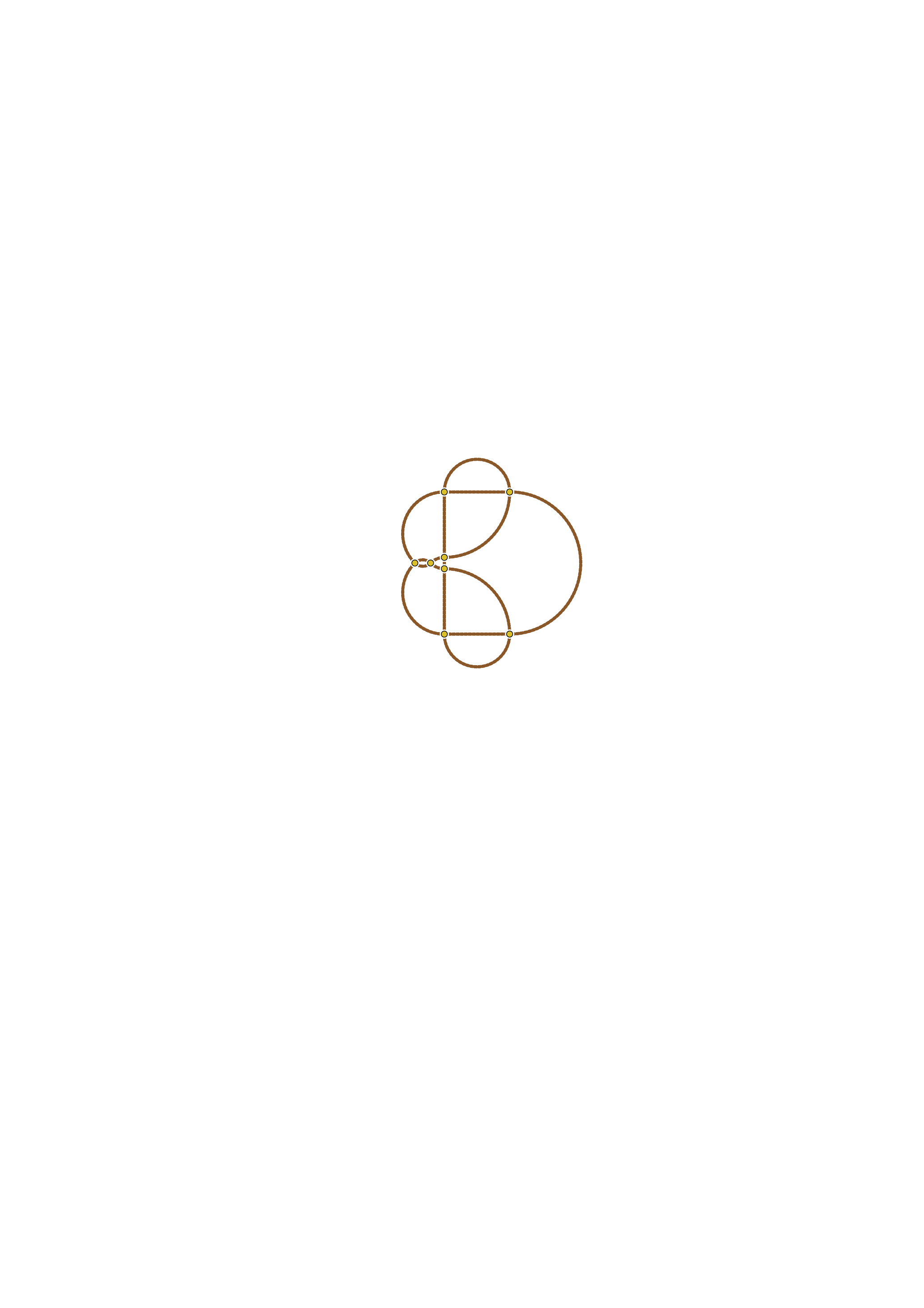} $8_{15}$  
	  & \includegraphics [scale=.40,page=2] {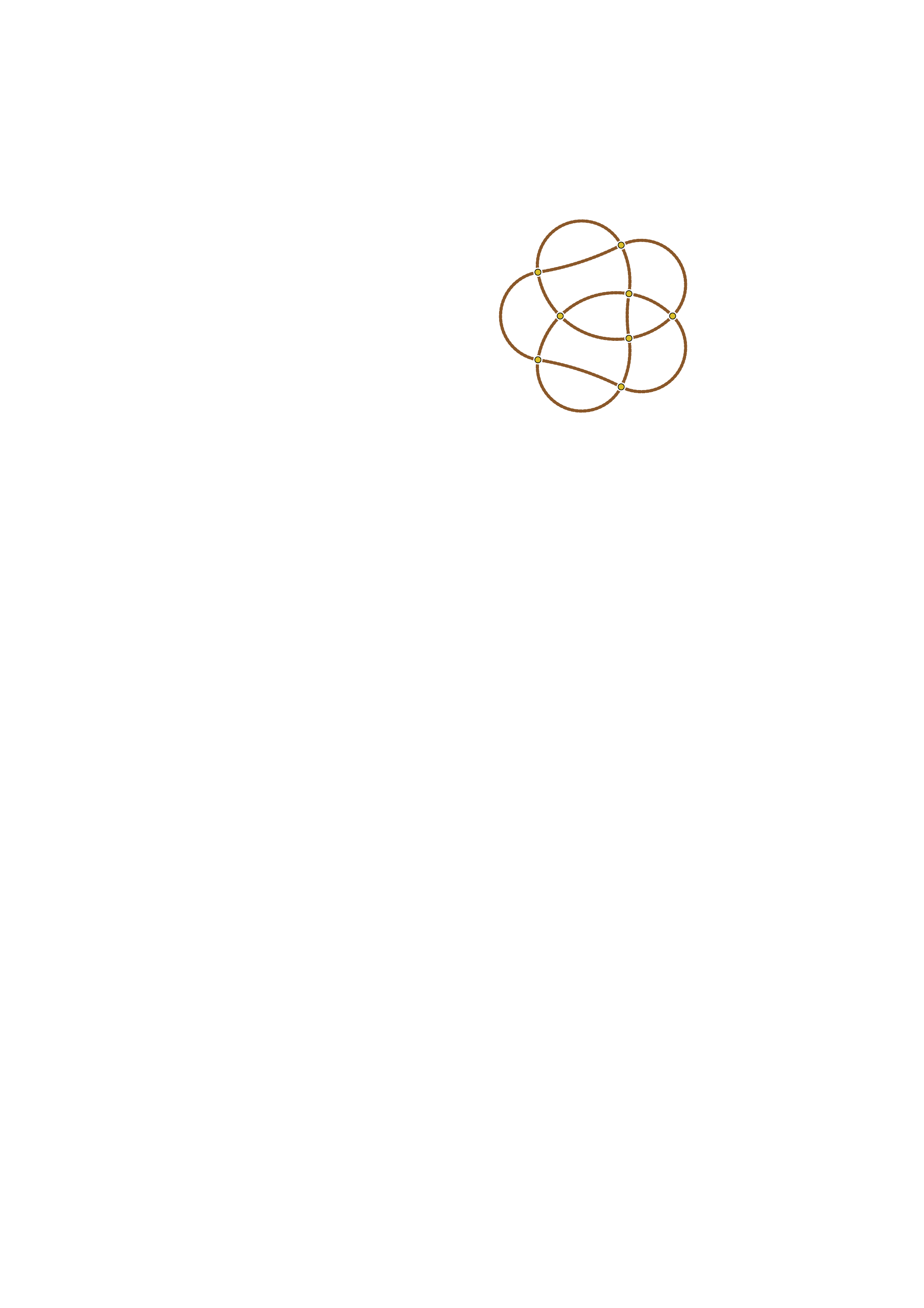} $8_{16}$  
	  & \includegraphics [scale=.40,page=2] {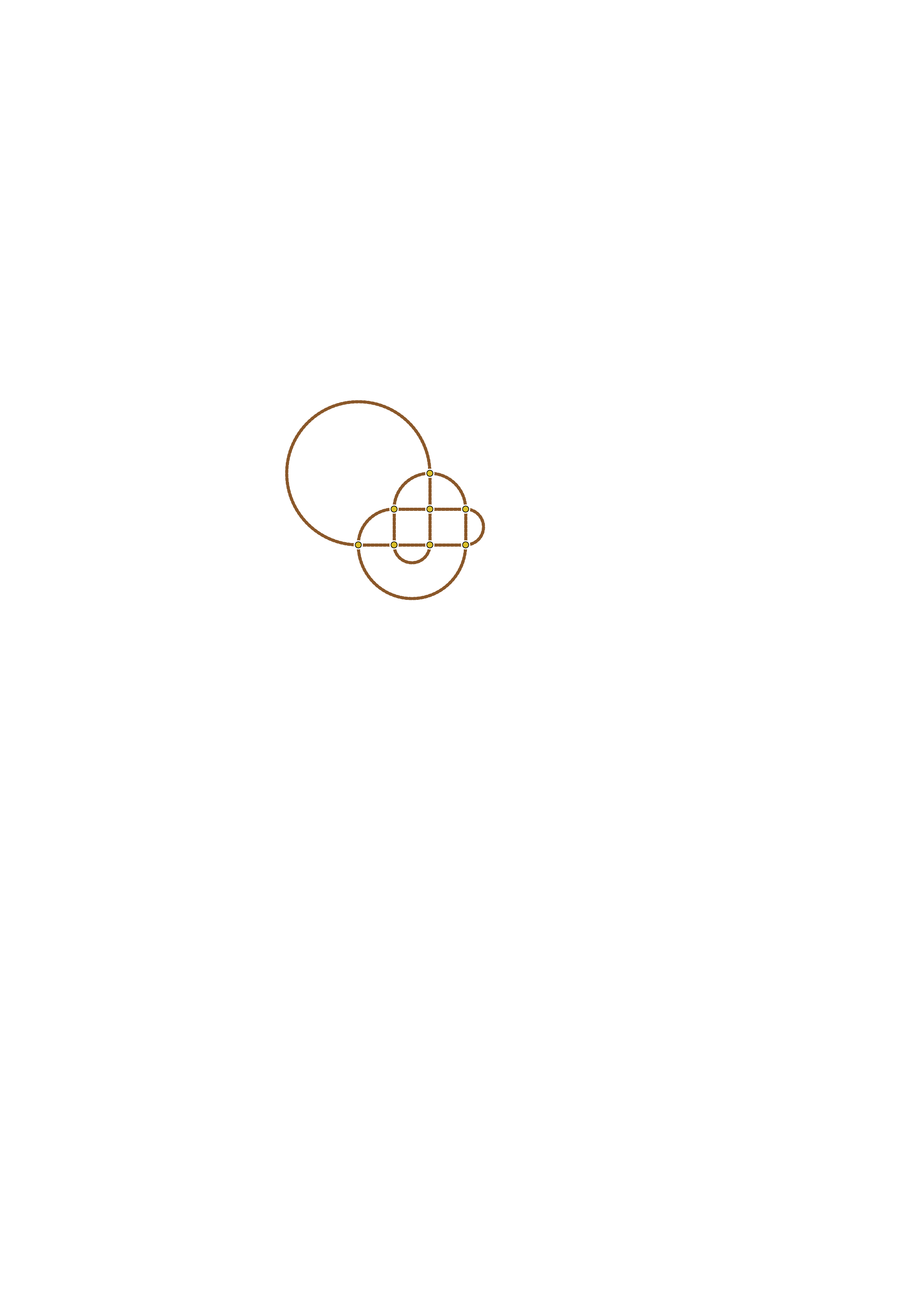} $8_{17}$  
	  & \includegraphics [scale=.40,page=2] {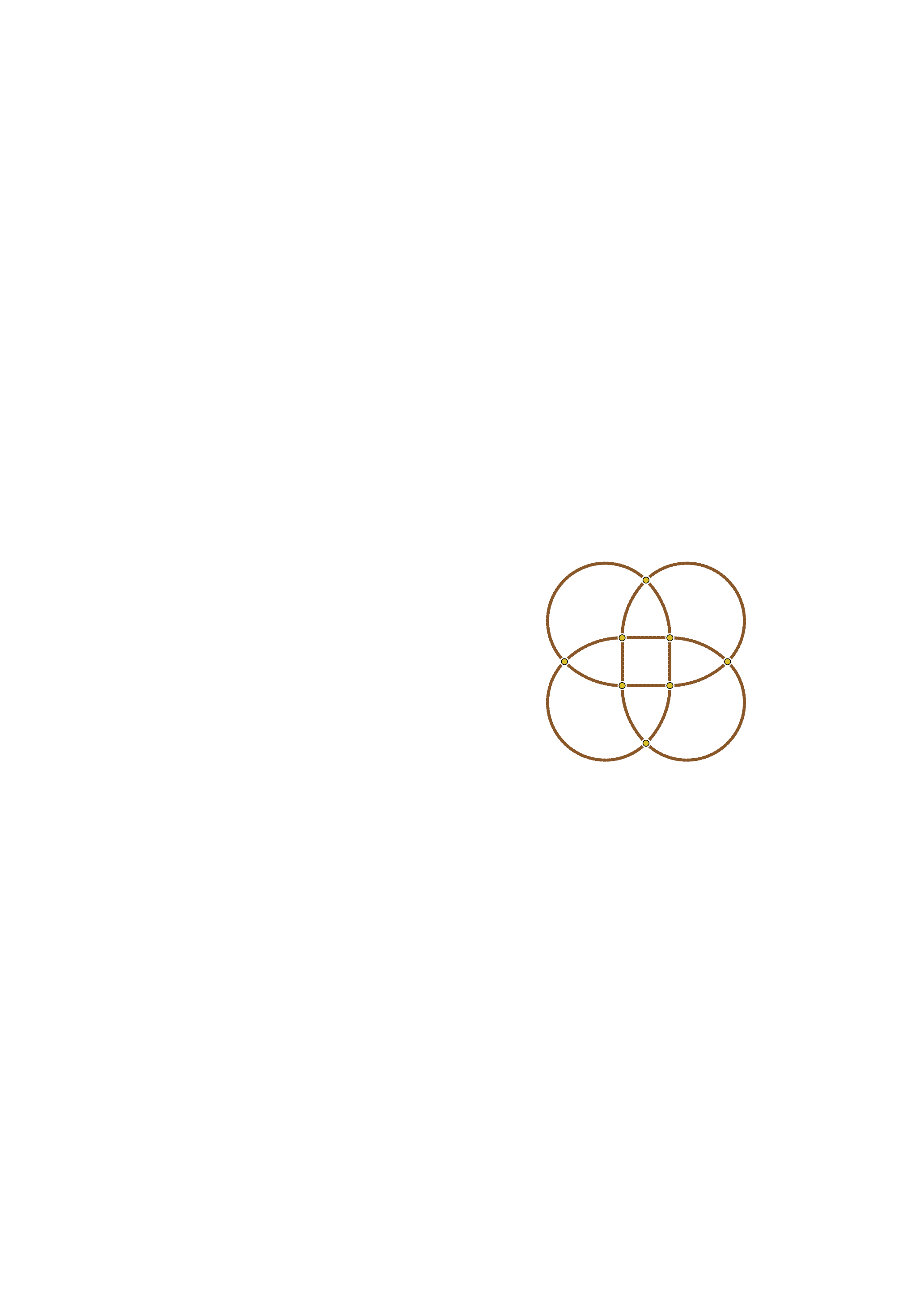} $8_{18}$   \\[1em]
	   \includegraphics [scale=.40,page=2] {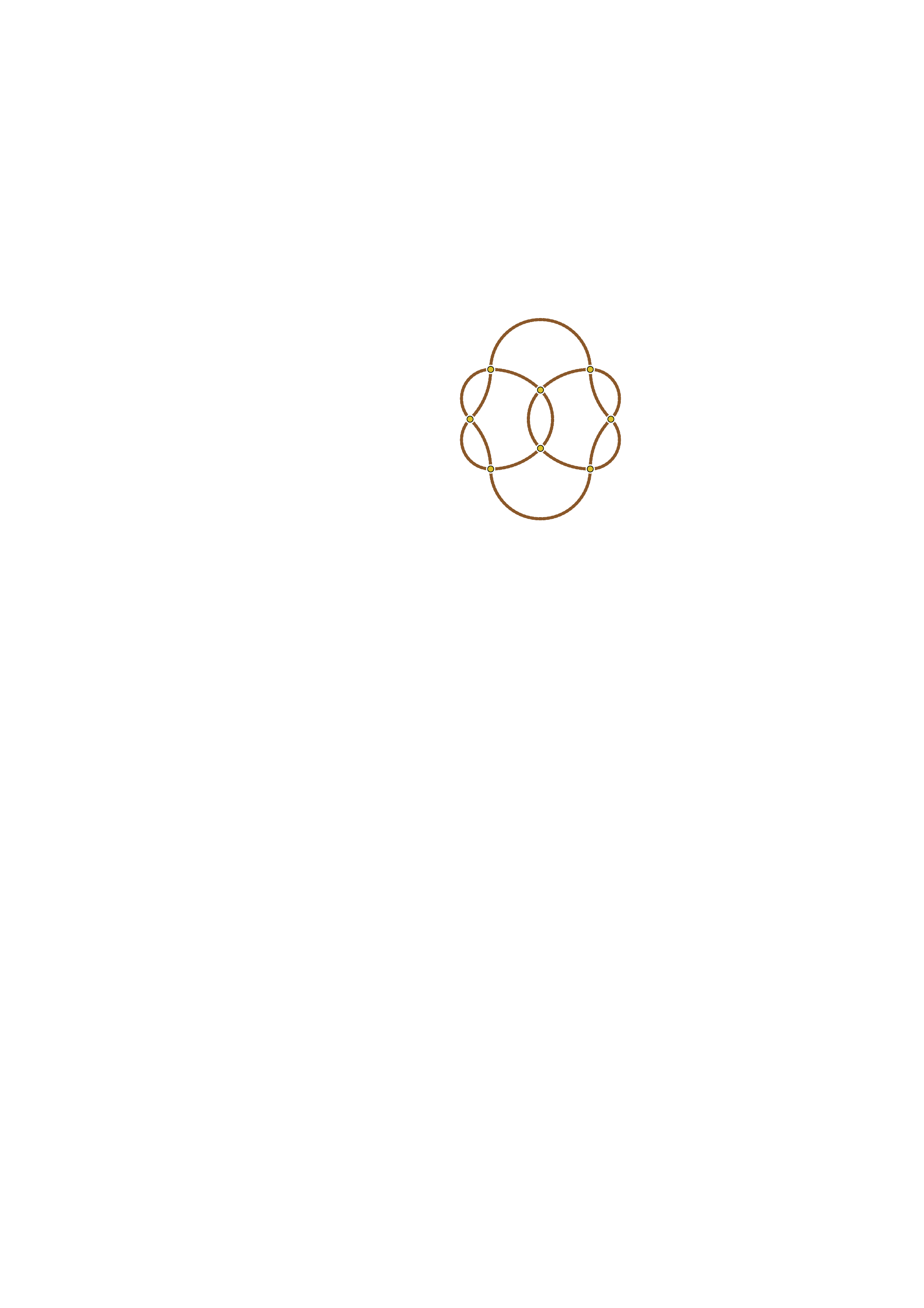} $8_{19}$  
	  & \includegraphics [scale=.40,page=2] {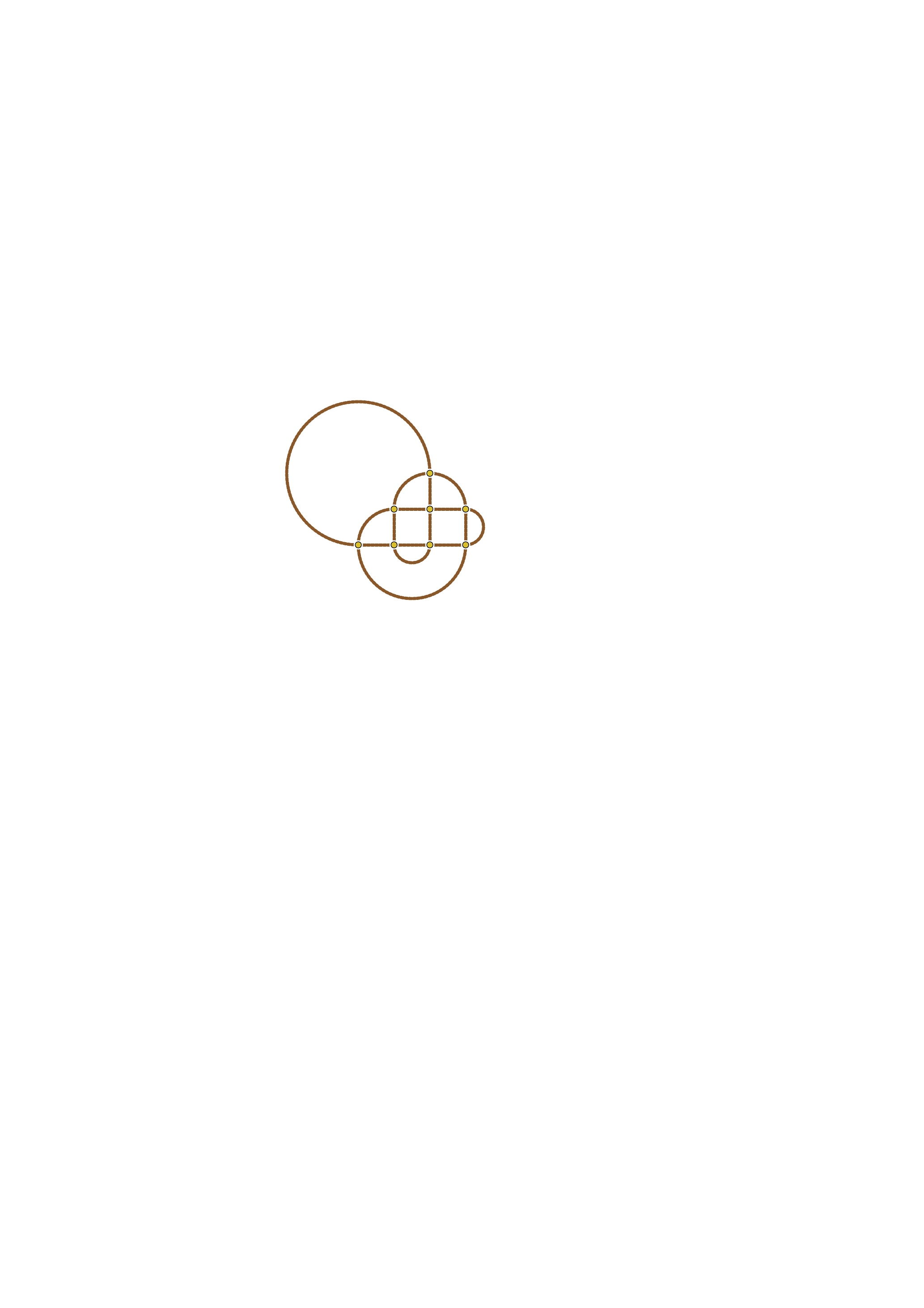} $8_{20}$  
	  & \includegraphics [scale=.40,page=2] {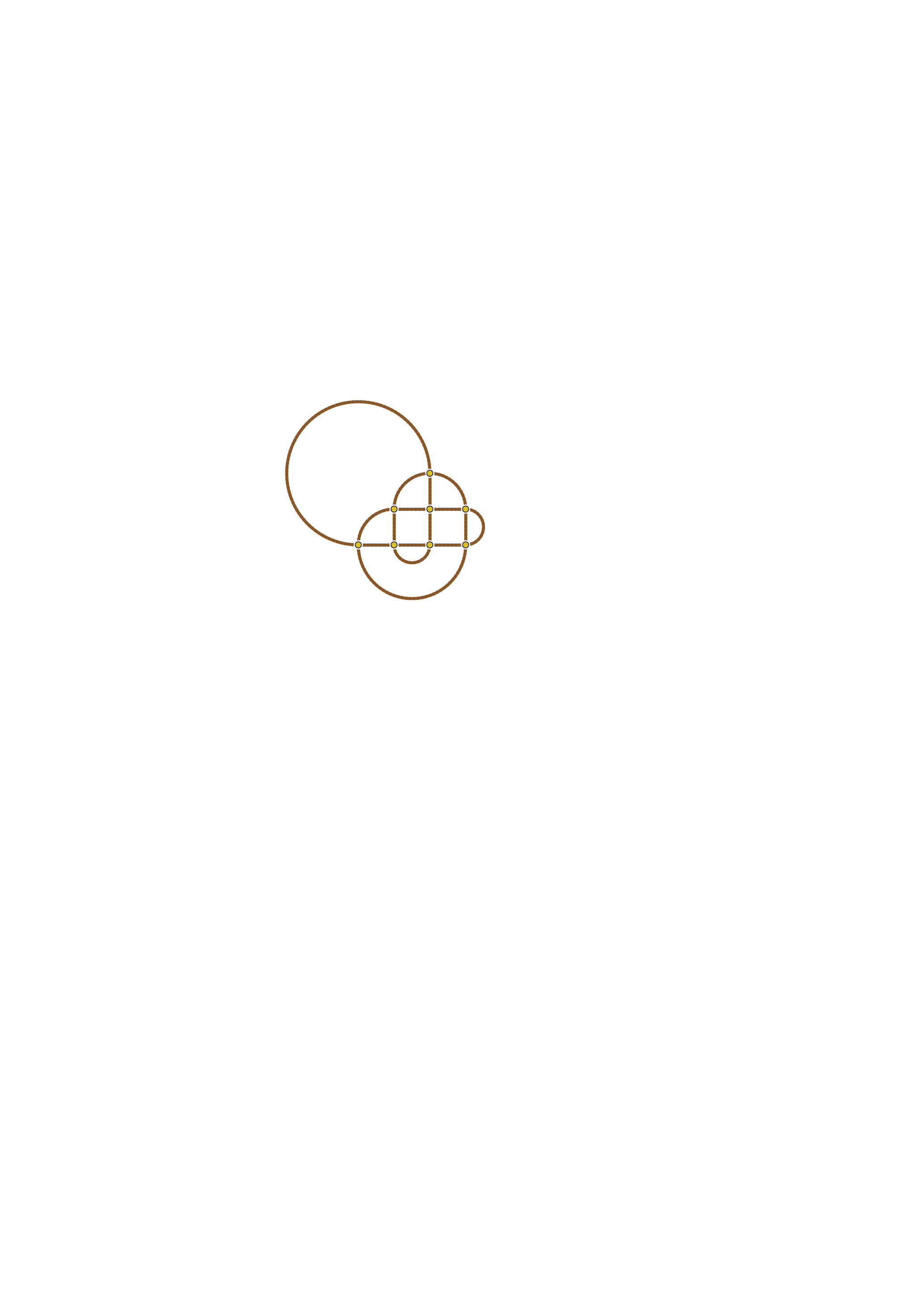} $8_{21}$  
	\end {tabular}
		\caption{Plane Lombardi drawings of all prime Lombardi knots up to~8 vertices.}
		\label{fig:primes}
	\end{figure}

\end{document}